\documentclass[journal]{IEEEtran}

\usepackage{color,array,amsthm}
\usepackage{graphicx}
\usepackage{amsmath,amssymb,amsfonts}
\usepackage{algorithm}
\usepackage{algorithmic}
\usepackage{placeins} 
\usepackage[dvipsnames]{xcolor}
\usepackage[caption=false,font=normalsize,labelfont=sf,textfont=sf]{subfig}
\usepackage[hidelinks]{hyperref}
\usepackage{amsthm}
\usepackage{textcomp}
\usepackage{stfloats}
\usepackage{url}
\usepackage{verbatim}
\usepackage{graphicx}
\usepackage{graphicx}
\usepackage{booktabs}
\usepackage{adjustbox}
\usepackage{times}
\usepackage[noadjust]{cite}
\usepackage{tikz}
\usetikzlibrary{arrows.meta, positioning, fit, calc, backgrounds, matrix}

\newtheorem{lemma}{Lemma}
\newtheorem*{remark}{Remark}
\newtheorem{definition}{Definition}
\newtheorem{corollary}{Corollary} 
\newtheorem{theorem}{Theorem}

\begin{document}

\title{Critical Infrastructure Defense Against Aerial Swarms Under Sensing Uncertainty: Online Allocation With Finite-Time Guarantees}
\author{Shriya Pandey, Devaprakash Muniraj
\thanks{This work was supported by the Ministry of Electronics and Information Technology, India under grant no. SP24251719AEIIIT008980 and
IIT Madras under grant no. RF24250403AENFIG008980 (Corresponding author: Devaprakash Muniraj)} \thanks{Shriya Pandey is a Ph.D. candidate, and Devaprakash Muniraj is an Assistant Professor, both with the Department of Aerospace Engineering,
Indian Institute of Technology Madras, Chennai 600036, India; (E-mail: ae23d008@smail.iitm.ac.in; deva@smail.iitm.ac.in).}}

% \markboth{Journal of \LaTeX\ Class Files,~Vol.~18, No.~9, September~2020}%
% {How to Use the IEEEtran \LaTeX \ Templates}

\maketitle

\begin{abstract}

This article presents a closed-loop, uncertainty-aware solution to the problem of defending a protected zone from coordinated incursions by swarms of small uncrewed aircraft systems (UAS). We model the attackers’ interaction structure as time-varying and account for imperfect sensing by the defenders. The proposed criticality-driven defender-to-attacker assignment (CRIDAA) strategy integrates three key components: 1) a probabilistic graph-based representation of the attacking swarm inferred from imperfect sensing; 2) a risk-aware attacker criticality model that couples time-to-breach urgency with uncertainty; 3) an online defender allocation mechanism that assigns and selectively reassigns defenders while explicitly limiting switching-induced instability through robust execution constraints. We establish analytical guarantees within a filtration-based first-hitting-time framework. Specifically, we prove finite-time triggering of the first capture event following initial detection. We further derive explicit mixed linear–geometric upper bounds on the expected neutralization time, separating pre-detection delay from post-detection depletion as a function of defender parallel capacity, execution fraction, and robust capture probability. We demonstrate the effectiveness of the proposed approach through Monte Carlo simulations. The proposed framework achieves $85.6\%$ neutralization efficiency under probabilistic sensing and $99.9\%$ under deterministic sensing. Systematic ablation and sensitivity studies are conducted to quantify how detection gating and coordination/assignment parameters influence reliability and time-to-first-capture.

\end{abstract}

\begin{IEEEkeywords} Multi-agent systems, counter-UAS, centralized assignment, probabilistic sensing, swarm robotics.
\end{IEEEkeywords}

\section{INTRODUCTION}

\IEEEPARstart{T}{hreats} posed by small uncrewed aircraft systems (UAS), particularly platforms weighing less than $25\,\mathrm{kg}$, are now widely recognized due to numerous reported incidents involving unauthorized incursions into protected airspace. Such incursions could be inadvertent (e.g., operator error or navigation loss) or intentional (e.g., malicious intrusion). In both cases, they pose safety, security, and privacy concerns to the public and to critical infrastructure. To address these threats, a broad spectrum of counter-UAS (C-UAS) technologies has been investigated, spanning detection, tracking, and  neutralization capabilities. While several commercial C-UAS systems are currently available and actively deployed to safeguard critical assets, the development of reliable, scalable, and cost-effective systems remains an active research area.

While a single small UAS can endanger safety and privacy, its ability to cause large-scale physical damage to hardened critical infrastructure is often limited; particularly given the growing availability of C-UAS systems. However, an emerging and substantially more challenging threat is an aerial swarm attack, in which a large number of inexpensive UAS cooperate to intrude and disrupt (and potentially damage) a protected facility. Even when the high-level objective of a swarm attack resembles that of a single-UAS intrusion, swarms offer attackers key advantages such as redundancy, saturation of defensive resources, and resilience to partial losses, making defense fundamentally a multi-agent, time-critical sensing-and-allocation problem. This has further accelerated research on C-UAS capabilities spanning detection, tracking, and mitigation, and has motivated broader deployment of defensive systems around critical infrastructure \cite{c1,c3,c4,c5}. Nevertheless, the threat landscape is shifting from isolated intrusions to coordinated adversarial behaviors, requiring defenders to operate under stringent time, sensing, and resource constraints, often with incomplete or uncertain situational awareness \cite{c3,c4,c5}.

% Threats posed by small uncrewed aircraft systems (UAS), particularly platforms below $25\,\mathrm{kg}$, have become a persistent concern due to unauthorized incursions into protected airspace and the attendant safety and security risks \cite{c1,c3,c4,c5}. 

% This has accelerated research on counter-UAS (C-UAS) capabilities spanning detection, tracking, and mitigation, and has motivated the deployment of defensive systems around critical infrastructure \cite{c1,c3,c4,c5}. 
% Nevertheless, the threat landscape is evolving from isolated intrusions to coordinated adversarial behaviors, where the defender must operate under strict time, sensing, and resource constraints  \cite{c3,c4,c5}.

A particularly challenging problem concerns an aerial swarm whose objective is to breach a protected zone (PZ) \cite{c1,c3,c4,c5}. The defense against such attacks naturally leads to a multi-agent decision and resource-allocation problem characterized by (i) a time-varying interaction structure among the attackers, (ii) partial and uncertain information about the attackers available to the defenders, and (iii) stringent computational requirements for determining the defenders' actions \cite{c3,c4,c5, c7,c15}. A common assumption in the swarm-defense literature is that defenders possess perfect and reliable sensing of the attacker swarm. Consequently, interactions among attackers are typically modeled using a deterministic graph \cite{c1,c4,c5,c6}. However, operationally, reliable full-state observability of all attackers is rarely available: detections are intermittent, localization is noisy, and communication links in aerial networks can be disrupted, resulting in a time-varying and uncertain interaction structure \cite{c4,c5}. This motivates the need to model the attacker swarm as a \emph{dynamic graph with uncertainty}, where node attributes and edge existence/weights must be inferred from imperfect sensing, rather than assumed to be perfectly known. 

A second gap in the aerial swarm defense literature concerns the notion of \emph{critical nodes} within an attacker swarm. Identifying the agents that are essential to the swarm’s effectiveness and targeting these critical nodes is a key strategy for neutralizing the swarm. While network science offers a rich set of tools such as centrality and related measures for identifying influential nodes \cite{c8,c9,c13}, these metrics are not directly applicable to the problem of defending a PZ. In particular, they typically overlook defender kinematics, time-to-breach urgency, and sensing uncertainty, all of which are decisive in closed-loop engagement outcomes. This motivates the need for developing a criticality metric that (i) remains well-defined under partial observability, (ii) explicitly accounts for operational urgency and feasibility, and (iii) integrates seamlessly with online defender-assignment~strategies.

A third challenge is that realistic swarm defense is inherently \emph{closed-loop}: the defender team must repeatedly update its engagement plan as detections appear/disappear, localization uncertainty evolves, and the attacker interaction structure changes over time. 
In this setting, defender-to-attacker allocations must be revised online; however, frequent reassignment can induce switching losses (wasted transit and missed engagement opportunities) and can adversely interact with low-level safety constraints, whereas overly conservative reassignment may delay response to imminent breaches.

Existing works on aerial swarm defense address the problem through a range of formalisms, including interception and herding-based strategies for multi-attacker containment, learning-based and game-theoretic formulations for multi-agent decision-making for attack-defense confrontation \cite{c1,c4,c5,c7,c15}. Collectively, these works demonstrate that coordinated defender policies can effectively counter swarms, highlighting key mechanisms such as leveraging defender speed advantage, distributing defensive effort across threats, and shaping attacker trajectories through containment strategies \cite{c4,c5,c15}.

Most existing formulations rely on idealized models of attacker interaction and assume reliable sensing and communication \cite{c1,c4,c5,c7}. In addition, they abstract the decision layer in ways that do not explicitly regulate reassignment transients under intermittent detections. Consequently, the coupling between uncertainty-driven situational awareness updates and allocation switching is often not treated as a primary design objective, despite its direct impact on realized engagement throughput and breach incidence.

This work addresses the aforementioned challenges by developing a comprehensive defense framework for protecting critical infrastructure against aerial swarm attacks using a team of defender UAS. We move beyond deterministic graph models and perfect-information assumptions by explicitly modeling sensing uncertainty through probabilistic node features and uncertainty-aware edge formation. This formulation enables online reasoning over probabilistic, time-varying graphs induced by imperfect sensing by defenders. We further introduce hybrid attacker criticality metrics that integrate network-structural attributes with operational risk factors such as time-to-breach and probabilistic risk-to-PZ, thereby bridging the gap between purely structural importance measures and mission-constrained defense decision-making.

To facilitate near-real-time deployment, we incorporate Markov-chain-based risk prediction within a centralized defender-to-attacker assignment pipeline that enables reassignment and controlled switching. This design allows the defense strategy to adapt as detections appear or disappear and as the attacker interaction topology evolves. Beyond algorithmic development, we establish analytical guarantees that characterize sufficient conditions for persistent defensive progress under uncertainty. These guarantees are complemented by systematic simulation studies, including ablation and sensitivity analyses, to quantify robustness across diverse sensing conditions and agent kinematics. Fig.~\ref{fig:graph_abst} illustrates the proposed \emph{online, windowed} criticality-driven defender-to-attacker assignment (CRIDAA) framework.

\noindent The main contributions of this work are given below:
\begin{enumerate}
\setlength{\itemsep}{2pt}
\item We propose an uncertainty-aware, probabilistic dynamic-graph defense framework for countering aerial swarm attacks, relaxing the deterministic interaction-graph and perfect-sensing assumptions typically adopted in existing swarm defense works.

\item We develop hybrid criticality metrics for attacker prioritization that couple network information with operational urgency (e.g., time-to-breach and risk-to-PZ), addressing the mismatch between purely structural centrality measures and mission-constrained PZ defense under partial observability. 

\item We design a centralized defender allocation and replanning mechanism with explicit reassignment and switching, enabling defender-to-attacker assignment under time-varying risk estimates. 

\item We establish analytical guarantees for the proposed defense architecture and validate them through systematic Monte Carlo evaluation, ablation analysis, and sensitivity analysis, thereby providing both analytical and empirical evidence of robustness under uncertainty and resource constraints. 
\end{enumerate}

\begin{figure*}
    \centering
    \resizebox{\textwidth}{!}{%
    \includegraphics{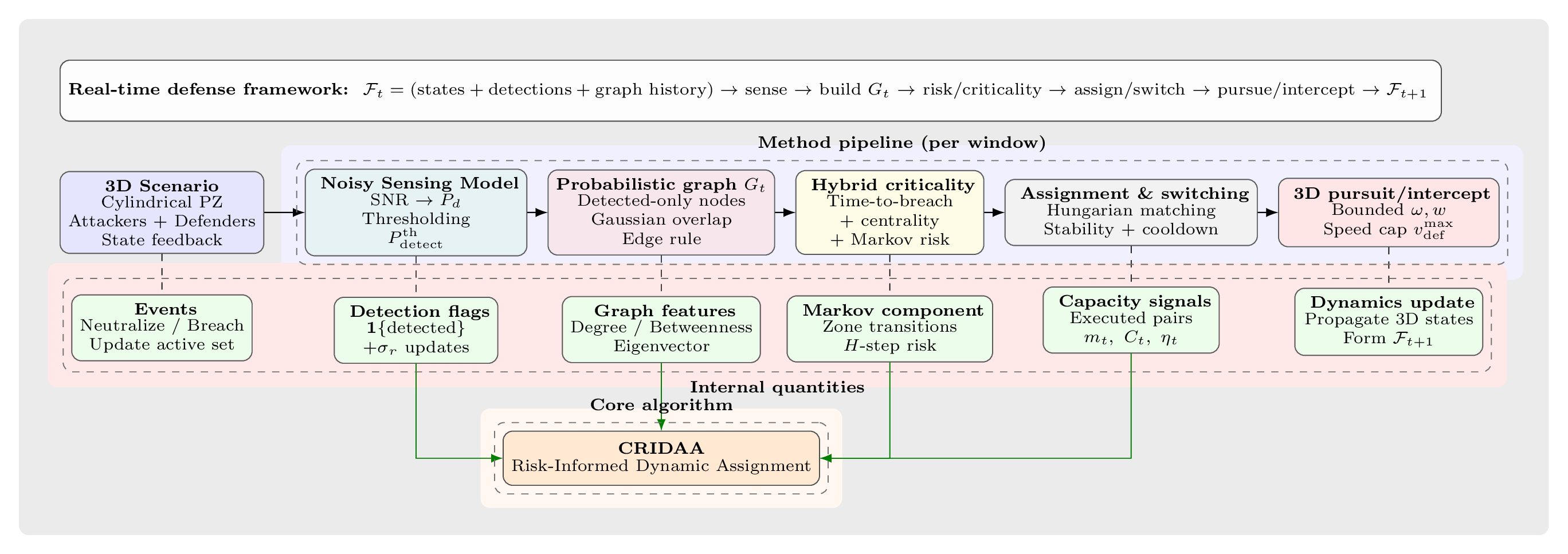}
    }%
    \caption{Overview of the proposed real-time CRIDAA defense framework.}
    \label{fig:graph_abst}
\end{figure*}

\section{Problem Formulation}
\label{sec:problem_formulation}

We consider a team of defenders tasked with protecting a critical infrastructure against an incoming swarm of attacker UAS, and the critical infrastructure is assumed to be contained within a cylindrical protected zone of radius $r_{hard}$ and height $h$. To keep the subsequent methodology, theoretical analysis, and simulation studies consistent, we explicitly distinguish \emph{continuous time} $t\in\mathbb{R}_{\ge 0}$ processes (e.g. vehicle dynamics) from \emph{decision windows} indexed by $k\in\mathbb{Z}_{\ge 0}$ (e.g. attacker detection, assignment).

\subsection{System and Threat Model}
\label{subsec:system_model}

\paragraph*{Protected Region}
Assuming an inertial reference frame defined in the plane $z=0$ with its origin at the center of the cylindrical area and all the UAS operating within a 3D environment $\mathcal{W} \subset \mathbb{R}^{3}$,  the 3D protected zone ($\mathcal{P} \subset \mathcal{W}$) is defined as
$$\mathcal{P}=\{\mathbf{r}\in\mathbb{R}^{3}:\sqrt{x^{2}+y^{2}}\leq r_{hard},0\leq z\leq h\}$$
where $\mathbf{r} = [x \ y \ z]^T.$ 
We define the \emph{hard} protected region (breach boundary) and an outer \emph{soft} boundary as
\begin{align}
\mathcal{B}_{\text{hard}}
&:= \bigl\{(x,y,z)\in\mathbb{R}^3{:}\ x^2+y^2 \le r_{\text{hard}}^2, 0\,{\le}\, z\,{\le}\, h \bigr\} \label{eq:B_hard}\\
\mathcal{B}_{\text{soft}}
&:= \bigl\{(x,y,z)\in\mathbb{R}^3{:}\ x^2+y^2 \le r_{\text{soft}}^2, 0\,{\le}\, z\,{\le}\, h \bigr\} \label{eq:B_soft}
\end{align}
where $0<r_{\text{hard}}<r_{\text{soft}}$.

\paragraph*{Agents and States}
Let $\mathcal{A}(t)$ denote the index set of \emph{active} attackers at time $t$, and let $\mathcal{D}:=\{1,\dots,|D|\}$ be the index set denoting the defenders. 
The state vectors of attacker $i\in\mathcal{A}(t)$ and defender $j\in\mathcal{D}$ at time $t$ are given by
\begin{align*}
x_{A,i}(t) &:= [p_{A,i}(t) \ \psi_{A,i}(t)]^T \ \text{and} \\ x_{D,j}(t) &:=  [p_{D,j}(t) \ \psi_{D,j}(t)]^T, 
\end{align*}
where $p_{A,i}(t)\in\mathbb{R}^3$ and $p_{D,j}(t)\in\mathbb{R}^3$ denote the positions of attacker $i$ and defender $j$, respectively. $\psi_{A,i}(t)\in\mathbb{S}^1$ and $\psi_{D,j}(t)\in\mathbb{S}^1$ denote their corresponding heading angles. 

\paragraph*{Motion Model}
We model both the attacker and defender UAS using a Dubins-like model with bounded turn-rate and vertical velocity inputs. The dynamics of attacker $i\in\mathcal{A}(t)$ and defender  $j\in\mathcal{D}$ are governed by following equations:
\begin{align}
\dot p_{A,i}(t) &=
[
v_{A,i}\cos \psi_{A,i}(t)\
\;
v_{A,i}\sin \psi_{A,i}(t)\
\;
w_{A,i}(t)
]^T, \notag\\
\dot\psi_{A,i}(t) & =\omega_{A,i}(t),
\label{eq:attacker_dynamics}
\end{align}
\begin{align}
\dot p_{D,j}(t) &=
[
v_{D,j}\cos \psi_{D,j}(t)\
\;
v_{D,j}\sin \psi_{D,j}(t)\
\;
w_{D,j}(t)
]^T,\notag \\
\dot\psi_{D,j}(t) &=\omega_{D,j}(t),
\label{eq:defender_dynamics}
\end{align}
where the control inputs satisfy box constraints
\begin{align}
      & |\omega_{A,i}(t)|\le \bar\omega_A,\ \ |w_{A,i}(t)|\le \bar w_A, ,\\
      & |\omega_{D,j}(t)|\le \bar\omega_D,\ \ |w_{D,j}(t)|\le \bar w_D .
\label{eq:input_bounds}
\end{align}
For each attacker $i\in\mathcal{A}(t)$ and defender $j\in\mathcal{D}$, the planar speeds $v_{A,i}$ and $v_{D,j}$ in \eqref{eq:attacker_dynamics}--\eqref{eq:defender_dynamics} are assumed \emph{constant in time} for a given agent (i.e., constant-speed Dubins-like kinematics in the horizontal plane). 
However, the speeds are allowed to take values in the following compact sets:
\begin{equation*}
v_{A,i} \in [\underline{v}_A,\overline{v}_A] 
\ \text{and} \ 
v_{D,j} \in [\underline{v}_D,\overline{v}_D], \ \forall \ i\in\mathcal{A}(t) \ \text{and} \ j\in\mathcal{D}.
\end{equation*}

In addition, we assume the following: (i) defenders move faster than attackers, i.e., $v_{D,j} > v_{A,i}$, to enable feasible interception; (ii) a ground-based UAS detection system is present within or on the boundary of the PZ; (iii) a central node is present within the PZ that interfaces with the detection system and the defenders to inform the defenders of the incoming threats and attacker states, thereby facilitating real-time situational awareness for defenders; (iv) the communication channels between the central node and the defenders are fully encrypted and secure, ensuring robust and reliable data transmission. 

\paragraph*{Breach and Neutralization}
An attacker $i$ is said to \emph{breach} the PZ when $p_{A,i}(t)\in\mathcal{B}_{\text{hard}}$ is first satisfied. A defender--attacker pair $(j,i)$ achieves \emph{neutralization} when the relative distance enters a capture ball defined by 
\begin{equation}
\|p_{D,j}(t)-p_{A,i}(t)\|\le r_{\text{cap}},
\label{eq:capture_condition}
\end{equation}
where $r_{\text{cap}}>0$ is the capture radius.
% \paragraph*{Control Objective}
The defender team's objective is to prevent any breach while neutralizing all attackers as quickly as possible, subject to dynamics, input limits, and safety constraints (e.g., collision avoidance). The attacker control laws are treated as part of the threat/environment model used to evaluate the defenders' policies. The focus of this paper is the defender-side closed-loop architecture, which comprises detection, probabilistic graph construction, risk assessment, defender–attacker assignment, and pursuit, and is detailed in the subsequent sections.

\subsection{Detection and Sensing Model}
\label{subsec:detection_model}

We assume a sensing system that provides observations (with uncertainty) of attacker positions. This section defines the \emph{detected set} used by the graph model. The sensed position of attacker $i$, denoted by $\hat p_{A,i}(t)$, is given by $\hat p_{A,i}(t) = p_{A,i}(t) + \varepsilon_i(t)$, $\varepsilon_i(t)\sim\mathcal{N}(0,\Sigma_i(t))$, 
where $p_{A,i}(t)$ is the true position of attacker $i$ and $\Sigma_i(t)\in\mathbb{S}^3_{\succeq 0}$ captures the sensing uncertainty. We develop our results under both deterministic as well as probabilistic sensing assumptions.

\paragraph*{Deterministic Detection}
 Deterministic sensing is characterized by idealized conditions with no uncertainty in sensor measurements. Following the binary detection model established in \cite{c17}, we define the sensing capability for a sensor positioned at $p_s\in\mathbb{R}^3$ based on a geometric range constraint, where a target is successfully detected if $\|\hat p_{A,i}(t)-p_s\| \le r_{\text{det}}$, where $r_{\text{det}}>0$ is the detection range.

\paragraph*{Probabilistic Detection}
Inspired in part by \cite{c17}, where detection probability is modeled as an exponentially decaying function of sensing distance, we adopt a similar exponential attenuation structure. To better capture real-world sensing uncertainties, we explicitly incorporate the signal-to-noise ratio (SNR) into the model, allowing the detection probability to reflect range-dependent degradation. The resulting detection probability is given by
\begin{equation}
P_d(i,t) = \exp\!\Bigl(-\frac{\|\hat p_{A,i}(t)-p_s\|^2}{2\sigma_r^2(i,t)}\Bigr)\,\bigl(1-P_{\mathrm{FN}}(i,t)\bigr),
\label{eq:pd_general}
\end{equation}
where $\sigma_r(i,t)>0$ is an uncertainty scale that can increase with range and $P_{\mathrm{FN}}(i,t)\in[0,1]$ is the false-negative probability. The false negative probability $P_{FN}$ represents the probability that $A_i$ is within the detection range but is not detected by the ground-based sensor system due to signal fluctuations, noise, or a low SNR.

The SNR-threshold model for $P_{\mathrm{FN}}$ is given by
\begin{equation}
P_{\mathrm{FN}}(i,t) = 1 - Q\!\left(\frac{\Gamma-\mathrm{SNR}_i(t)}{\sigma_n}\right),
\label{eq:pfn_q}
\end{equation}
where $\Gamma$ is a detection SNR threshold, $\sigma_n>0$ is a noise scaling constant, and $Q(\cdot)$ is the standard Gaussian tail function, which characterizes the probability that the received signal falls below the prescribed threshold and is therefore missed by the detection system. Low SNR values degrade detection performance, resulting in fewer attacker UAS being identified and consequently sparser graph connectivity, whereas higher SNR improves detection reliability.

\paragraph*{Detected Set}
Given a fixed decision window length $\Delta>0$, we define the decision epochs as the time intervals between the time instants $t_k=k\Delta$. At each epoch $k$, we declare that attacker $i$ is detected if $P_d(i,t_k)\ge P_{\text{det}}^{\text{th}}$, and define the detected active set as
\begin{equation}
\widehat{\mathcal{A}}_k
:= \bigl\{ i\in\mathcal{A}(t_k)\ :\ P_d(i,t_k)\ge P_{\text{det}}^{\text{th}} \bigr\}.
\label{eq:detected_set}
\end{equation}
This set is the input to the probabilistic interaction graph module and the defender-to-attacker assignment algorithm as shown in Fig.~\ref{fig:graph_abst}.

% ============================================================
\section{Deterministic Graph-Theoretic Modeling}
\label{sec:det_graph_model}
% ============================================================
This section provides an idealized (deterministic) attacker interaction model that is used for (i) defining threat-evolution features, most notably time-to-breach, and (ii) establishing a deterministic limit case for the probabilistic graph model of Section~\ref{sec:prob_graph_model}.
The formulation provided in this section will serve as an ideal-sensing baseline.

\subsection{Communication Network Model}
\label{subsec:comm_network_model}

Let $\mathcal{A}(t)$ denote the set of active attackers at time~$t$.
In this section, we model attacker-to-attacker communication by an undirected, time-varying graph
\begin{equation}
  \mathcal{G}_A(t) := \bigl(\mathcal{V}_A(t),\mathcal{E}_A(t)\bigr),
  \qquad
  \mathcal{V}_A(t) := \mathcal{A}(t).
  \label{eq:det_graph_def}
\end{equation}
Given a communication radius $R_{\mathrm{com}}>0$, we define the edge set as follows:
\begin{equation}
  \!\!\!\!(i,\ell)\,{\in}\,\mathcal{E}_A(t)
 \, {\Longleftrightarrow} \,
  \bigl\|p_{A,i}(t){-}p_{A,\ell}(t)\bigr\| \,{\le}\, R_{\mathrm{com}},
  \, \forall i\neq \ell. \!
  \label{eq:det_edges}
\end{equation}
The adjacency matrix corresponding to $\mathcal{G}_A(t)$ is denoted by $A_A(t)\in\{0,1\}^{|\mathcal{V}_A(t)|\times|\mathcal{V}_A(t)|}$, and the $(i,\ell)$ element of $A_A(t)$ is given by
\begin{equation}
  (A_A(t))_{i\ell} =
  \begin{cases}
    1, & (i,\ell)\in\mathcal{E}_A(t) \ \text{and} \  i \neq \ell\\
    0, & \text{otherwise}
  \end{cases}
  .
  \label{eq:adjacency_det}
\end{equation}
The degree matrix and the (combinatorial) Laplacian matrix are given by
\begin{align*}
D_A(t) & =\mathrm{diag}\bigl(\sum_{\ell}(A_A(t))_{i\ell}\bigr), \\
  L_A(t) & = D_A(t) - A_A(t).
  \label{eq:laplacian_det}
\end{align*}
The algebraic connectivity $\lambda_2\!\bigl(L_A(t)\bigr)$ is used as a metric to quantify the connectivity of the swarm. Under the assumption of perfect detection, the probabilistic graph (introduced in Section~\ref{sec:prob_graph_model}) reduces to the aforementioned deterministic graph with unit edge weights. Consequently, the deterministic model becomes equivalent to the proximity-graph baseline employed in the simulations.

\subsection{Time-To-Breach and Nominal Attacker Guidance}
\label{subsec:attacker_time_optimal}

The objective of an active attacker $i\in\mathcal{A}(t)$ is to reach the hard boundary $\mathcal{B}_{\mathrm{hard}}$ in minimum time. The state and control input vectors of attacker $i$ at time $t$ are given by
\begin{align}
  x_{A,i}(t) & =
  [
    p_{A,i}(t)\ \; \psi_{A,i}(t)
 ]^T \\
  % \in\mathbb{R}^3\times\mathbb{S}^1,
  u_{A,i}(t)  & =
  [
    \omega_{A,i}(t)\ \; w_{A,i}(t)
  ]^T
\end{align}
where $\omega_{A,i}(t) \in [-\bar\omega_A,\bar\omega_A]$ and $w_{A,i}(t) \in [-\bar w_A,\bar w_A]$. The equations governing the attacker dynamics can be written in control-affine form as
\begin{equation}
  \dot x_{A,i}(t) = f_A\!\bigl(x_{A,i}(t)\bigr) + g_A\!\bigl(x_{A,i}(t)\bigr)\,u_{A,i}(t),
  \label{eq:att_control_affine}
\end{equation}
where the drift term $f_A(\cdot)$ models the constant-speed planar motion parameterized by $\psi_{A,i}$ and $g_A(\cdot)$ is the control input matrix.

% \paragraph*{Minimum-Time Formulation}
The motion of attacker $i$ at a given time $t$ is governed by the solution to the following minimum-time optimal control problem:
\begin{subequations}
\label{eq:att_min_time}
\begin{align}
  & \min_{u_{A,i}(\cdot),\,T_i\ge 0}\quad
   \int_{t}^{t+T_i} 1\,d\tau 
  \label{eq:att_min_time_cost}\\
  & \text{s.t.}\ 
  \dot x_{A,i}(\tau) \,{=}\, f_A(x_{A,i}(\tau)) + g_A(x_{A,i}(\tau))u_{A,i}(\tau),
  \label{eq:att_min_time_dyn}\\
  & u_{A,i}(\tau)\in \mathcal{U}_A, \quad p_{A,i}(t+T_i)\in \mathcal{B}_{\mathrm{hard}}, 
  \label{eq:att_min_time_terminal}
\end{align}
\end{subequations}
where $ \mathcal{U}_A := [-\bar\omega_A,\bar\omega_A]\times[-\bar w_A,\bar w_A]$. The optimal value $T_i^\star(t)$ (or a computable approximation thereof) defines the time-to-breach used in the multi-layer criticality model explained in the sequel.
Moreover, the short-horizon predicted trajectories induced by the nominal solution provide the kinematic forecasts required for the evaluation of defender intercept costs in the defender-assignment~layer.

% \paragraph*{Nominal Guidance}
Let $u_{A,i}^{\mathrm{nom}}(t)$ denote the nominal reference trajectory obtained by solving \eqref{eq:att_min_time}. Safety and coordination requirements--specifically collision avoidance and connectivity maintenance--are not incorporated as penalty terms within the time-optimal formulation. Instead, they are enforced through a dedicated constraint filter described next. This design yields a clear algorithmic structure: first computing the nominal reference trajectory, and subsequently enforcing safety and coordination constraints by projection through a quadratic program (QP).

\subsection{Non-Optimal Attacker Dynamics}
\label{subsec:nonoptimal_attackers}

To assess robustness beyond the nominal time-optimal threat model, we also consider suboptimal control laws for each active attacker $i\in\mathcal{A}(t)$. The inputs $(\omega_{A,i}(t),w_{A,i}(t))$ follow one of four fixed heuristics: biased random, sinusoidal weaving, flanking arc, or pure random motion.

\noindent For biased random attacker dynamics,
\begin{align*}
\omega_{A,i}(t)=\mathrm{sat}_{\bar\omega_A}\!\big(k_\psi(\psi_i^\star-\psi_{A,i}(t))+\eta_i(t)\big),\\
w_{A,i}(t)=\mathrm{sat}_{\bar w_A}\!\big(k_z(z_c-z_{A,i}(t))+\xi_i(t)\big),
\end{align*}
where $\psi_i^\star=\operatorname{atan2}(-y_{A,i}(t),-x_{A,i}(t))$ toward $\mathcal{P}$, $z_c=h/2$, and $\eta_i(t),\xi_i(t)$ are zero-mean Gaussian noise. Sinusoidal weaving replaces $\psi_i^\star$ by
\begin{equation*}
\psi_i^\star(t)=\operatorname{atan2}(-y_{A,i}(t),-x_{A,i}(t))+A\sin(\nu t+\phi_i),
\end{equation*}
where $A,\nu>0$ are fixed amplitude and frequency, and $\phi_i\in[0,2\pi)$ is a per-attacker phase offset. Flanking arc blends tangential and inward headings:
\begin{equation*}
\psi_i^\star(t)=(1-\lambda_i(t))\big(\psi_i^\star(t)\pm\pi/2\big)+\lambda_i(t)\psi_i^\star(t),
\end{equation*}
where $\lambda_i(t)\in[0,1]$ increases as $p_{A,i}(t)$ approaches $\mathcal{B}_{\text{hard}}$. Pure random samples uniformly from the bounds in \eqref{eq:input_bounds}:
\begin{equation*}
\omega_{A,i}(t)\sim\mathcal{U}[-\bar\omega_A,\bar\omega_A],\quad
w_{A,i}(t)\sim\mathcal{U}[-\bar w_A,\bar w_A].
\end{equation*}

\subsection{Constraint Filter via CBF--QP}
\label{subsec:cbf_filter}

Inspired by \cite{c16}, we enforce safety-relevant invariance conditions through a control-barrier-function quadratic program (CBF--QP) that minimally perturbs the nominal reference trajectory described in the preceding section. Specifically, at each time $t$, the input is given by
\begin{equation}
  u_{A,i}(t) = u_{A,i}^{\mathrm{nom}}(t) + \Delta u_{A,i}(t),
  \label{eq:filter_update}
\end{equation}
where $\Delta u_{A,i}(t)$ is computed by solving a convex QP subject to the barrier constraints described below.

\subsubsection*{Collision-Avoidance Barrier Constraints}
\label{subsubsec:collision_cbf}

Given two distinct attackers $i$ and $\ell$, we define the following barrier function:
\begin{equation}
  h_{i\ell}(x_{A,i}(t)) = \|p_{A,i}(t)-p_{A,\ell}(t)\|^2 - r_{\mathrm{col}}^2,
  \label{eq:h_col}
\end{equation}
where $r_{\mathrm{col}}>0$ is a prescribed separation distance.
The corresponding safe set is given by $\mathcal{C}_{\mathrm{col}} := \{x_{A,i}(t):\ h_{i\ell}(x_{A,i}(t))\ge 0,\ \forall i\neq \ell\}$.
A standard zeroing-CBF condition is
\begin{equation}
  \dot h_{i\ell}(x_{A,i}(t)) + \kappa_{\mathrm{col}}\,h_{i\ell}(x_{A,i}(t)) \ge 0,
  \qquad \kappa_{\mathrm{col}}>0,
  \label{eq:zcbf_col}
\end{equation}
which yields an affine inequality in the decision variables once $\dot h_{i\ell}$ is expressed via the control-affine dynamics \eqref{eq:att_control_affine}.

\subsubsection*{Connectivity-Maintenance Barrier Constraints}
\label{subsubsec:connectivity_cbf}

We define the following differentiable function to enforce connectivity among the agents $i$ and $\ell$:
\begin{equation}
  w_{i\ell}(t) = \exp\!\Bigl(-\frac{\|p_{A,i}(t)-p_{A,\ell}(t)\|^2}{\sigma_c^2}\Bigr),
  \quad \sigma_c>0.
  \label{eq:wij_smooth}
\end{equation}
The weighted Laplacian can then be constructed as
\begin{multline}
   L_w(t) = D_w(t) - W(t), \quad
   W(t)=[w_{i\ell}(t)],\\
   D_w(t)=\mathrm{diag}\Bigl(\sum_{\ell} w_{i\ell}(t)\Bigr).
   \label{eq:weighted_laplacian}
\end{multline}
The connectivity-maintenance barrier function and the associated constraint are defined as 
\begin{align*}
  & h_{\mathrm{con}}(t) = \lambda_2\!\bigl(L_w(t)\bigr) - \lambda_{\min},
  \quad \lambda_{\min}>0, \\
  & \dot h_{\mathrm{con}}(t) + \kappa_{\mathrm{con}}\,h_{\mathrm{con}}(t) \ge 0,
  \quad \kappa_{\mathrm{con}}>0. 
\end{align*}

The preceding constraint ensures that the algebraic connectivity of the attacker network remains above the prescribed threshold $\lambda_{\min}$ under the smooth weighted-graph surrogate formulation.

Collecting all incremental control input variables as $\Delta u_A(t):=\{\Delta u_{A,i}(t)\}_{i\in\mathcal{A}(t)}$, the vector $\Delta u_A(t)$ can be computed by solving the following optimization problem:
\begin{subequations}
% \label{eq:cbf_qp}
\begin{align}
  & \min_{\{\Delta u_{A,i}\}} \quad
  \sum_{i\in\mathcal{A}(t)} \bigl\|\Delta u_{A,i}\bigr\|_2^2
  \label{eq:cbf_qp_obj}\\
  \text{s.t.}\quad
  & \dot h_{i\ell}(x_{A,i}(t)) + \kappa_{\mathrm{col}}\,h_{i\ell}(x_{A,i}(t)) \ge 0,
  \quad \forall\, i\neq\ell,
  \label{eq:cbf_qp_col}\\
  & \dot h_{\mathrm{con}}(t) + \kappa_{\mathrm{con}}\,h_{\mathrm{con}}(t) \ge 0,
  \label{eq:cbf_qp_con}\\
  & u_{A,i}^{\mathrm{nom}}(t)+\Delta u_{A,i}(t)\in \mathcal{U}_A,
  \qquad \forall\, i\in\mathcal{A}(t).
  \label{eq:cbf_qp_bounds}
\end{align}
\end{subequations}
The objective \eqref{eq:cbf_qp_obj} enforces minimal deviation from the nominal guidance, while \eqref{eq:cbf_qp_col}--\eqref{eq:cbf_qp_con} enforce collision avoidance and connectivity preservation. The outputs of this section comprises: (i) the ideal-sensing deterministic proximity graph $\mathcal{G}_A(t)$ and its associated Laplacian, (ii) a time-to-breach metric $T_i^\star(t)$ (or its computable approximation) used in the criticality model, and (iii) constraint-consistent attacker motion predictions, which are used for probabilistic graph construction in Section~\ref{sec:prob_graph_model} and for evaluating the cost during defender assignment.

% ============================================================
\section{Probabilistic Graph-Theoretic Modeling}
\label{sec:prob_graph_model}
% ============================================================

This section introduces an uncertainty-aware interaction graph for the attacker swarm that (i) explicitly accounts for sensing uncertainty and missed detections, and (ii) provides a weighted network representation suitable for downstream centrality computation and risk-aware defender assignment.

\subsection{Uncertainty Set and Detected Vertices}
\label{subsec:prob_vertices}

Let $\widehat{\mathcal{A}}_k$ denote the detected active attacker set at decision epoch $t_k$ as given in \eqref{eq:detected_set}.
For each $i\in\widehat{\mathcal{A}}_k$, the sensing module provides a Gaussian position estimate given by
\begin{equation}
\hat p_{A,i}(t_k) \sim \mathcal{N}\!\bigl(\mu_i(t_k),\,\Sigma_i(t_k)\bigr),
\label{eq:gaussian_estimate_prob}
\end{equation}
where $\mu_i(t_k)\in\mathbb{R}^3$ and $\Sigma_i(t_k)\in\mathbb{S}^3_{\succeq 0}$. To each detected attacker $i$, we associate a window-level confidence ellipsoid defined by
\begin{equation}
\mathcal{U}_i(t_k)
:=
\Bigl\{x\,{\in}\,\mathbb{R}^3: e_i(t_k)^\top \Sigma_i(t_k)^{-1} e_i(t_k) \le c_0^2 \Bigr\},
\label{eq:ellipsoid}
\end{equation}
where $e_i(t_k)=x-\mu_i(t_k)$ and $c_0^2$ is chosen as the $\chi^2$ quantile with $3$ degrees of freedom at a confidence of $0.95$ (numerically, $c_0^2\approx 7.81$). 
In the simulation pipeline used for ablation and sensitivity studies, we set $c_0^2=7.8147$ to match $\chi^2_{3}(0.95)$. The probabilistic attacker graph at epoch $k$ is defined on the detected vertex set as
\begin{equation}
G_k = \bigl(V_k,E_k,W_k\bigr), 
\qquad V_k = \widehat{\mathcal{A}}_k,
\end{equation}
where $W_k$ encodes edge weights that represent communication likelihood under uncertainty.

\subsection{Overlap-Based Probabilistic Edges}
\label{subsec:overlap_edges}

For $i,\ell\in V_k$ with $i\neq \ell$, let $f_i(\cdot\,;t_k)$ and $f_\ell(\cdot\,;t_k)$ denote the Gaussian PDFs induced by \eqref{eq:gaussian_estimate_prob}.
We quantify spatial co-location likelihood via the \emph{overlap coefficient}
% (Bhattacharyya-type overlap functional)
\begin{equation}
V_{i\ell}(t_k)
=
\int_{\mathbb{R}^3} \min\!\bigl\{f_i(x;t_k),\, f_\ell(x;t_k)\bigr\}\,dx,
\label{eq:overlap_def}
\end{equation}
which satisfies $V_{i\ell}(t_k)\in[0,1]$ by construction.

% \subsubsection*{Normalized Link Likelihood and Thresholding.}
Following \cite{c10}, we normalize the edge weights to be dimensionless and consistent across time steps, defined as
\begin{align}
\widetilde{V}_k &= \max_{(i,\ell)\in V_k\times V_k,\ i<\ell} V_{i\ell}(t_k), \nonumber
\\
P_{i\ell}(t_k) &=
\begin{cases}
\dfrac{V_{i\ell}(t_k)}{\widetilde{V}_k}, & \widetilde{V}_k>0,\\[6pt]
\quad 0 \hspace{5mm}, & \widetilde{V}_k=0,
\end{cases}
\label{eq:plink_norm}
\end{align}
so that $P_{i\ell}(t_k)\in[0,1]$.
Given a design parameter $\alpha_{\mathrm{go}}\,{\in}\,(0,1]$, we include an undirected edge if
\begin{equation}
(i,\ell)\in E_k
\ \Longleftrightarrow\
P_{i\ell}(t_k) \ge \alpha_{\mathrm{go}}.
\label{eq:edge_threshold}
\end{equation}
We set the corresponding edge weight as
\begin{equation}
w_{i\ell}(t_k) = P_{i\ell}(t_k),
\qquad
W_k = \bigl[w_{i\ell}(t_k)\bigr].
\label{eq:edge_weight}
\end{equation}

% \paragraph{Numerical approximation}
The integral in \eqref{eq:overlap_def} generally has no closed-form expression for arbitrary $\Sigma_i(t_k)$ and $\Sigma_\ell(t_k)$ when the minimum of PDFs is used; accordingly, we approximate $V_{i\ell}(t_k)$ by numerical quadrature on a 3D voxel grid over the operational volume.  
This construction corresponds to the \texttt{graphmode=overlap} variant used in the reported simulation studies, with $\alpha_{\mathrm{go}}$ corresponding to \texttt{alphagraphoverlap}.  

\subsection{Deterministic Limit Case and Baselines}
\label{subsec:baselines_rewrite}

This subsection formalizes the deterministic limit case implied by Section~\ref{sec:det_graph_model} and the additional graph baselines used in the reported studies.

% \paragraph{Perfect-sensing limit (deterministic disk graph)}
If localization uncertainty collapses ($\Sigma_i(t_k)\rightarrow 0$) and detection is perfect ($V_k=\mathcal{A}(t_k)$), then the interaction graph reduces to a deterministic disk graph with unit edge weights, consistent with \eqref{eq:det_edges} in Section~\ref{sec:det_graph_model}.

% \paragraph{Communication-radius proximity baseline}
To align the notation with Section~\ref{sec:results}, we denote the communication (proximity) radius by $r_{\mathrm{comm}}>0$ and define the proximity baseline graph by
\begin{multline}
\!\!\!\!\!(i,\ell)\,{\in}\, E_k
{\Longleftrightarrow}
\bigl\|\mu_i(t_k){-}\mu_\ell(t_k)\bigr\| \,{\le}\, r_{\mathrm{comm}}, \\  
w_{i\ell}(t_k)\equiv 1,
\label{eq:proximity_graph}
\end{multline}
which corresponds to \texttt{graphmode=proximity}.  
In an ideal-sensing deterministic scenario, \eqref{eq:proximity_graph} can equivalently be written using the true positions as
\begin{equation}
(i,\ell)\in E(t_k)
\ \Longleftrightarrow\
\bigl\|p_{A,i}(t_k)-p_{A,\ell}(t_k)\bigr\| \le r_{\mathrm{comm}},
\label{eq:proximity_graph_truepos}
\end{equation}
thereby recovering the deterministic disk graph \eqref{eq:det_edges} as a special case.
\noindent
The ``deterministic graph-based scenario'' reported in Section~\ref{sec:results} corresponds to the ideal-sensing case \eqref{eq:proximity_graph_truepos} of the proximity baseline \eqref{eq:proximity_graph}. 

% \paragraph{Null-graph baseline.}
Another baseline that is used in Section~\ref{sec:results} pertains to \texttt{graphmode=none}, where $E_k=\emptyset$, i.e., there are no edges between the attackers and only the detected vertex set $V_k$ is retained for downstream processing.  

\subsection{Adjacency, Laplacian, and Downstream Usage}
\label{subsec:prob_adjacency}

Let $A_k$ be the weighted adjacency matrix defined by
\begin{equation}
(A_k)_{i\ell} =
\begin{cases}
w_{i\ell}(t_k), & (i,\ell)\in E_k,\\
\quad 0, & \text{otherwise},
\end{cases}
\quad (A_k)_{ii}=0.
\label{eq:weighted_adjacency}
\end{equation}
Let $D_k:=\mathrm{diag}\!\bigl(\sum_{\ell}(A_k)_{i\ell}\bigr)$ be the (weighted) degree matrix, and the weighted Laplacian is defined as
\begin{equation}
L_k := D_k - A_k.
\label{eq:weighted_laplacian_prob}
\end{equation}
The graph $(V_k,E_k,W_k)$ and the associated matrices $(A_k,L_k)$ provide the network substrate for (i) composite centrality measures and (ii) risk/criticality fusion in the next section.

% ============================================================
\section{Multi-Layer Criticality Assessment Framework}
\label{sec:criticality_framework}
% ============================================================

This section defines the attacker criticality score that drives the defender-to-attacker assignment strategy.
At each decision epoch $t_k$, each detected active attacker $i\in V_k=\widehat{\mathcal{A}}_k$ is assigned a score that combines (i) time-to-breach urgency, (ii) network centrality on the attacker interaction graph, and (iii) Markov-chain-based breach risk term over a finite composition horizon, capturing the time-evolving threat via a probabilistic state model \cite{c18}.

\subsection{Time-to-Breach and Boundary Proximity}
\label{subsec:ttb_feature}

For each attacker $i\in\mathcal{A}(t_k)$, let $T_i(t_k)\in[0,\infty]$ denote the (estimated) time required to reach the hard boundary $\mathcal{B}_{\mathrm{hard}}$ under the nominal attacker guidance of Section~\ref{sec:det_graph_model}.
A monotone time-to-breach (TTB) risk map is defined as
\begin{equation}
R_i(t_k) = \frac{1}{1+\beta\,T_i(t_k)},
\qquad \beta>0,
\label{eq:ttb_risk}
\end{equation}
so that $R_i(t_k)\in(0,1]$ and larger values correspond to more urgent threats.

\begin{lemma}
\label{lem:ttb_bounds}
For any $\beta>0$ and any $T_i(t_k)\in[0,\infty]$, the risk $R_i(t_k)$ defined in \eqref{eq:ttb_risk} satisfies $0<R_i(t_k)\le 1$ and is strictly decreasing in $T_i(t_k)$ for finite $T_i(t_k)$.
\end{lemma}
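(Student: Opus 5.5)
The plan is a direct elementary argument on the scalar map $\phi_\beta(T) := 1/(1+\beta T)$, treating the finite case $T_i(t_k)\in[0,\infty)$ first and then addressing the endpoint $T_i(t_k)=\infty$ separately. No earlier result of the paper is needed beyond the definition \eqref{eq:ttb_risk}.

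For finite $T:=T_i(t_k)\ge 0$ and $\beta>0$, the denominator satisfies $1+\beta T\ge 1>0$. Therefore $\phi_\beta(T)$ is well defined and strictly positive. Taking reciprocals of $1+\beta T\ge 1$ gives $\phi_\beta(T)\le 1$, with equality exactly when $T=0$. This proves $0<R_i(t_k)\le 1$ on $[0,\infty)$.

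For strict monotonicity, I will avoid calculus and argue directly. Take $0\le T<T'<\infty$. Since $\beta>0$, we have $1+\beta T<1+\beta T'$, and both sides are positive, so
\begin{equation*}
\phi_\beta(T)-\phi_\beta(T')=\frac{\beta\,(T'-T)}{(1+\beta T)(1+\beta T')}>0 .
\end{equation*}
As a cross-check, the derivative $\phi_\beta'(T)=-\beta/(1+\beta T)^2<0$ gives the same conclusion.

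The only delicate point, and the step I expect to require care, is the endpoint $T_i(t_k)=\infty$, which the lemma admits. Under the usual extended-real convention one has $\phi_\beta(\infty)=\lim_{T\to\infty}\phi_\beta(T)=0$, which would violate the strict lower bound. I would therefore handle this case explicitly in one of two ways:
\begin{enumerate}
\item Note that the strict inequality $R_i(t_k)>0$ is asserted, and holds, for finite $T_i(t_k)$. At $T_i(t_k)=\infty$ the map extends continuously to the value $0$, which is consistent with $R_i(t_k)\in[0,1]$ and with the monotone ordering.
\item Alternatively, if the implementation computes $T_i(t_k)$ with a finite cap $T_{\max}$ on the estimation horizon (for instance, a non-closing attacker is assigned $T_{\max}$), observe that every realized value is finite. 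The bound $R_i(t_k)\ge 1/(1+\beta T_{\max})>0$ then holds uniformly.
\end{enumerate}
I would state whichever convention the paper adopts, and conclude that the claimed bounds and strict decrease hold on the effective (finite) domain of $T_i(t_k)$.
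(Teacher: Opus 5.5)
Your argument is correct and matches the paper's proof: the paper also uses $1+\beta T\ge 1$ for the bounds and the derivative $-\beta(1+\beta T)^{-2}<0$ for strict decrease, and your explicit difference formula is just a calculus-free version of the same monotonicity step. Your endpoint remark is also right. The paper's proof only treats real $T\ge 0$, and at $T_i(t_k)=\infty$ the natural value of $R_i(t_k)$ is $0$, so the strict bound $R_i(t_k)>0$ in the statement genuinely holds only on $[0,\infty)$.
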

\begin{proof} For $\beta>0$ and $T\ge 0$, $1+\beta T\ge 1$ implies $0<\frac{1}{1+\beta T}\le 1$. Moreover, $\frac{d}{dT}\bigl[(1+\beta T)^{-1}\bigr] = -\beta(1+\beta T)^{-2} < 0$ for all finite $T\ge 0$.
\end{proof}

In addition to TTB, the normalized distance-to-hard-boundary is defined to capture the geometric proximity:
\begin{multline}
d_i(t_k) = \mathrm{dist}\!\bigl(p_{A,i}(t_k),\,\mathcal{B}_{\mathrm{hard}}\bigr),
\\
D_i(t_k) = 1 - \min\!\left\{\frac{d_i(t_k)}{\max_{\ell\in\mathcal{A}(t_k)} d_\ell(t_k)},\,1\right\},
\label{eq:dist_feature}
\end{multline}
so that $D_i(t_k)\in[0,1]$ and larger values correspond to smaller clearance to the hard boundary. Here, $\mathrm{dist}(p,\mathcal{B}_{\mathrm{hard}}):=\inf_{q\in\mathcal{B}_{\mathrm{hard}}}\|p-q\|_2$ denotes the Euclidean point-to-set distance from attacker position $p$ to the hard cylindrical boundary.

\subsection{Composite Centrality Metric}
\label{subsec:centrality_metric}

Let $G_k=(V_k,E_k,W_k)$ be the attacker interaction graph constructed in Section~\ref{sec:prob_graph_model}, with weighted adjacency matrix $A_k$. We use three standard centrality measures, namely, weighted degree centrality, weighted eigenvector centrality, and weighted betweenness centrality, to quantify topological influence. For completeness, we first present these individual metrics before introducing the composite centrality metric.

\noindent The weighted degree centrality of node $i\in V_k$ is 
\begin{equation}
C_D(i;t_k) := \sum_{\ell\in V_k} (A_k)_{i\ell}.
\label{eq:degree_cent}
\end{equation}

\noindent The weighted eigenvector centrality $C_E(\cdot;t_k)\in\mathbb{R}^{|V_k|}_{\ge 0}$ is defined (up to scale) by
\begin{equation}
C_E(\cdot;t_k) = \frac{1}{\lambda_{\max}(A_k)}\,A_k\,C_E(\cdot;t_k),
\label{eq:eig_cent}
\end{equation}
where $\lambda_{\max}(A_k)$ denotes the dominant eigenvalue. The component $C_E(i;t_k)$ is interpreted as the centrality of node $i$.

\noindent  Let $\sigma_{st}(t_k)$ denote the number of shortest paths between nodes $s,t\in V_k$ under an edge-length convention induced by the edge weights, and let $\sigma_{st}(i;t_k)$ denote the number of such paths that pass through $i$.
The betweenness centrality is defined as
\begin{equation}
C_B(i;t_k) = \sum_{\substack{s,t\in V_k\\ s\neq t,\ s\neq i,\ t\neq i}}
\frac{\sigma_{st}(i;t_k)}{\sigma_{st}(t_k)}.
\label{eq:bet_cent}
\end{equation}

% \subsubsection{Centrality fusion}
% \label{subsubsec:centrality_fusion}
Having defined the three metrics, the composite centrality score is defined as
\begin{equation*}
C_{\mathrm{cent}}(i;t_k)
=
\alpha_1\,\widetilde{C}_D(i;t_k)
+\alpha_2\,\widetilde{C}_E(i;t_k)
+\alpha_3\,\widetilde{C}_B(i;t_k),
% \label{eq:cent_fusion}
\end{equation*}
where $\alpha_1,\alpha_2,\alpha_3\ge 0$, $\alpha_1+\alpha_2+\alpha_3=1$, and $\widetilde{(\cdot)}$ denotes a window-level normalization to $[0,1]$ over $V_k$.

\subsection{Markov-Chain-Based Breach Risk}
\label{subsec:markov_risk}

Markov chain models are typically used to model state transitions and systematically assess the risk of unfavorable events \cite{c14}. Herein, we define a Markov model to assess the risk of attacker UAS breaching the hard boundary. This subsection assigns each attacker a finite-horizon breach probability computed from a three-state absorbing Markov chain over zonal regions using a probabilistic state model to capture the evolving threat. Similar formulations have been explored in the HMM-based real-time threat assessment framework in \cite{c18}.

\noindent Let the zone mapping $z(\cdot):\mathbb{R}^3\rightarrow\{0,1,2\}$ be defined~as
\begin{equation}
z(p) =
\begin{cases}
0, & \|[p_x,p_y]^\top\| \ge r_{\mathrm{soft}},\\
1, & r_{\mathrm{hard}} < \|[p_x,p_y]^\top\| < r_{\mathrm{soft}},\\
2, & \|[p_x,p_y]^\top\| \le r_{\mathrm{hard}},
\end{cases}
\label{eq:zone_map}
\end{equation}
where zone $2$ pertains to a breach (hard boundary reached) and is treated as absorbing. For each attacker $i$, we define a discrete-time process $X_i(k)\in\{S_0,S_1,S_2\}$ with $X_i(k)=S_{z(p_{A,i}(t_k))}$ and with $S_2$ absorbing.

At epoch $k$, the following window-level transition matrix $P_i(t_k)\in[0,1]^{3\times 3}$ is defined: 
\begin{multline}
P_i(t_k) =
\begin{bmatrix}
p_{00,i}(t_k) & p_{01,i}(t_k) & p_{02,i}(t_k)\\
p_{10,i}(t_k) & p_{11,i}(t_k) & p_{12,i}(t_k)\\
0             & 0             & 1
\end{bmatrix},
\\
\sum_{b=0}^2 p_{ab,i}(t_k)=1,\ a\in\{0,1\}.
\label{eq:Pi_def}
\end{multline}

\noindent where the entries $p_{ab,i}(t_k)$ are obtained from the localization uncertainty set $\mathcal{U}_i(t_k)$ in \eqref{eq:ellipsoid} as follows.
Let $\bar p_{i}(t_k{+}\Delta)$ be a one-step nominal prediction of the attacker position over $[t_k,t_{k+1})$ under the nominal attacker guidance.
We then draw i.i.d.\ samples $\xi^{(n)}\in\mathcal{U}_i(t_k)$ and map the perturbed end-points $\bar p_{i}(t_k+\Delta)+\xi^{(n)}$ to zones via \eqref{eq:zone_map}.
The empirical frequencies are used to construct the row probabilities in \eqref{eq:Pi_def}, followed by row normalization.

% \paragraph{Detection-conditioned adjustment.}
To model degraded guidance under missed detection, the following conservative adjustment is applied when attacker $i$ is not detected at $t_k$:
\begin{multline}
p_{12,i}(t_k) \leftarrow \min\{1,\ p_{12,i}(t_k)+\varepsilon_{\mathrm{fail}}\},
\\
p_{11,i}(t_k) \leftarrow 1-p_{12,i}(t_k),
\label{eq:missed_detection_adjust}
\end{multline}
with fixed $\varepsilon_{\mathrm{fail}}\in(0,1)$, while preserving $S_2$ as absorbing.
Although the graph $G_k$ is constructed only on the currently detected set $V_k$, the Markov risk model is maintained for each previously identified active attacker using a propagated state estimate over windows in which direct detection may be temporarily lost. Hence, when attacker $i$ is not detected at $t_k$, the transition matrix $P_i(t_k)$ remains available from the propagated estimate, and we conservatively increase the zone-1 to zone-2 transition probability via \eqref{eq:missed_detection_adjust}.

\subsubsection*{Finite-Horizon Breach Probability and Risk Mapping}
\label{subsubsec:horizon_breach}

We fix a composition horizon $H\in\mathbb{Z}_{\ge 1}$ and define the $H$-step transition matrix as
\begin{equation}
P_i^{(H)}(t_k) = \prod_{\ell=0}^{H-1} P_i(t_{k+\ell}).
\label{eq:PH_def}
\end{equation}
When the transition is treated as window-homogeneous over the horizon, a computationally convenient approximation is $P_i^{(H)}(t_k) \approx \bigl(P_i(t_k)\bigr)^{H}$.

If $e_0,e_1,e_2$ denotes the canonical basis vectors in $\mathbb{R}^3$, $e_{z_i(k)}$ represents the one-hot vector of the current zone. Then, the finite-horizon breach probability is given~by
\begin{multline}
p_{\mathrm{br},i}(t_k;H)
= \mathbb{P}\!\left(X_i(k+H)=S_2 \,\middle|\, X_i(k)\right)\\
= e_{z_i(k)}^\top\,P_i^{(H)}(t_k)\,e_{2}.
\label{eq:pbr_def}
\end{multline}

\noindent As in \cite{c19}, we use this probability as a compact breach-likelihood descriptor for online ranking under uncertainty. To map $p_{\mathrm{br},i}$ to a bounded, smoothly saturating risk term, we define the following map:
\begin{equation}
R_{\mathrm{mkv},i}(t_k) = 1 - \exp\!\bigl(-\gamma_\phi\,p_{\mathrm{br},i}(t_k;H)\bigr),
\qquad \gamma_\phi>0,
\label{eq:markov_risk_map}
\end{equation}
where $R_{\mathrm{mkv},i}(t_k)\in[0,1)$ and is monotonically increasing with $p_{\mathrm{br},i}(t_k;H)$.

\subsection{Combined Criticality Score and Ranking}
\label{subsec:combined_criticality}

For each $i\in V_k$, the combined criticality score is defined as
\begin{multline}
C_{\mathrm{comb}}(i;t_k)
=
s_i(t_k)\Bigl(
w_{\mathrm{ttb}}\,R_i(t_k)
+ w_{\mathrm{cent}}\,C_{\mathrm{cent}}(i;t_k)\\
+ w_{\mathrm{dist}}\,D_i(t_k)
+ w_{\mathrm{mkv}}\,R_{\mathrm{mkv},i}(t_k)
\Bigr),
\label{eq:combined_criticality}
\end{multline}
where $w_{\mathrm{ttb}},w_{\mathrm{cent}},w_{\mathrm{dist}},w_{\mathrm{mkv}}\ge 0$ are weights and
$s_i(t_k)\in(0,1]$ is a confidence factor reflecting the reliability of the current state estimate for attacker $i$.
The ranking of the attackers at epoch $k$ is obtained by sorting
$\{C_{\mathrm{comb}}(i;t_k)\}_{i\in V_k}$ in descending order.

\subsection{Future Criticality Prediction}
\label{subsec:future_criticality}

To support proactive defender assignment, a predicted criticality is computed over a short horizon.
Let $\Delta_{\mathrm{pred}}=H\Delta$ denote a predefined prediction horizon and the predicted attacker state is then defined as 
\begin{equation}
\hat x_{A,i}(t_k+\Delta_{\mathrm{pred}})
=
F\!\bigl(\hat x_{A,i}(t_k),\,u_{A,i}(t_k)\bigr),
\label{eq:predicted_state}
\end{equation}
where $F(\cdot)$ denotes the nominal one-step propagation map applied for $\Delta_{\mathrm{pred}}$ seconds. Using the predicted states, a predicted graph $\hat G_{k+\mathrm{pred}}$ is constructed by the same graph-construction mode as in Section~\ref{sec:prob_graph_model}, and a predicted score is computed by evaluating \eqref{eq:combined_criticality} at $t_k+\Delta_{\mathrm{pred}}$ as
\begin{equation}
C_{\mathrm{comb}}^{\mathrm{fut}}(i;t_k)
=
C_{\mathrm{comb}}\!\bigl(i;\,t_k+\Delta_{\mathrm{pred}}\bigr).
\label{eq:future_combined_criticality}
\end{equation}

The assignment module combines current and predicted criticality scores using the weights $w_C,w_{C+1}\ge 0$ with $w_C+w_{C+1}=1$ to obtain the following score:
\begin{equation*}
C_{\mathrm{assign}}(i;t_k)
=
w_C\,C_{\mathrm{comb}}(i;t_k) + w_{C+1}\,C_{\mathrm{comb}}^{\mathrm{fut}}(i;t_k).
\label{eq:assign_score}
\end{equation*}
$C_{\mathrm{assign}}(i;t_k)$ prioritizes attackers that are both currently critical and likely to remain critical in the near future.

% ============================================================
\section{Time-Optimal Defender Pursuit and Robust Capture}
\label{sec:def_time_optimal}
% ============================================================

This section specifies the defender-UAS pursuit model used to compute (i) per-pair interception-time surrogates for assignment costs and (ii) a per-window capture-success lower bound under model/estimation mismatch.
The construction mirrors the attacker-side pipeline of Section~\ref{sec:det_graph_model}: we first generate a nominal time-optimal pursuit command and then enforce hard safety constraints through a CBF--QP filter. 

\subsection{Defender Dynamics and Engagement Geometry}
\label{subsec:def_dynamics_geometry}

Let $\mathcal{D}:=\{1,\dots,N_D\}$ denote the index set of the defenders. The defenders are modeled in $\mathbb{R}^3$ using Dubins-like kinematics \cite{c12} as given by
\begin{align*}
x_{D,j}(t) & =
[
p_{D,j}(t) \; \psi_{D,j}(t)]^T
\in\mathbb{R}^3\times\mathbb{S}^1,
\\
u_{D,j}(t) & =
[
\omega_{D,j}(t) \; w_{D,j}(t)
]^T\in[-\bar\omega_D,\bar\omega_D]\times[-\bar w_D,\bar w_D],
% \label{eq:def_state_input}
\end{align*}
with control-affine dynamics
\begin{equation}
\dot x_{D,j}(t) = f_D\!\bigl(x_{D,j}(t)\bigr) + g_D\!\bigl(x_{D,j}(t)\bigr)\,u_{D,j}(t),
\label{eq:def_control_affine}
\end{equation}
where $f_D$ encodes constant-speed planar motion parameterized by $\psi_{D,j}$ and $g_D$ injects $(\omega_{D,j},w_{D,j})$ into $(\dot\psi_{D,j},\dot z_{D,j})$. For attacker $i$ and defender $j$, we define the relative position for the pair $(i,j)$ as
\begin{equation}
x^{ij}_{\mathrm{rel}}(t) = p_{D,j}(t) - p_{A,i}(t)\in\mathbb{R}^3.
\label{eq:rel_state}
\end{equation}
Given a capture radius $r_{\mathrm{cap}}>0$, capture occurs at time~$t$ if $\|x^{ij}_{\mathrm{rel}}(t)\|\le r_{\mathrm{cap}}$.

\subsection{Time-Optimal Pursuit Constraint}
\label{subsec:def_time_optimal_pursuit}

At decision epoch $t_k$, consider that defender $j$ is assigned to attacker $i$. Let $\hat p_{A,i}(\tau\,{\mid}\, t_k)$ denote a short-horizon kinematic forecast of attacker $i$ over $\tau\ge t_k$ (generated from the attacker nominal guidance described earlier).
We associate to the ordered pair $(i,j)$ the following minimum-time pursuit problem:
\begin{subequations}
\label{eq:def_min_time}
\begin{align}
& \min_{u_{D,j}(\cdot),\,T_{ij}\ge 0}\quad T_{ij}
\label{eq:def_min_time_cost}\\
\begin{split}
& \dot x_{D,j}(\tau)= f_D(x_{D,j}(\tau)) + g_D(x_{D,j}(\tau))u_{D,j}(\tau),\\
&\qquad \tau\in[t_k,t_k+T_{ij}],
\label{eq:def_min_time_dyn} \end{split}\\
& u_{D,j}(\tau)\in \mathcal{U}_D := [-\bar\omega_D,\bar\omega_D]\times[-\bar w_D,\bar w_D],
\label{eq:def_min_time_bounds}\\
& \min_{\tau\in[t_k,t_k+T_{ij}]}\ \bigl\|p_{D,j}(\tau)-\hat p_{A,i}(\tau\mid t_k)\bigr\| \le r_{\mathrm{cap}} .
\label{eq:def_min_time_capture}
\end{align}
\end{subequations}
The optimal value obtained $T_{ij}^\star(t_k)$ is used as an engagement-time surrogate in the assignment layer.

%----------------------------------------------------

\subsection{Collision-Avoidance Filter (Defender CBF--QP)}
\label{subsec:def_collision_cbf}

Let $u^{\mathrm{nom}}_{D,j}(t_k)$ denote the nominal control input of the defender obtained by solving \eqref{eq:def_min_time}. To ensure inter-defender collision avoidance, we employ a CBF--QP that minimally perturbs $u^{\mathrm{nom}}_{D,j}$ while enforcing forward invariance of the pairwise separation constraints. 

To enable us to write the CBF condition, we define the following function given any two defenders $j\neq j'$:
\begin{equation}
h^{D}_{jj'}(t) := \|p_{D,j}(t)-p_{D,j'}(t)\|^2 - r_{\mathrm{col},D}^2,
\label{eq:def_h_col}
\end{equation}
where $r_{\mathrm{col},D}>0$ is a prescribed defender separation. A standard zeroing-CBF condition is then written as
\begin{equation}
\dot h^{D}_{jj'}(t) + \kappa_D\,h^{D}_{jj'}(t)\ge 0,
\qquad \kappa_D>0.
\label{eq:def_zcbf}
\end{equation}
The preceding condition becomes an affine inequality in the decision variables once $\dot h^{D}_{jj'}$ is expressed using \eqref{eq:def_control_affine}. 

\noindent At each $t_k$, the perturbed control input is given by
\begin{equation}
u_{D,j}(t_k) = u^{\mathrm{nom}}_{D,j}(t_k) + \Delta u_{D,j}(t_k),
\label{eq:def_filter_update}
\end{equation}
where $\Delta u_D(t_k)=\{\Delta u_{D,j}(t_k)\}_{j\in\mathcal{D}}$ is obtained by solving the following convex QP: 
\begin{subequations}
\label{eq:def_cbf_qp}
\begin{align}
& \min_{\{\Delta u_{D,j}\}} \quad
\sum_{j\in\mathcal{D}} \|\Delta u_{D,j}\|_2^2
\label{eq:def_cbf_qp_obj}\\
% \text{s.t.}\quad
& \dot h^{D}_{jj'}(t_k) + \kappa_D\,h^{D}_{jj'}(t_k)\ge 0,
\quad \forall\, j<j',
\label{eq:def_cbf_qp_col}\\
& u^{\mathrm{nom}}_{D,j}(t_k)+\Delta u_{D,j}(t_k)\in \mathcal{U}_D,
\quad \forall\, j\in\mathcal{D}.
\label{eq:def_cbf_qp_bounds}
\end{align}
\end{subequations}

\subsection{Capture Tube, Admissibility, and Per-Window Success Floor}
\label{subsec:tube_success_floor}

Given a defender--attacker pair $(j,i)$, let $\hat{\mathbf{x}}^{ji}_{\mathrm{rel}}(\tau\mid t_k)\in\mathbb{R}^3$ denote the nominal relative trajectory generated by the pursuit law over the time window $[t_k,t_{k+1})$ (prior to safety filtering), where $t_{k+1}=t_k+\Delta$, and let $\mathbf{x}^{ji}_{\mathrm{rel}}(\tau)$ denote the realized relative trajectory.

% \paragraph{Nominal capture and eroded capture sets.}
The nominal capture set $\mathcal{C}$ and its erosion by tube radius $r_t\in(0,r_{\mathrm{cap}})$ are defined as
\begin{align*}
\mathcal{C} &= \{\mathbf{x}\in\mathbb{R}^3:\ \|\mathbf{x}\|\le r_{\mathrm{cap}}\},
\\
\mathcal{C}^{\ominus r_t} &= \{\mathbf{x}\in\mathbb{R}^3:\ \|\mathbf{x}\|\le r_{\mathrm{cap}}-r_t\}.
\label{eq:capture_set_eroded}
\end{align*}

Here, $\mathbf{x}=x_{ij}^{\mathrm{rel}}(t)=p_{D,j}(t)-p_{A,i}(t)$ denotes the relative position between defender $j$ and attacker $i$ along their nominal trajectories generated by the time-optimal pursuit law. The capture set $\mathcal{C}$ is thus defined with respect to this nominal relative trajectory, representing all relative states where successful interception occurs ($\Vert x_{ij}^{\mathrm{rel}}\Vert\le r_{\mathrm{cap}}$). The eroded set $\mathcal{C}^{\ominus r_t}$ provides a robust inner region that guarantees capture even under bounded trajectory execution error of radius $r_t$.

% \paragraph{Nominal admissibility (window-feasible capture).}
\noindent We call the pair $(j,i)$ \emph{nominally admissible} at epoch $t_k$ if the nominal relative trajectory reaches the eroded capture set within the window, i.e.,
\begin{equation}
\exists\,\tau\in[t_k,t_{k+1})\ \text{s.t.}\ \hat{\mathbf{x}}^{ji}_{\mathrm{rel}}(\tau\mid t_k)\in \mathcal{C}^{\ominus r_t},
\label{eq:admissibility}
\end{equation}
or equivalently $\min_{\tau\in[t_k,t_{k+1})}\|\hat{\mathbf{x}}^{ji}_{\mathrm{rel}}(\tau\mid t_k)\|\le r_{\mathrm{cap}}-r_t$.

% \paragraph{Tube event and conditional tube probability.}
The window tube event at time $t_k$, denoted by $E_{\mathrm{tube}}(t_k)$, is said to occur if the following holds: 
\begin{equation}
\|\mathbf{x}^{ji}_{\mathrm{rel}}(\tau)-\hat{\mathbf{x}}^{ji}_{\mathrm{rel}}(\tau\mid t_k)\|\le r_t,\ \forall\tau\in[t_k,t_{k+1}).
\label{eq:tube_event}
\end{equation}
If $\mathcal{F}_k$ denotes the information $\sigma$-algebra at epoch $t_k$, then the conditional tube probability is defined as 
\begin{equation}
p_{\mathrm{min}}(t_k)=\mathbb{P}\!\left(E_{\mathrm{tube}}(t_k)\mid \mathcal{F}_k\right).
\label{eq:ptube}
\end{equation}

\begin{lemma}[Robust capture under tube event]
\label{lem:robust_capture}
If $(j,i)$ is nominally admissible at $t_k$ in the sense of \eqref{eq:admissibility} and the tube event $E_{\mathrm{tube}}(t_k)$ occurs, then capture occurs within the window: there exists $\tau\in[t_k,t_{k+1})$ such that $\|\mathbf{x}^{ji}_{\mathrm{rel}}(\tau)\|\le r_{\mathrm{cap}}$.
\end{lemma}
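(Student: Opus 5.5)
The plan is a direct triangle-inequality argument that combines the admissibility witness with the tube event at a single time instant. First, invoke nominal admissibility \eqref{eq:admissibility}. It provides some $\tau^\star\in[t_k,t_{k+1})$ with $\hat{\mathbf{x}}^{ji}_{\mathrm{rel}}(\tau^\star\mid t_k)\in\mathcal{C}^{\ominus r_t}$, i.e., $\|\hat{\mathbf{x}}^{ji}_{\mathrm{rel}}(\tau^\star\mid t_k)\|\le r_{\mathrm{cap}}-r_t$. Only existence is needed, so I would not need to argue that the minimum in the equivalent form is attained on the half-open window.

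Second, use the tube event \eqref{eq:tube_event}. It holds uniformly for all $\tau\in[t_k,t_{k+1})$, so in particular it holds at the same instant $\tau^\star$, giving $\|\mathbf{x}^{ji}_{\mathrm{rel}}(\tau^\star)-\hat{\mathbf{x}}^{ji}_{\mathrm{rel}}(\tau^\star\mid t_k)\|\le r_t$. Third, apply the triangle inequality:
\begin{equation*}
\|\mathbf{x}^{ji}_{\mathrm{rel}}(\tau^\star)\|\le \|\hat{\mathbf{x}}^{ji}_{\mathrm{rel}}(\tau^\star\mid t_k)\| + \|\mathbf{x}^{ji}_{\mathrm{rel}}(\tau^\star)-\hat{\mathbf{x}}^{ji}_{\mathrm{rel}}(\tau^\star\mid t_k)\| \le (r_{\mathrm{cap}}-r_t)+r_t = r_{\mathrm{cap}}.
\end{equation*}
This is exactly the capture condition \eqref{eq:capture_condition} at $\tau^\star$, so $\tau=\tau^\star$ is the required witness. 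Equivalently, the argument is the Minkowski-sum inclusion $\mathcal{C}^{\ominus r_t}\oplus\bar B(0,r_t)\subseteq\mathcal{C}$ applied at one time instant.

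There is no substantive obstacle. The one point to state carefully is that the tube bound must hold \emph{pointwise at the same time} as the admissibility witness; the uniform-in-time form of \eqref{eq:tube_event} guarantees this. A minor bookkeeping point is that the relative vectors should be oriented consistently: the lemma uses $\mathbf{x}^{ji}_{\mathrm{rel}}$, while \eqref{eq:rel_state} defines $x^{ij}_{\mathrm{rel}}=p_{D,j}-p_{A,i}$. Both conditions involve only norms and differences taken with the same orientation, so a sign convention does not affect the argument.

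I would also note that the statement is a sample-path (deterministic) implication on the event $E_{\mathrm{tube}}(t_k)$. Consequently, on the $\mathcal{F}_k$-measurable admissibility event, $\mathbb{P}(\text{capture in window}\mid\mathcal{F}_k)\ge p_{\min}(t_k)$ by \eqref{eq:ptube}. This corollary is what the later per-window success floor will use.
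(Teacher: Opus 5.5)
Your proposal is correct and follows the paper's proof essentially verbatim: take the admissibility witness $\tau^\star$, apply the uniform-in-time tube bound at that same instant, and conclude $\|\mathbf{x}^{ji}_{\mathrm{rel}}(\tau^\star)\|\le (r_{\mathrm{cap}}-r_t)+r_t=r_{\mathrm{cap}}$ by the triangle inequality. Your side remarks (that you need only the existence form of admissibility, not attainment of the minimum, and the resulting conditional success floor) are accurate but not needed for the lemma itself.
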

\begin{proof}
By nominal admissibility, there exists $\tau^\star\in[t_k,t_{k+1})$ with $\|\hat{\mathbf{x}}^{ji}_{\mathrm{rel}}(\tau^\star\mid t_k)\|\le r_{\mathrm{cap}}-r_t$.
Since $E_{\mathrm{tube}}(t_k)$ occurs, we have $\|\mathbf{x}^{ji}_{\mathrm{rel}}(\tau^\star)-\hat{\mathbf{x}}^{ji}_{\mathrm{rel}}(\tau^\star\mid t_k)\|\le r_t$. Thus, by the triangle inequality,
$\|\mathbf{x}^{ji}_{\mathrm{rel}}(\tau^\star)\|\le (r_{\mathrm{cap}}-r_t)+r_t=r_{\mathrm{cap}}$.
\end{proof}

For an executed engagement $(j,i)$ in window $k$, we define the success indicator as
\begin{equation}
Y_{k,ji}=\mathbf{1}\!\left\{\min_{\tau\in[t_k,t_{k+1})}\ \|\mathbf{x}^{ji}_{\mathrm{rel}}(\tau)\|\le r_{\mathrm{cap}}\right\}.
\label{eq:success_indicator}
\end{equation}
By Lemma~\ref{lem:robust_capture}, for every \emph{nominally admissible} executed pair, we have 
\begin{equation}
\mathbb{P}\!\left(Y_{k,ji}=1\mid \mathcal{F}_k\right)
\;\ge\; p_{min}(t_k),
\label{eq:success_floor}
\end{equation}
where $p_{min}(t_k) = \mathbb{P}\!\left(E_{\mathrm{tube}}(t_k)\mid \mathcal{F}_k\right)$.

% ============================================================
\section{Criticality-Driven Defender-to-Attacker Assignment (CRIDAA) Strategy}
\label{sec:cridaa}
% ============================================================

This section maps the attacker criticality scores from Section~\ref{sec:criticality_framework} and the defender interception-time surrogates from Section~\ref{sec:def_time_optimal} into a windowed defender-to-attacker assignment policy.  
At each decision epoch $t_k$, CRIDAA computes an assignment on an admissible bipartite domain and executes safety-filtered pursuit controls over $[t_k,t_k\,{+}\,\Delta)$.  
Algorithm~\ref{alg:cridaa} summarizes the complete closed-loop pipeline used in the experiments, including the graph update, criticality evaluation, short-horizon prediction, assignment, execution, and logging.  

\subsection{Admissible Bipartite Domain}
\label{subsec:admissible_domain}

Recall that $\mathcal{D}$ and $V_k$ denote the defender index set and the detected active attacker set at $t_k$, respectively. We define the following binary assignment variables:
\begin{equation}
\delta_{ji}(t_k)\in\{0,1\},\qquad j\in\mathcal{D},\ i\in V_k,
\label{eq:delta_def}
\end{equation}
where $\delta_{ji}(t_k)=1$ indicates that defender $j$ is assigned to attacker $i$ during window $k$. Candidate assignments are restricted to the admissible bipartite domain $\mathcal{B}_{\mathcal{DA}}(t_k)\subseteq \mathcal{D}\times V_k$ defined by window feasibility and the robust tube margin as described in Section~\ref{sec:def_time_optimal}, as given by  
\begin{multline}
\mathcal{B}_{\mathcal{DA}}(t_k)
=
\Bigl\{(j,i)\in \mathcal{D}\times V_k:\ T_{ji}(t_k)\le \Delta,\\\ \min_{\tau\in[t_k,t_k+\Delta]}\|\hat x_{\mathrm{rel}}^{ij}(\tau\mid t_k)\|\le r_{\mathrm{cap}}-r_t \Bigr\}.
\label{eq:BDA_def}
\end{multline}

\subsection{Current-Time MIQP}
\label{subsec:miqp_current}

Let $C_i(t_k)$ be the combined current criticality of attacker $i$ as described in Section~\ref{sec:criticality_framework}.  
 Let $T_{ji}(t_k)$ denote the interception-time surrogate for defender $j$ to capture attacker $i$ (Section~\ref{sec:def_time_optimal}).  
Given a large constant $M\gg 0$, we define the following interception cost:
\begin{equation}
C^I_{ji}(t_k)=
\begin{cases}
T_{ji}(t_k), & T_{ji}(t_k) < T_i(t_k),\\
M, & \text{otherwise},
\end{cases}
\label{eq:CI_def}
\end{equation}
where $T_i(t_k)$ denotes the attacker time-to-breach proxy employed by the criticality layer.  

To encode collision-coupled assignment penalties, we introduce a nonnegative quadratic cost $COL_{jj'}^{ii'}(t_k)\ge 0$, representing the penalty incurred when defenders $j\neq j'$ are assigned to attackers $i\neq i'$ and a collision is predicted under nominal pursuit.  
One admissible choice (consistent with the original exposition) is
\begin{equation}
COL_{jj'}^{ii'}(t_k)=
\begin{cases}
\dfrac{1}{T^{C}_{jj'}(t_k)}, & \text{if collision predicted},\\
\hspace{5mm} 0, & \text{otherwise},
\end{cases}
\label{eq:COL_def}
\end{equation}
where $T^{C}_{jj'}(t_k)$ is the predicted time-to-collision. The current-time assignment is obtained from the MIQP
\begin{subequations}
\label{eq:cridaa_miqp}
\begin{align}
\begin{split}\min_{\{\delta_{ji}(t_k)\}}
\quad
& \sum_{j\in\mathcal{D}}\ \sum_{i\in V_k}
\delta_{ji}(t_k)\Bigl(
w_T\,C^I_{ji}(t_k) - w_C\,C_i(t_k)
\Bigr)\\
& + \!\!\!\!\sum_{\substack{j,j'\in\mathcal{D}\\ j<j'}}\ \ \sum_{\substack{i,i'\in V_k\\ i\neq i'}}
\delta_{ji}(t_k)\,\delta_{j'i'}(t_k)\,COL_{jj'}^{ii'}(t_k)
\label{eq:cridaa_obj}\end{split}\\
\text{s.t.}\quad
& \sum_{i\in V_k}\delta_{ji}(t_k)\le 1,\ \forall\, j\in\mathcal{D},
\label{eq:cridaa_con_def}\\
& \sum_{j\in\mathcal{D}}\delta_{ji}(t_k)\le 1,\ \forall\, i\in V_k,
\label{eq:cridaa_con_att}\\
& \delta_{ji}(t_k)=0,\ \forall\,(j,i)\notin\mathcal{B}_{\mathcal{DA}}(t_k),\\
&\delta_{ji}(t_k)\in\{0,1\},\ \forall\,(j,i)\in\mathcal{B}_{\mathcal{DA}}(t_k).
\label{eq:cridaa_con_domain}
\end{align}
\end{subequations}

\subsection{Predictive Extension and Switching Regulation}
\label{subsec:predictive_switching}

Let $C_i^{\mathrm{fut}}(t_k)$ be the predicted future criticality score over the horizon $\Delta_{\mathrm{pred}}$ as explained in Section~\ref{sec:criticality_framework}.  
We define the following assignment-relevant criticality: 
\begin{align*}
& C_i^{\mathrm{asgn}}(t_k) =
w_{\mathrm{cur}}\,C_i(t_k) + w_{\mathrm{fut}}\,C_i^{\mathrm{fut}}(t_k), \\
& w_{\mathrm{cur}}+w_{\mathrm{fut}}=1,
% \label{eq:assign_crit_mix}
\end{align*}
which matches the current–future mixing used by the experimental solver. To regulate reassignment transients, we define the following switching penalty:
\begin{equation}
\Phi_{ji}(t_k)= w_{\mathrm{sw}}\,S_{ji}(t_k)\,\mathbf{1}\{a_j(t_k)\neq i\},
\label{eq:switch_penalty}
\end{equation}
where $a_j(t_k)\in V_k\cup\{\varnothing\}$ denotes the previous assignment of defender $j$ and $S_{ji}(t_k)\ge 0$ is a switching-cost surrogate.

The full MIQP in \eqref{eq:cridaa_miqp} is computationally demanding because of the quadratic collision-coupling terms and binary decision variables. Accordingly, Algorithm~\ref{alg:cridaa} implements a centralized assignment stage with infeasible-pair penalties and detection weighting, followed by a switching heuristic to prevent unnecessary reassignment oscillations. Collision avoidance is enforced at execution time by the defender CBF--QP filter in Section~\ref{sec:def_time_optimal}. This separation of planning and safety preserves tractability while retaining the collision-avoidance guarantees needed in the closed-loop experiments.

We record the following variables by solving the MIQP: (i) the number of executed defender–attacker pairs $m_k$, (ii) the window capacity $C_k=\min\{N_D,|V_k|\}$, and (iii) the engagement ratio $\eta_k=m_k/C_k$.

\begin{figure}[t]
\centering
\begin{minipage}{0.96\linewidth}
\begin{algorithm}[H]
\caption{Pseudocode of the CRIDAA algorithm}
\label{alg:cridaa}
\small
\textbf{Input:} Attacker states $\{x_{A,i}(t_0)\}$, defender states $\{x_{D,j}(t_0)\}$, window length $\Delta$, prediction horizon $\Delta_{\mathrm{pred}}$, weights.  \\
\textbf{Output:} Defender assignments $\{\delta_{ji}(t_k)\}$, defender trajectories, engagement/success logs.
\begin{algorithmic}[1]
\STATE Initialize active attacker set $\mathcal{A}(t_0)\leftarrow\{1,\dots,N_A\}$ and defender set $\mathcal{D}\leftarrow\{1,\dots,N_D\}$.  
\STATE Initialize previous assignments $a_j\leftarrow\varnothing$ and cooldown timers $\mathrm{cd}_j\leftarrow 0$ for all $j\in\mathcal{D}$.  
\FOR{$k=0,1,2,\dots$}
    \STATE Set $t_k\leftarrow k\Delta$ and compute detected set $V_k=\widehat{\mathcal{A}}_k\subseteq\mathcal{A}(t_k)$.  
    \STATE Construct $G_k$ (Section~\ref{sec:prob_graph_model}) and compute centrality/risk features.
    \STATE Compute current criticality $\{C_i(t_k)\}_{i\in V_k}$ and predicted criticality $\{C_i^{\mathrm{fut}}(t_k)\}_{i\in V_k}$.  
    \STATE Compute interception times $\{T_{ji}(t_k)\}$ and admissible domain $\mathcal{B}_{\mathcal{DA}}(t_k)$.  
    \STATE Solve assignment (MIQP \eqref{eq:cridaa_miqp} to obtain $\{a_j(t_k)\}_{j\in\mathcal{D}}$.  
    \STATE If reassignment switching is active, update $\{a_j(t_k)\}$ via thresholded switching with cooldown.  
    \STATE Execute one-step defender controls toward $a_j(t_k)$, using the safety filter of Section~\ref{sec:def_time_optimal}.  
    \STATE Propagate attacker dynamics and remove intercepted/breached attackers from $\mathcal{A}(t_k)$.  
    \STATE Update logs $\{m_k,C_k,\eta_k,p_{min}\}$ and switching events.  
\ENDFOR
\STATE \textbf{return} Assignment/trajectory histories and engagement outcomes.
\end{algorithmic}
\end{algorithm}
\end{minipage}
\end{figure}

\section{Theoretical Analysis}
\label{sec:THan}

This section provides window-level performance guarantees for the closed-loop defense architecture in Algorithm~\ref{alg:cridaa}.  
Specifically, it (i) lower-bounds the expected per-window depletion of the live attacker count under an execution fraction $\eta_k$ and a per-engagement success floor $p^\star(t_k)$, (ii) establishes stochastic absorption (eventual neutralization) and corresponding bounds on the expected absorption time under uniform drift conditions, and (iii) shows that, after first detection, the within-window capture condition $\mathrm{IF}_k$ is triggered in finite expected time under a uniform tube-hold probability floor.  
These results are stated at the decision-epoch level with filtration $\mathcal{F}_k$ and are consistent with the logged quantities $(m_k,C_k,\eta_k)$ and the first-detection epoch $\tau_1$ used in the simulator.

% ============================================================
\subsection{Neutralization Bound and Depletion Drift}
\label{subsec:per-window-drift}

\paragraph*{Windowed setup and logged quantities}
For a given window length $\Delta>0$, we define the decision epochs $t_k=k\Delta$, $k\in\mathbb{Z}_{\ge 0}$.
Let $(\mathcal{F}_{t})_{t\ge 0}$ denote the filtration generated by the closed-loop process and write $\mathcal{F}_k:=\mathcal{F}_{t_k}$.
Let $A_k=|\mathcal{A}(t_k)|\in\mathbb{Z}_{\ge 0}$ be the number of active attackers at the beginning of window $k$, and let $|\mathcal{D}|\in\mathbb{Z}_{\ge 1}$ be the number of defenders.
We define the algebraic parallel capacity as
\begin{equation}
  C_k := \min\{|\mathcal{D}|,A_k\}.
  \label{eq:Ck_def_analysis}
\end{equation}
In the experimental pipeline, $C_k$ is logged as the window capacity (\texttt{Chistory}). 

\paragraph*{Executed engagements and neutralizations}

Let $J_k\subseteq \mathcal{D}\times \mathcal{A}(t_k)$ denote the set of \emph{executed} defender--attacker engagements on $[t_k,t_{k+1})$ after within-window retasking as in Algorithm~\ref{alg:cridaa}, and let $m_k=|J_k|$ be the number of executed engagements (\texttt{mhistory}). For each executed engagement $(j,i)\in J_k$, we define the following Bernoulli indicator: 
\begin{equation*}
  Y_{k,ji} = \mathbf{1}\{\text{attacker $i$ is neutralized during } [t_k,t_{k+1})\}.
  \label{eq:Ykji_def}
\end{equation*}
The number of neutralizations completed in window $k$ is given by
\begin{equation}
  K_k := \sum_{(j,i)\in J_k} Y_{k,ji}.
  \label{eq:Kk_def}
\end{equation}
Then, $0\le K_k \le m_k$ and the attacker count satisfies
\begin{equation}
  A_{k+1} = A_k - K_k.
  \label{eq:A_update}
\end{equation}

\paragraph*{Per-engagement success floor}
Assuming that there exists a time-varying conservative success lower bound $p^\star(t_k)\in(0,1]$ such that
\begin{equation}
  \mathbb{E}[Y_{k,ji}\mid \mathcal{F}_k] \ge p^\star(t_k),
  \qquad \forall (j,i)\in J_k.
  \label{eq:Y_floor}
\end{equation}
By linearity of conditional expectation,
\begin{equation}
  \mathbb{E}[K_k\mid \mathcal{F}_k]
  = \sum_{(j,i)\in J_k} \mathbb{E}[Y_{k,ji}\mid \mathcal{F}_k]
  \ge p^\star(t_k)\,m_k.
  \label{eq:EK_lower}
\end{equation}

\paragraph*{Planned vs.\ executed pairs and switching losses}
Let $\widetilde{J}_k$ be the set of \emph{planned} feasible pairs given by the assignment stage at epoch $t_k$ (MIQP in the ideal formulation, or linear assignment with post-processing in the experimental pipeline; cf.\ Algorithm~\ref{alg:cridaa}), with $\widetilde{m}_k=|\widetilde{J}_k|$.  
If we denote the number of planned engagements canceled due to within-window switching/retasking by $\delta_k^{\mathrm{sw}}$, we have
\begin{equation}
  m_k = \widetilde{m}_k - \delta_k^{\mathrm{sw}},
  \qquad
  0\le m_k \le \widetilde{m}_k \le C_k.
  \label{eq:m_decomp}
\end{equation}
In the experimental pipeline, switching events are explicitly logged (\texttt{switchinghistory}) and regulated by a threshold and cool down mechanism.  

\paragraph*{Execution fraction}
We define the per-epoch execution fraction as
\begin{equation}
  \eta_k =
  \begin{cases}
    \dfrac{m_k}{C_k}, & C_k>0,\\[6pt]
    \hspace{2mm} 0, & C_k=0,
  \end{cases}
  \qquad \Rightarrow\qquad \eta_k\in[0,1].
  \label{eq:eta_def_analysis}
\end{equation}
This quantity measures what fraction of the ideal parallel capacity $C_k$ is realized in window $k$ after accounting for feasibility and switching, and it is logged as \texttt{etahistory}.  

\begin{lemma}[Positivity of $\eta_k$ under bounded cancellation]
\label{thm:eta_floor}
Let $\underline{\eta}\in(0,1]$ and $\rho\in[0,1)$ be constants. If $C_k>0$, then 
\begin{equation}
  \widetilde{m}_k \ge \underline{\eta}\,C_k
  \qquad\text{and}\qquad
  \delta_k^{\mathrm{sw}} \le \rho\,C_k.
  \label{eq:eta_assumptions}
\end{equation}
Then, for all $k$ with $C_k>0$,
\begin{equation}
  m_k \ge (\underline{\eta}-\rho)\,C_k
  \quad\Rightarrow\quad
  \eta_k \ge \eta = (\underline{\eta}-\rho) > 0,
  \label{eq:eta_floor_conclusion}
\end{equation}
provided $\underline{\eta}>\rho$.
\end{lemma}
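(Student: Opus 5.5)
The bound follows directly from the decomposition \eqref{eq:m_decomp} together with the two hypotheses in \eqref{eq:eta_assumptions}. Fix any epoch $k$ with $C_k>0$. By \eqref{eq:m_decomp}, the number of executed engagements equals the number of planned engagements minus the within-window cancellations:
\begin{equation*}
m_k = \widetilde{m}_k - \delta_k^{\mathrm{sw}}.
\end{equation*}
Bound $\widetilde{m}_k$ below by $\underline{\eta}\,C_k$ and $\delta_k^{\mathrm{sw}}$ above by $\rho\,C_k$. Subtraction reverses the direction of the second inequality, so this yields
\begin{equation*}
m_k \ge \underline{\eta}\,C_k - \rho\,C_k = (\underline{\eta}-\rho)\,C_k,
\end{equation*}
which is the first claim in \eqref{eq:eta_floor_conclusion}.

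For the execution fraction, since $C_k>0$ the first branch of \eqref{eq:eta_def_analysis} applies, so $\eta_k = m_k/C_k$. Dividing the previous inequality by the positive quantity $C_k$ preserves its direction and gives
\begin{equation*}
\eta_k \ge \underline{\eta}-\rho =: \eta.
\end{equation*}
Strict positivity of $\eta$ is exactly the proviso $\underline{\eta}>\rho$. One can also check that $\eta\le 1$, since $\underline{\eta}\le 1$ and $\rho\ge 0$. This is consistent with $\eta_k\in[0,1]$ and with the constraint $m_k\le\widetilde{m}_k\le C_k$ from \eqref{eq:m_decomp}. So the hypotheses are not contradictory, and the bound is nontrivial precisely when $\underline{\eta}>\rho$.

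None of the steps is a real obstacle; the argument is pure arithmetic. The only point needing care is the interpretation of the hypotheses. The inequalities in \eqref{eq:eta_assumptions} are standing assumptions that hold for every $k$ with $C_k>0$, with $\underline{\eta}$ and $\rho$ independent of $k$. That uniformity is what makes the floor $\eta$ uniform across epochs, which is what the later drift and absorption results need. The argument is also pathwise: it holds on every sample path and uses no probabilistic structure from $\mathcal{F}_k$. The epochs with $C_k=0$ are excluded by hypothesis, so the convention $\eta_k=0$ there causes no difficulty.
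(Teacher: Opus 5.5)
Your proof is correct and matches the paper's argument exactly: you substitute the two hypotheses into the decomposition $m_k=\widetilde{m}_k-\delta_k^{\mathrm{sw}}$ to get $m_k\ge(\underline{\eta}-\rho)C_k$, then divide by $C_k>0$. Your extra remarks, that the constants are uniform in $k$ and that $\eta\le 1$, are accurate but not needed for the lemma.
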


\begin{proof}
From \eqref{eq:m_decomp} and \eqref{eq:eta_assumptions},
$
m_k = \widetilde{m}_k-\delta_k^{\mathrm{sw}}
\ge \underline{\eta}C_k - \rho C_k
= (\underline{\eta}-\rho)C_k,
$
and dividing by $C_k$ gives \eqref{eq:eta_floor_conclusion}.
\end{proof}

\begin{lemma}[Execution fraction and per-window drift]
\label{thm:drift_one_step}
With algebraic capacity $C_k$ and execution fraction $\eta_k$, the attacker count satisfies the conditional one-step drift
\begin{equation}
  \mathbb{E}[A_{k+1}\mid \mathcal{F}_k]
  \le
  A_k - p^\star(t_k)\,\eta_k\,\min\{|\mathcal{D}|,A_k\}.
  \label{eq:drift_general}
\end{equation}
\end{lemma}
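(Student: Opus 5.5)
The plan is to take conditional expectations in the update law \eqref{eq:A_update}, lower-bound the expected number of neutralizations via \eqref{eq:EK_lower}, and then rewrite $m_k$ through the execution fraction. The first step uses that $A_k$ is $\mathcal{F}_k$-measurable: it is the active attacker count at the start of window $k$, which is determined by the closed-loop history up to $t_k$. Conditioning \eqref{eq:A_update} on $\mathcal{F}_k$ therefore gives
\begin{equation*}
\mathbb{E}[A_{k+1}\mid\mathcal{F}_k] = A_k - \mathbb{E}[K_k\mid\mathcal{F}_k].
\end{equation*}

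Next I invoke \eqref{eq:EK_lower}, which rests on the success floor \eqref{eq:Y_floor}. This gives $\mathbb{E}[K_k\mid\mathcal{F}_k]\ge p^\star(t_k)\,m_k$. The subtle point is that \eqref{eq:EK_lower} treats the executed set $J_k$, and hence $m_k$, as known at $t_k$. For the sum manipulation to be valid, $J_k$ must be $\mathcal{F}_k$-measurable, or at least $m_k$ must be. I will state this as the standing interpretation of the paper: the assignment and switching decisions are committed at the start of the window. If within-window retasking makes $m_k$ only $\mathcal{F}_{k+1}$-measurable, the argument is repaired by reading \eqref{eq:Y_floor} as a floor summed over engagements. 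That reading gives $\mathbb{E}[K_k\mid\mathcal{F}_k]\ge p^\star(t_k)\,\mathbb{E}[m_k\mid\mathcal{F}_k]$, and the conclusion then holds with $\eta_k$ replaced by its conditional mean. I expect this measurability issue to be the only real obstacle; everything else is bookkeeping.

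Finally I substitute $m_k=\eta_k C_k$ using \eqref{eq:eta_def_analysis}, splitting into two cases.
\begin{itemize}
\item If $C_k>0$, the identity $m_k=\eta_k C_k$ holds by the definition of $\eta_k$.
\item If $C_k=0$, then $A_k=0$ because $|\mathcal{D}|\ge1$. This forces $m_k=0$ by \eqref{eq:m_decomp}, and the identity $m_k=\eta_k C_k$ holds trivially with both sides zero.
\end{itemize}
Since $C_k=\min\{|\mathcal{D}|,A_k\}$ by \eqref{eq:Ck_def_analysis}, combining the three steps yields
\begin{equation*}
\mathbb{E}[A_{k+1}\mid\mathcal{F}_k]\le A_k-p^\star(t_k)\,\eta_k\min\{|\mathcal{D}|,A_k\},
\end{equation*}
which is \eqref{eq:drift_general}. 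I will also note the corollary obtained by combining this with Lemma~\ref{thm:eta_floor}: $\eta_k$ may be replaced by the constant $\eta$ whenever $A_k>0$.
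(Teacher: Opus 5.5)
Your proposal is correct and follows essentially the same route as the paper: condition the update law $A_{k+1}=A_k-K_k$ on $\mathcal{F}_k$, apply the bound $\mathbb{E}[K_k\mid\mathcal{F}_k]\ge p^\star(t_k)\,m_k$, and substitute $m_k=\eta_k C_k$ with $C_k=\min\{|\mathcal{D}|,A_k\}$. Your explicit handling of the case $C_k=0$ and your remark that $J_k$ (hence $m_k$) must be $\mathcal{F}_k$-measurable for the linearity step both add rigor; the paper's proof takes these points for granted.
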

\begin{proof}
Using \eqref{eq:A_update} and $K_k\ge 0$,
\begin{multline}
\mathbb{E}[A_{k+1}\mid \mathcal{F}_k]
=
A_k - \mathbb{E}[K_k\mid \mathcal{F}_k]
\le
A_k - p^\star(t_k)\,m_k\\
=
A_k - p^\star(t_k)\,\eta_k\,C_k.
\end{multline}
Since $C_k=\min\{|\mathcal{D}|,A_k\}$, \eqref{eq:drift_general} follows.
\end{proof}

% \paragraph*{Effectiveness parameter}
\noindent We define the per-window effectiveness as
\begin{equation}
  \Theta_k = p^\star(t_k)\,\eta_k.
  \label{eq:Theta_k_def}
\end{equation}

\begin{corollary}[Mixed linear$\rightarrow$exponential regimes]
\label{cor:mixed_regimes}
For each decision window $k$:
\begin{enumerate}
\item[\textnormal{(i)}] If $A_k\ge |\mathcal{D}|$ (defenders saturated), then
\begin{equation}
  \mathbb{E}[A_{k+1}-A_k\mid \mathcal{F}_k] \le -\Theta_k\,|\mathcal{D}|.
\end{equation}
\item[\textnormal{(ii)}] If $A_k<|\mathcal{D}|$ (sub-capacity), then
\begin{equation}
  \mathbb{E}[A_{k+1}\mid \mathcal{F}_k] \le (1-\Theta_k)\,A_k.
\end{equation}
\end{enumerate}
\end{corollary}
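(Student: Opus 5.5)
The corollary follows directly from Lemma~\ref{thm:drift_one_step} once we evaluate $\min\{|\mathcal{D}|,A_k\}$ in the two regimes and substitute the definition $\Theta_k=p^\star(t_k)\eta_k$ from \eqref{eq:Theta_k_def}. The plan is to start from the drift inequality \eqref{eq:drift_general},
\begin{equation*}
\mathbb{E}[A_{k+1}\mid\mathcal{F}_k]\le A_k-\Theta_k\min\{|\mathcal{D}|,A_k\},
\end{equation*}
and then split on whether $A_k\ge|\mathcal{D}|$ or $A_k<|\mathcal{D}|$. Both events are $\mathcal{F}_k$-measurable, since $A_k$ is known at epoch $t_k$, so conditioning on $\mathcal{F}_k$ lets us treat the case split as deterministic. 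The same holds for $\Theta_k$: $p^\star(t_k)$ is an $\mathcal{F}_k$-measurable floor, and $\eta_k$ is determined by the executed set fixed at the start of the window, as used in Lemma~\ref{thm:drift_one_step}.

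\emph{Case (i), $A_k\ge|\mathcal{D}|$.} Here $C_k=\min\{|\mathcal{D}|,A_k\}=|\mathcal{D}|$, so the drift bound reads $\mathbb{E}[A_{k+1}\mid\mathcal{F}_k]\le A_k-\Theta_k|\mathcal{D}|$. Since $A_k$ is $\mathcal{F}_k$-measurable, $\mathbb{E}[A_{k+1}-A_k\mid\mathcal{F}_k]=\mathbb{E}[A_{k+1}\mid\mathcal{F}_k]-A_k$. Moving $A_k$ to the left-hand side therefore gives the stated linear decrement.

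\emph{Case (ii), $A_k<|\mathcal{D}|$.} Now $C_k=A_k$, and the bound becomes $A_k-\Theta_kA_k=(1-\Theta_k)A_k$, which is the geometric contraction. As a side remark, $\Theta_k\in[0,1]$ because $p^\star(t_k)\in(0,1]$ and $\eta_k\in[0,1]$ by \eqref{eq:eta_def_analysis}. The factor $1-\Theta_k$ is therefore nonnegative and the bound is meaningful. The degenerate case $A_k=0$ (so $C_k=0$ and $\eta_k=0$) is trivially consistent, since both sides vanish.

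There is no substantive obstacle. The only point needing care is the justification that $A_k$, $\eta_k$, and $p^\star(t_k)$ may be pulled out of the conditional expectation as $\mathcal{F}_k$-measurable quantities. I would state this explicitly, inheriting it from the setup of Lemma~\ref{thm:drift_one_step}.
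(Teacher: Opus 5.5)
Your proposal is correct and matches the paper's proof. The paper likewise substitutes $C_k=\min\{|\mathcal{D}|,A_k\}=|\mathcal{D}|$ or $C_k=A_k$ into the drift bound \eqref{eq:drift_general} of Lemma~\ref{thm:drift_one_step}, with $\Theta_k=p^\star(t_k)\eta_k$. Your extra remarks are harmless refinements: the $\mathcal{F}_k$-measurability of $A_k$ is the only one the corollary step itself uses, when you rewrite $\mathbb{E}[A_{k+1}-A_k\mid\mathcal{F}_k]$; the others are $\Theta_k\in[0,1]$ and the $A_k=0$ case.
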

\begin{proof}
The proof follows immediately from \eqref{eq:drift_general} and the identity $C_k=\min\{|\mathcal{D}|,A_k\}$:
if $A_k\ge |\mathcal{D}|$, then $C_k=|\mathcal{D}|$; if $A_k<|\mathcal{D}|$, then $C_k=A_k$.
\end{proof}
% ============================================================
\subsection{Absorption and Expected Time}
\label{subsec:absorption-expected-time}

Let $|\mathcal{D}|\in\mathbb{Z}_{\ge 1}$ and $A_k\in\mathbb{Z}_{\ge 0}$ denote the number of defenders and active attackers at decision epoch $t_k$, respectively. Let $(\mathcal{F}_{t})_{t\ge 0}$ be the filtration generated by the closed-loop system up to continuous time $t$, and we define the window-level filtration as $\mathcal{F}_k=\mathcal{F}_{t_k}$. Let $C_k = \min\{|\mathcal{D}|,A_k\}$, and $\eta_k\in(0,1]$ be the \emph{execution efficiency} (fraction of feasible concurrent engagements actually executed). $p^\star(t_k)\in(0,1]$ denotes a conservative \emph{tube-hold / success} lower bound for any executed engagement in window $k$. Under the following design enforcing uniform floors: $\eta_k \ge \eta \in (0,1]$,
  $p^\star(t_k) \ge p_{\min}\in(0,1], \ \forall k\ge 0$, the per-window effectiveness admits the uniform bound
\begin{equation}
\label{eq: theta}
  \Theta_k = p^\star(t_k)\eta_k \ge \Theta, \ \text{where}\ \Theta = p_{\min}\eta > 0.
\end{equation}
The expected depletion drift can then be written~as
\begin{equation}
\label{eq:drift-min-form}
  \!\!\!\!\!\mathbb{E}[A_{k+1}\,{\mid}\,\mathcal{F}_k]
  \,{\le}\,  A_k \,{-}\, \Theta_k\,C_k
  \,{\le}\,
  A_k \,{-}\, \Theta\min\{|\mathcal{D}|,A_k\}.
\end{equation}

The following theorem is a specialization of standard drift-based hitting-time arguments: general quantitative bounds for first-hitting and occupation times under uniform negative drift (plus increment-tail conditions) are provided in ~\cite{c20}, while Foster--Lyapunov drift criteria for Markov chains (including state-dependent forms) are developed in ~\cite{c24}.
Related moment/regularity conditions for negative-drift processes with bounded $L^p$ increments are discussed in ~\cite{c25}.
In our setting, $(A_k)_{k\ge 0}$ models the number of surviving attackers and \eqref{eq:drift-min-form} encodes that whenever at least one attacker remains, the defender policy removes attackers at a strictly positive rate in expectation, with at most $|\mathcal{D}|$ removals per step.

\begin{theorem}[Stochastic absorption under uniform drift]
\label{thm:absorption_uniform_drift}
Let $(A_k)_{k\ge 0}$ be a discrete-time process with values in $\mathbb{Z}_+$ adapted to a filtration $(\mathcal{F}_k)_{k\ge 0}$, and the absorption time
\begin{equation}
  T_0 \;=\; \inf\{k\ge 0:\ A_k = 0\}.
\end{equation}
If the following holds: 
\begin{enumerate}
  \item[\textnormal{(i)}] \textit{Absorption at $0$:} If $A_k=0$, then $A_{k+1}=0$ a.s.
  \item[\textnormal{(ii)}] \textit{Uniform negative drift outside $\{0\}$:} There exists $\Theta>0$ such that for all $k\ge 0$,
  \begin{equation}
    \label{eq:uniform-drift}
    \!\!\!\!\!\!\!\mathbb{E}\!\bigl[A_{k+1}{-}A_k {\mid }\mathcal{F}_k\bigr]
    {\le} -\Theta
    \  \text{on the event} \{A_k>0\}.
  \end{equation}
  \item[\textnormal{(iii)}] \textit{Bounded increments:} There exists $C<\infty$ such that for all $k\ge 0$, $|A_{k+1}-A_k| \;\le\; C$ a.s. 
\end{enumerate}
Then, $\{0\}$ is absorbing and the process exhibits stochastic absorption,i.e., $\mathbb{P}(T_0<\infty)=1$ and
  $\mathbb{E}[T_0]\le \frac{\mathbb{E}[A_0]}{\Theta}<\infty.$
\end{theorem}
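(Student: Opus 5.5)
We use the standard supermartingale argument with the stopped process $M_k := A_{k\wedge T_0} + \Theta\,(k\wedge T_0)$. The idea is that $A_k$ loses at least $\Theta$ in expectation each step while it is positive, so adding back $\Theta$ per elapsed step before absorption gives a process that cannot grow in expectation.

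\textbf{Step 1: absorption.} Hypothesis (i) shows at once that $\{0\}$ is absorbing. It also gives $A_k=0$ for all $k\ge T_0$, hence $A_{k\wedge T_0}=A_k$.

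\textbf{Step 2: supermartingale property.} We show $\mathbb{E}[M_{k+1}\mid\mathcal{F}_k]\le M_k$, splitting on the $\mathcal{F}_k$-measurable event $\{T_0>k\}=\{A_0>0,\dots,A_k>0\}$. On $\{T_0\le k\}$ we have $M_{k+1}=M_k$. On $\{T_0>k\}$ we have $A_k>0$ and $M_{k+1}-M_k = A_{k+1}-A_k+\Theta$, so (ii) gives $\mathbb{E}[M_{k+1}-M_k\mid\mathcal{F}_k]\le 0$. For these conditional expectations to be well defined we need integrability. If $\mathbb{E}[A_0]=\infty$ the claimed bound is trivial. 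Otherwise (iii) gives $A_k\le A_0+kC$, which is integrable, and $M_k$ is integrable as well. Taking expectations and iterating yields
\begin{equation*}
\Theta\,\mathbb{E}[k\wedge T_0] \le \mathbb{E}[M_k]-\mathbb{E}[A_{k\wedge T_0}] \le \mathbb{E}[M_0] = \mathbb{E}[A_0],
\end{equation*}
using $A_{k\wedge T_0}\ge 0$, which holds because $A_k\in\mathbb{Z}_+$.

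\textbf{Step 3: pass to the limit.} Let $k\to\infty$. Since $k\wedge T_0\uparrow T_0$, monotone convergence gives $\mathbb{E}[T_0]\le \mathbb{E}[A_0]/\Theta$. When $\mathbb{E}[A_0]<\infty$, this shows $\mathbb{E}[T_0]<\infty$, and in particular $\mathbb{P}(T_0<\infty)=1$.

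\textbf{Main obstacle.} The delicate point is the measurability and integrability bookkeeping in Step 2. One must check that $\{T_0>k\}\in\mathcal{F}_k$, which holds by adaptedness, and that the drift condition (ii) applies on that event. Hypothesis (iii) is used only for integrability. Note also that the finiteness claims "$<\infty$" and $\mathbb{P}(T_0<\infty)=1$ require $\mathbb{E}[A_0]<\infty$. This holds in the application, where $A_0\le N_A$ is deterministic and bounded. I would state it as an implicit assumption, or else prove $\mathbb{P}(T_0<\infty)=1$ by conditioning on $\{A_0\le n\}\in\mathcal{F}_0$ and rerunning the argument on each such event.
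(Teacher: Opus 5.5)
Your proposal is correct and follows the paper's proof essentially step for step. It uses the same stopped process $M_k=A_{k\wedge T_0}+\Theta(k\wedge T_0)$, the same supermartingale check split on $\{T_0>k\}$, and the same monotone-convergence passage to $\mathbb{E}[T_0]\le\mathbb{E}[A_0]/\Theta$. Your remark that the final ``$<\infty$'' (and the integrability of $M_k$) silently requires $\mathbb{E}[A_0]<\infty$ is a fair point that the paper glosses over, and your truncation on $\{A_0\le n\}\in\mathcal{F}_0$ recovers $\mathbb{P}(T_0<\infty)=1$ without that assumption.
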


\begin{remark}[Interpretation under stochastic variability]
The conclusion $\mathbb{P}(T_0<\infty)=1$ asserts \textbf{eventual} absorption with probability one.
This does not preclude finite-horizon stochastic variability: for any fixed horizon $K<\infty$, it may still hold that $\mathbb{P}(A_K=0)<1$.
The uniform drift $\Theta>0$ aggregates compounded inefficiencies (e.g., imperfect execution $\eta_k\le 1$ and uncertainty $p^\star(t_k)\le 1$) while still guaranteeing eventual neutralization almost surely.
\end{remark}

\begin{proof}
The proof is a standard drift/optional-stopping argument for a nonnegative supermartingale (see~\cite{c20} for related drift-based hitting-time bounds). We define the following stopped process: $\widetilde{A}_k = A_{k\wedge T_0},\ k\ge 0$.
Then, $(\widetilde{A}_k)_{k\ge 0}$ is nonnegative, integer-valued, and adapted to $(\mathcal{F}_k)$.
By (i), once the process hits $0$ it remains there, hence $\widetilde{A}_k=0, \forall k\ge T_0$. On $\{k<T_0\}$, we have $A_k>0$, so \eqref{eq:uniform-drift} yields
\begin{equation}
  \mathbb{E}\!\bigl[\widetilde{A}_{k+1}-\widetilde{A}_k\mid \mathcal{F}_k\bigr]
  =
  \mathbb{E}\!\bigl[A_{k+1}-A_k\mid \mathcal{F}_k\bigr]
  \le -\Theta.
\end{equation}
On $\{k\ge T_0\}$, we have $\widetilde{A}_{k+1}=\widetilde{A}_k=0$ and therefore
$\mathbb{E}[\widetilde{A}_{k+1}-\widetilde{A}_k\mid \mathcal{F}_k]=0$.
Combining both cases gives
\begin{equation}
  \label{eq:drift-stopped}
  \mathbb{E}\!\bigl[\widetilde{A}_{k+1}-\widetilde{A}_k\mid \mathcal{F}_k\bigr]
  \le -\Theta\,\mathbf{1}_{\{k<T_0\}},
  \qquad \forall k\ge 0.
\end{equation}

We now define $ M_k = \widetilde{A}_k + \Theta (k\wedge T_0), \ k\ge 0.$ Using \eqref{eq:drift-stopped} and the identity
\(
(k+1)\wedge T_0 = (k\wedge T_0) + \mathbf{1}_{\{k<T_0\}},
\)
we have
\begin{align}
  & \mathbb{E}[M_{k+1}\mid\mathcal{F}_k]
  =
  \mathbb{E}[\widetilde{A}_{k+1}\mid\mathcal{F}_k] + \Theta\,\mathbb{E}[(k+1)\wedge T_0\mid\mathcal{F}_k] \nonumber\\
  & \le
  \widetilde{A}_k - \Theta\mathbf{1}_{\{k<T_0\}} + \Theta\bigl((k\wedge T_0){+}\mathbf{1}_{\{k<T_0\}}\bigr)
  = M_k. \!\!
\end{align}
Hence, $(M_k)_{k\ge 0}$ is a supermartingale.

From the assumption of bounded increments and nonnegativity, for each fixed $k$, the random variable $M_k$ is integrable, and therefore $ \mathbb{E}[M_k] \le \mathbb{E}[M_0] = \mathbb{E}[A_0], \ \forall k\ge 0.$
Since $M_k \ge \Theta (k\wedge T_0)$, we have $\Theta\,\mathbb{E}[k\wedge T_0] \le \mathbb{E}[M_k] \le \mathbb{E}[A_0],\ \forall k\ge 0,$
hence $\mathbb{E}[k\wedge T_0] \le \frac{\mathbb{E}[A_0]}{\Theta},
  \ \forall k\ge 0.$
Letting $k\to\infty$ and using monotone convergence (since $k\wedge T_0 \uparrow T_0$) yields
\begin{equation}
  \mathbb{E}[T_0] = \lim_{k\to\infty}\mathbb{E}[k\wedge T_0]
  \le \frac{\mathbb{E}[A_0]}{\Theta} < \infty.
\end{equation}
In particular, $\mathbb{P}(T_0=\infty)=0$ (otherwise $\mathbb{E}[T_0]=\infty$), so $\mathbb{P}(T_0<\infty)=1$.
Finally, (i) ensures $\{0\}$ is absorbing.
\end{proof}

\begin{remark}[Verifying the drift from the policy model]
If the closed-loop policy ensures \eqref{eq:drift-min-form} for some $\Theta>0$, then for all $k$ on $\{A_k>0\}$, we have $\min\{|\mathcal{D}|,A_k\}\ge 1$ and thus
\(
\mathbb{E}[A_{k+1}-A_k\mid \mathcal{F}_k]\le -\Theta,
\)
which is exactly \eqref{eq:uniform-drift}.
The use of a linear Lyapunov function $V(i)=i$ and the small/absorbing set $\{0\}$ is consistent with Foster--Lyapunov methodology for Markov chains; see ~\cite{c24}.
More general moment conditions for negative-drift processes with bounded increments appear in ~\cite{c25}.
\end{remark}

%------------------------------------------------
\begin{corollary}[{Explicit mixed linear--geometric bound on $\mathbb{E}[T_0]$}]
\label{cor:ET0_mixed_with_tau1}
Let $n_d(k)\in\mathbb{Z}_{\ge 0}$ denote the number of detected attackers at epoch $t_k$. 
We define the first-detection time and the neutralization (absorption) time as $\tau_1 \;=\; \inf\{k\ge 0:\; n_d(k)\ge 1\}$ and $T_0 \;=\; \inf\{k\ge 0:\; A_k = 0\}$, respectively.
Let $\tau_1$ denote an $(\mathcal{F}_k)$-stopping time and assume $\mathbb{P}\{\tau_1<\infty\}=1$ and a uniform effectiveness parameter $\Theta = p_{\min}\eta$ with $0<\Theta\le 1$.
Assuming that the mixed drift inequalities of Corollary~\ref{cor:mixed_regimes} hold \emph{for all epochs $k\ge \tau_1$} i.e., once detection is available 
(these are standard additive- and multiplicative-drift conditions for first-hitting time analysis under filtrations and stopping times; see \cite{c26,c21,c22}), then the following conditional bound holds a.s.:
\begin{multline}
\mathbb{E}[T_0 \mid \mathcal{F}_{\tau_1}]
\;\le\;
\tau_1
\;+\;
\frac{(A_{\tau_1}-|\mathcal{D}|)_{+}}{\Theta|\mathcal{D}|}\\
\;+\;
\frac{1+\ln\!\big(\min\{|\mathcal{D}|,\,\max\{1,A_{\tau_1}\}\}\big)}{\Theta}.
\label{eq:ET0_cond_bound_post_detect}
\end{multline}
Consequently,
\begin{multline}
\mathbb{E}[T_0]
\;\le\;
\mathbb{E}[\tau_1]
\;+\;
\mathbb{E}\!\left[\frac{(A_{\tau_1}-|\mathcal{D}|)_{+}}{\Theta|\mathcal{D}|}\right]\\
\;+\;
\mathbb{E}\!\left[\frac{1+\ln\!\big(\min\{|\mathcal{D}|,\,\max\{1,A_{\tau_1}\}\}\big)}{\Theta}\right].
\label{eq:ET0_uncond_bound_post_detect}
\end{multline}

\noindent Moreover, if $A_{\tau_1}\le A_0$ a.s. (e.g., when $A_k$ is nonincreasing), then
\begin{multline}
\mathbb{E}[T_0]
\;\le\;
\mathbb{E}[\tau_1]
\;+\;
\frac{(A_0-|\mathcal{D}|)_{+}}{\Theta|\mathcal{D}|}\\
\;+\;
\frac{1+\ln\!\big(\min\{|\mathcal{D}|,\,\max\{1,A_0\}\}\big)}{\Theta}.
\label{eq:ET0_bound_with_A0_fixed}
\end{multline}

Under immediate detection ($\tau_1=0$ a.s.), \eqref{eq:ET0_bound_with_A0_fixed} yields an explicit bound solely in terms of $A_0$,$|\mathcal{D}|$, and $\Theta)$.
\end{corollary}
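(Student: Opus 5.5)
The plan is to split $T_0$ at the stopping time $\tau_1$ and bound the post-detection depletion time in two phases, mirroring the two regimes of Corollary~\ref{cor:mixed_regimes}. Write $T_0 \le \tau_1 + (T_0-\tau_1)_+$. Conditionally on $\mathcal{F}_{\tau_1}$, consider the shifted process $B_n := A_{\tau_1+n}$ with filtration $\mathcal{G}_n:=\mathcal{F}_{\tau_1+n}$. Since $\tau_1$ is a stopping time and the drift inequalities hold for all $k\ge\tau_1$, these inequalities transfer to $(B_n,\mathcal{G}_n)$ with $B_0=A_{\tau_1}$, which is $\mathcal{F}_{\tau_1}$-measurable. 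Here we use $\Theta_k\ge\Theta$ from \eqref{eq: theta}.

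\textbf{Phase 1 (saturated, additive drift).} Let $\sigma:=\inf\{n: B_n<|\mathcal{D}|\}$. On $\{n<\sigma\}$, Corollary~\ref{cor:mixed_regimes}(i) gives $\mathbb{E}[B_{n+1}-B_n\mid\mathcal{G}_n]\le-\Theta|\mathcal{D}|$. Apply the supermartingale argument of Theorem~\ref{thm:absorption_uniform_drift} to $M_n := (B_{n\wedge\sigma}-|\mathcal{D}|+1)_+ + \Theta|\mathcal{D}|(n\wedge\sigma)$, or more simply to $B_{n\wedge\sigma}+\Theta|\mathcal{D}|(n\wedge\sigma)$, using $B_{n\wedge \sigma}\ge 0$ together with a target-level shift. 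Optional stopping plus monotone convergence then yields $\mathbb{E}[\sigma\mid\mathcal{F}_{\tau_1}]\le (A_{\tau_1}-|\mathcal{D}|)_+/(\Theta|\mathcal{D}|)$, up to an additive overshoot constant. The obstacle is exactly this overshoot: the naive argument yields $(A_{\tau_1}-|\mathcal{D}|+1)_+$. I would absorb the extra $+1$ (i.e., at most $1/\Theta \cdot 1/|\mathcal{D}|\le 1/\Theta$) into the "$1+$" of the third term, or else use the potential $\max\{B_n-|\mathcal{D}|,0\}$, whose expected decrease is still $\ge\Theta|\mathcal{D}|$ while $B_n\ge|\mathcal{D}|$ only if one jump does not overshoot too far. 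I will take the first route if the second fails.

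\textbf{Phase 2 (sub-capacity, multiplicative drift).} From time $\sigma$, with $B_\sigma\le\min\{|\mathcal{D}|,\max\{1,A_{\tau_1}\}\}=:s$, Corollary~\ref{cor:mixed_regimes}(ii) gives $\mathbb{E}[B_{n+1}\mid\mathcal{G}_n]\le(1-\Theta)B_n$. The process stays below $|\mathcal{D}|$ by \eqref{eq:A_update}, since $K_k\ge0$. Invoke the standard multiplicative drift theorem \cite{c21,c22} for integer processes with minimum positive value $1$. Iterating gives $\mathbb{E}[B_{\sigma+n}\mid\mathcal{G}_\sigma]\le(1-\Theta)^nB_\sigma\le e^{-\Theta n}s$, and Markov's inequality gives $\mathbb{P}(T_0>\tau_1+\sigma+n)\le\min\{1,se^{-\Theta n}\}$. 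Summing the tail over $n$, with the split at $n_0=\lceil\ln s/\Theta\rceil$, yields $\mathbb{E}[T_0-\tau_1-\sigma\mid\mathcal{G}_\sigma]\le (1+\ln s)/\Theta$. This uses $\sum_{n\ge n_0}e^{-\Theta n}s\le 1/(1-e^{-\Theta})\le 1/\Theta+1$, and the constant can be tightened via the continuous-time comparison standard in \cite{c21}.

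Adding the two phases via the tower property over $\mathcal{G}_\sigma\supseteq\mathcal{F}_{\tau_1}$ gives \eqref{eq:ET0_cond_bound_post_detect}. Taking expectations gives \eqref{eq:ET0_uncond_bound_post_detect}. Since both the second and third terms are nondecreasing in $A_{\tau_1}$, the hypothesis $A_{\tau_1}\le A_0$ gives \eqref{eq:ET0_bound_with_A0_fixed}; here $A_0$ is deterministic, so the expectations drop. Setting $\tau_1=0$ completes the statement.
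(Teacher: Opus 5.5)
Your decomposition matches the paper's: split at $\tau_1$, additive drift while saturated, multiplicative drift below capacity. You are also right that the crossing of $|\mathcal{D}|$ needs care. But your Phase~1 remedy does not work, so the proof has a genuine gap.

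\textbf{Where Phase~1 fails.} The overshoot is not of size one. In a saturated window up to $|\mathcal{D}|$ attackers can be removed, so $B_\sigma$ can take any value in $\{0,\dots,|\mathcal{D}|-1\}$.
\begin{itemize}
\item The stopped process $B_{n\wedge\sigma}+\Theta|\mathcal{D}|(n\wedge\sigma)$ only gives $\Theta|\mathcal{D}|\,\mathbb{E}[\sigma\mid\mathcal{F}_{\tau_1}]\le A_{\tau_1}-\mathbb{E}[B_\sigma\mid\mathcal{F}_{\tau_1}]$. This is an excess of up to $1/\Theta$, not $1/(\Theta|\mathcal{D}|)$, over $(A_{\tau_1}-|\mathcal{D}|)_+/(\Theta|\mathcal{D}|)$.
\item The truncated potential $(B_n-|\mathcal{D}|+1)_+$ does not inherit the drift $\Theta|\mathcal{D}|$, because a jump from $|\mathcal{D}|$ to $0$ lowers it by only $1$.
\item Your intermediate bound $(A_{\tau_1}-|\mathcal{D}|+1)_+/(\Theta|\mathcal{D}|)$ is in fact false. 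Take $|\mathcal{D}|=A_{\tau_1}=2$, with both engagements succeeding jointly with probability $\Theta$ and otherwise neither. Corollary~\ref{cor:mixed_regimes}(i) holds with equality, yet $\sigma$ is geometric with mean $1/\Theta>1/(2\Theta)$.
\end{itemize}

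\textbf{Why the excess cannot be absorbed.} There is no spare constant in the third term. The $1$ in $(1+\ln s)/\Theta$ is exactly the constant of the multiplicative drift theorem, and Phase~2 already uses it up. For $|\mathcal{D}|\ge 2$ the only slack is $\ln|\mathcal{D}|-\ln(|\mathcal{D}|-1)$, which comes from $B_\sigma\le|\mathcal{D}|-1$. Bounding the two phases separately at their worst cases gives $(A_{\tau_1}-|\mathcal{D}|)/(\Theta|\mathcal{D}|)+(2+\ln(|\mathcal{D}|-1))/\Theta$. For $A_{\tau_1}\ge|\mathcal{D}|\ge2$ this exceeds the claim.

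Relatedly, rely on the multiplicative drift theorem itself, proved via the potential $g(0)=0$, $g(x)=1+\ln x$. Your Markov tail sum loses a constant: for $s=1$ it gives $1/(1-e^{-\Theta})>1/\Theta$.

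\textbf{The missing idea: couple the phases.} A large overshoot lengthens the Phase~1 bound but shortens Phase~2.
\begin{itemize}
\item Keep $\mathbb{E}[B_\sigma\mid\mathcal{F}_{\tau_1}]$ in the optional-stopping bound. This is legitimate because $0\le B_{n\wedge\sigma}\le A_{\tau_1}$ (as $A_k$ is nonincreasing) and $\mathbb{E}[n\wedge\sigma\mid\mathcal{F}_{\tau_1}]\le A_{\tau_1}/(\Theta|\mathcal{D}|)$.
\item Let $\widetilde{T}_0:=\inf\{n:B_n=0\}$. Apply the multiplicative drift theorem conditionally on $\mathcal{G}_\sigma$ to get $\mathbb{E}[\widetilde{T}_0-\sigma\mid\mathcal{G}_\sigma]\le g(B_\sigma)/\Theta$.
\item Add the two bounds. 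For $A_{\tau_1}\ge|\mathcal{D}|$,
\[
\mathbb{E}\bigl[\widetilde{T}_0\bigm|\mathcal{F}_{\tau_1}\bigr]
\le\frac{A_{\tau_1}}{\Theta|\mathcal{D}|}
+\frac{1}{\Theta}\,\mathbb{E}\Bigl[g(B_\sigma)-\frac{B_\sigma}{|\mathcal{D}|}\Bigm|\mathcal{F}_{\tau_1}\Bigr]
\le\frac{A_{\tau_1}-|\mathcal{D}|}{\Theta|\mathcal{D}|}+\frac{1+\ln|\mathcal{D}|}{\Theta}.
\]
The second inequality holds because $g(0)=0\le\ln|\mathcal{D}|$, and for $1\le b\le|\mathcal{D}|$ the inequality $1+\ln b-b/|\mathcal{D}|\le\ln|\mathcal{D}|$ is just $\ln x\le x-1$ at $x=b/|\mathcal{D}|$.
\end{itemize}
For $A_{\tau_1}<|\mathcal{D}|$ one has $\sigma=0$ and the theorem applies directly. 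The rest of your argument (tower property, monotonicity in $A_{\tau_1}$, the case $\tau_1=0$) then goes through.

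The paper's own Phase~1 step asserts $\mathbb{E}[\widetilde{T}_{\downarrow}\mid\mathcal{F}_{\tau_1}]\le(A_{\tau_1}-|\mathcal{D}|)_+/(\Theta|\mathcal{D}|)$ with no overshoot term. That already fails at $A_{\tau_1}=|\mathcal{D}|$, where $\widetilde{T}_{\downarrow}\ge 1$. The coupling above is what actually delivers the stated bound.
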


\begin{proof}
Let $\widetilde{A}_\ell := A_{\tau_1+\ell}$ for $\ell\ge 0$ denote the post-detection process, and its absorption time is defined as $\widetilde{T}_0 \;=\; \inf\{\ell\ge 0:\; \widetilde{A}_\ell = 0\}.$
Then, $T_0 = \tau_1 + \widetilde{T}_0$ and hence
\begin{equation}
\mathbb{E}[T_0\mid \mathcal{F}_{\tau_1}] \;=\; \tau_1 + \mathbb{E}[\widetilde{T}_0\mid \mathcal{F}_{\tau_1}],
\label{eq:ET0_split_cond}
\end{equation}
where conditioning on $\mathcal{F}_{\tau_1}$ and shifting the process is standard for drift/hitting-time arguments under filtrations and stopping times (see \cite{c26}). We define the threshold-crossing time as $\widetilde{T}_{\downarrow}
\;=\;
\inf\{\ell\ge 0:\; \widetilde{A}_\ell < |\mathcal{D}|\}$.
\noindent \textit{Phase 1 (Linear depletion while $\widetilde{A}_\ell\ge |\mathcal{D}|$)}: 
On $\{\widetilde{A}_\ell\ge |\mathcal{D}|\}$, the mixed drift inequalities at time $k=\tau_1+\ell$ yields $\mathbb{E}[\widetilde{A}_{\ell+1}\mid \mathcal{F}_{\tau_1+\ell}]
\;\le\;
\widetilde{A}_\ell - \Theta|\mathcal{D}|.$
By an additive-drift (linear-progress) first-hitting-time bound proved via a stopped-supermartingale argument (see \cite{c21,c26}),
\begin{equation}
\mathbb{E}[\widetilde{T}_{\downarrow}\mid \mathcal{F}_{\tau_1}]
\;\le\;
\frac{(\widetilde{A}_0-|\mathcal{D}|)_{+}}{\Theta|\mathcal{D}|}
\;=\;
\frac{(A_{\tau_1}-|\mathcal{D}|)_{+}}{\Theta|\mathcal{D}|}.
\label{eq:ETdown_bound_cond}
\end{equation}

\noindent \textit{Phase 2 (Multiplicative decay while $\widetilde{A}_\ell<|\mathcal{D}|$).}
By definition, $\widetilde{A}_{\widetilde{T}_{\downarrow}}<|\mathcal{D}|$.
On $\{\widetilde{A}_{\widetilde{T}_{\downarrow}}>0\}$, applying the mixed drift inequalities at time $k=\tau_1+\widetilde{T}_{\downarrow}+\ell$ gives
\begin{equation}
\mathbb{E}\!\left[\widetilde{A}_{\widetilde{T}_{\downarrow}+\ell} - \widetilde{A}_{\widetilde{T}_{\downarrow}+\ell+1}
\;\middle|\; \mathcal{F}_{\tau_1+\widetilde{T}_{\downarrow}+\ell}\right]
\;\ge\;
\Theta\,\widetilde{A}_{\widetilde{T}_{\downarrow}+\ell}.
\label{eq:mult_drift_form}
\end{equation}

\noindent A standard multiplicative-drift hitting-time bound yields 
% (see \cite{c22,c26})
\begin{equation}
\mathbb{E}[\widetilde{T}_0-\widetilde{T}_{\downarrow}\mid \mathcal{F}_{\tau_1}]
\;\le\;
\frac{1+\ln\!\big(\max\{1,\widetilde{A}_{\widetilde{T}_{\downarrow}}\}\big)}{\Theta}.
\label{eq:ET2_bound_cond}
\end{equation}
Since $\widetilde{A}_{\widetilde{T}_{\downarrow}} \le \min\{|\mathcal{D}|,A_{\tau_1}\}$ a.s., monotonicity of $\ln(\max\{1,\cdot\})$ implies
\begin{equation}
\mathbb{E}[\widetilde{T}_0-\widetilde{T}_{\downarrow}\mid \mathcal{F}_{\tau_1}]
\;\le\;
\frac{1+\ln\!\big(\min\{|\mathcal{D}|,\,\max\{1,A_{\tau_1}\}\}\big)}{\Theta}.
\label{eq:finref}
\end{equation}
\noindent Combining \eqref{eq:ETdown_bound_cond} with \eqref{eq:finref} gives
\begin{multline}
\mathbb{E}[\widetilde{T}_0\mid \mathcal{F}_{\tau_1}]
\;\le\;
\frac{(A_{\tau_1}-|\mathcal{D}|)_{+}}{\Theta|\mathcal{D}|}\\
\;+\;
\frac{1+\ln\!\big(\min\{|\mathcal{D}|,\,\max\{1,A_{\tau_1}\}\}\big)}{\Theta}.
\end{multline}
Using \eqref{eq:ET0_split_cond} yields \eqref{eq:ET0_cond_bound_post_detect}, and taking expectation yields \eqref{eq:ET0_uncond_bound_post_detect}.
Finally, if $A_{\tau_1}\le A_0$ a.s., then \eqref{eq:ET0_bound_with_A0_fixed} follows from monotonicity of $(\cdot)_+$ and $\ln\!\big(\min\{|\mathcal{D}|,\max\{1,\cdot\}\}\big)$.
\end{proof}

%---------------------------------

% ============================================================
\subsection{Finite-Time Triggering of the Capture Condition}
\label{sec:IF-triggering}
% ============================================================

Consider the windowed assignment--execution loop in Algorithm~\ref{alg:cridaa}. We define the within-window capture event (IF) as follows:
\begin{equation*}
\label{eq:IF-condition}
\mathrm{IF}_k
\,{=}\,
\Big\{\!
\exists(j,i)\,{\in}\, J_k\,{:}
\min_{\tau\in [t_k,t_{k+1})}\!\!
\|\mathbf{x}_{D,j}(\tau)-\mathbf{x}_{A,i}(\tau)\|
\,{\le}\, r_{\mathrm{cap}} \!
\Big\},
\end{equation*}
where $J_k\subseteq \mathcal{D}\times \mathcal{A}(t_k)$ denotes the executed defender--attacker engagement pairs during $[t_k,t_{k+1})$ after assignment and switching at epoch $t_k$.

\begin{definition}[First capture / IF-trigger epoch]
\label{def:kappa1_capture}
The first capture (first IF-trigger) epoch is defined as
\begin{equation}
\kappa_1=\inf\{k\ge 0:\, \mathrm{IF}_k \text{ holds}\},
\end{equation}
and for $k\ge 2$ define $\kappa_k=\inf\{\ell>\kappa_{k-1}:\ \mathrm{IF}_{\ell} \text{ holds}\}$.
\end{definition}

\begin{theorem}[Finite-time triggering of the capture condition after detection]
\label{thm:IF-triggering}
% Let $\{A_k\}_{k\ge 0}$ denote the live attacker count at decision epochs $t_k$.
If the following are satisfied:
\begin{enumerate}
\item \emph{Robust admissibility}: Each $(j,i)\in J_k$ satisfies $(j,i)\in\mathcal{B}_{\mathcal{DA}}(t_k)$, i.e., window-feasibility and robust tube-margin conditions of \eqref{eq:BDA_def} hold. 
% (Section~\ref{subsec:admissible_domain}).
\item \emph{Per-engagement success floor}: For $(j,i)\in J_k$, let
\begin{equation*}
Y_{k,ji}=\mathbf{1}\Big\{\min_{\tau\in [t_k,t_{k+1})}\|\mathbf{x}_{D,j}(\tau)-\mathbf{x}_{A,i}(\tau)\|\le r_{\mathrm{cap}}\Big\}.
% \label{eq:Ykji_capture}
\end{equation*}
Then, for all $k$, $\mathbb{P}\{Y_{k,ji}=1\mid\mathcal{F}_k\}\ge p^\star(t_k)$, where $p^\star(t_k) =\mathbb{P}\{E_{\mathrm{tube}}(t_k)\mid\mathcal{F}_k\}\ge p_{\min}>0$ and

$E_{\mathrm{tube}}(t_k)$ is the tube-hold event of Lemma~\ref{lem:robust_capture}. 
% (a standard ``first-hitting'' argument; cf.\ \cite{c26,c21}).
\item \emph{Robust execution (post-detection)}: For all $k\ge \tau_1$, $m_k \ge \eta C_k$ with $\eta\in(0,1]$ uniform.
\item \emph{Non-degeneracy (post-detection)}: On $\{k\ge \tau_1\}\cap\{A_k\ge 1\}$, $m_k\ge 1$\\
((2)--(4) impose a uniform conditional success floor for at least one executed engagement per post-detection epoch, which leads to geometric tail and finite mean bounds for the first-trigger time).
\end{enumerate}
Then, the following holds:
\begin{enumerate}
\item For all $k\ge \tau_1$, on $\{A_k\ge 1\}$, $\mathbb{P}\{\mathrm{IF}_k\mid\mathcal{F}_k\}\ge p_{\min}$.

\item For all $n\in \mathbb{Z}_{+}$, $\mathbb{P}\{\kappa_1>\tau_1+n\mid \mathcal{F}_{\tau_1}\}
\;\le\; (1-p_{\min})^{n+1}$ a.s. on $\{A_{\tau_1}\ge 1\}$. 

\item Consequently, $\mathbb{E}[\kappa_1-\tau_1\mid \mathcal{F}_{\tau_1}] \;\le\; \frac{1}{p_{\min}}$ a.s. on $\{A_{\tau_1}\ge 1\}$, 
and hence $\mathbb{E}[\kappa_1]\le \mathbb{E}[\tau_1]+\frac{1}{p_{\min}}$ whenever $\mathbb{E}[\tau_1]<\infty$.
\item If $T_0=\inf\{k\ge 0:\ A_k=0\}$ satisfies $\mathbb{P}\{T_0<\infty\}=1$ (by Theorem~\ref{thm:absorption_uniform_drift}), then
\begin{equation*}
\sum_{k\ge 0}\mathbf{1}\{\mathrm{IF}_k\}<\infty\  \text{a.s.},
\;
\text{and } \sum_{k\ge 0}\mathbf{1}\{\mathrm{IF}_k\}\le T_0 \ \text{a.s.}
\end{equation*}
\end{enumerate}
\end{theorem}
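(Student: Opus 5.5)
We prove the four claims in order, each resting on the previous one.

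\emph{Claim 1 (per-epoch trigger floor).} Fix $k\ge\tau_1$ and work on the $\mathcal{F}_k$-measurable event $\{A_k\ge 1\}$. By non-degeneracy (4), $J_k\neq\emptyset$, so we may select an $\mathcal{F}_k$-measurable executed pair $(j^\ast,i^\ast)\in J_k$, for instance the lexicographically first. We then have the inclusion $\{Y_{k,j^\ast i^\ast}=1\}\subseteq\mathrm{IF}_k$. By robust admissibility (1), the pair lies in $\mathcal{B}_{\mathcal{DA}}(t_k)$, so the hypotheses of Lemma~\ref{lem:robust_capture} hold, and \eqref{eq:success_floor} gives
\[
\mathbb{P}\{\mathrm{IF}_k\mid\mathcal{F}_k\}\ge\mathbb{P}\{Y_{k,j^\ast i^\ast}=1\mid\mathcal{F}_k\}\ge p^\star(t_k)\ge p_{\min}.
\]
Assumption (3) is not needed here beyond guaranteeing $m_k\ge 1$ when $C_k\ge 1$.

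\emph{Claim 2 (geometric tail).} We argue by induction on $n$, conditioning successively on $\mathcal{F}_{\tau_1+\ell}$. Since $\mathrm{IF}_k\in\mathcal{F}_{k+1}$, the event $\{\kappa_1>\tau_1+n\}$ equals
\[
\{\kappa_1>\tau_1+n-1\}\cap\mathrm{IF}_{\tau_1+n}^c,
\]
and the first set is $\mathcal{F}_{\tau_1+n}$-measurable. Taking the tower property over the stopping times $\tau_1+n$ gives
\[
\mathbb{P}\{\kappa_1>\tau_1+n\mid\mathcal{F}_{\tau_1}\}
=\mathbb{E}\bigl[\mathbf{1}\{\kappa_1>\tau_1+n-1\}\,\mathbb{P}(\mathrm{IF}^c_{\tau_1+n}\mid\mathcal{F}_{\tau_1+n})\mid\mathcal{F}_{\tau_1}\bigr],
\]
and we would apply Claim 1 at the stopping time $\tau_1+n$. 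This application is the main obstacle, because it needs $A_{\tau_1+n}\ge 1$ on $\{\kappa_1>\tau_1+n-1\}$. The key observation is that, with no capture before $\tau_1+n$, no attacker has been neutralized. The only remaining way to lose attackers is a breach, and breaches are not counted in \eqref{eq:A_update}, where $K_k$ counts only neutralizations. I would therefore argue that on $\{\kappa_1>\tau_1+n-1\}$ we have $K_\ell=0$ for $\ell<\tau_1+n$, which gives $A_{\tau_1+n}=A_{\tau_1}\ge 1$. Starting the induction from the base case $n=0$ with factor $(1-p_{\min})$ then yields $(1-p_{\min})^{n+1}$. A secondary point is transferring the conditional bound of Claim 1 from deterministic times to the stopping times $\tau_1+n$. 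This is standard and follows by decomposing over $\{\tau_1=m\}$.

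\emph{Claims 3 and 4.} For Claim 3, we write
\[
\mathbb{E}[\kappa_1-\tau_1\mid\mathcal{F}_{\tau_1}]\le\sum_{n\ge0}\mathbb{P}\{\kappa_1-\tau_1>n\mid\mathcal{F}_{\tau_1}\}\le\sum_{n\ge0}(1-p_{\min})^{n+1},
\]
where the first inequality is valid because $\kappa_1\ge\tau_1$: no executed pairs, and hence no IF, occur before detection since $V_k=\emptyset$. The geometric series sums to $(1-p_{\min})/p_{\min}\le 1/p_{\min}$. Taking expectations, using $\kappa_1=\tau_1+(\kappa_1-\tau_1)$, gives $\mathbb{E}[\kappa_1]\le\mathbb{E}[\tau_1]+1/p_{\min}$. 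For Claim 4, we note that $\mathrm{IF}_k$ requires $J_k\neq\emptyset$, hence $A_k\ge1$, hence $k<T_0$ by the absorption property (i) of Theorem~\ref{thm:absorption_uniform_drift}. Therefore
\[
\sum_k\mathbf{1}\{\mathrm{IF}_k\}\le\sum_k\mathbf{1}\{k<T_0\}=T_0,
\]
which is finite a.s.
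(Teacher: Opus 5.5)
Your proposal is correct and follows the paper's proof essentially step for step. One executed pair in $J_k\neq\emptyset$ gives the per-epoch floor $p_{\min}$; iterated conditioning on $\{\kappa_1>\tau_1+n\}=\bigcap_{\ell=0}^{n}\mathrm{IF}^c_{\tau_1+\ell}$ gives the geometric tail; the tail-sum formula gives the mean bound; and the impossibility of $\mathrm{IF}_k$ for $k\ge T_0$ gives the counting claim.

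You are more careful than the paper in two places. First, you make explicit that $A_{\tau_1+n}=A_{\tau_1}\ge 1$ on the no-capture event, because \eqref{eq:A_update} removes only neutralized attackers. The paper's ``iterating conditional expectations'' silently needs this, and it would fail if breaches also depleted $A_k$. Second, you sum the series correctly to $(1-p_{\min})/p_{\min}\le 1/p_{\min}$, whereas the paper asserts equality with $1/p_{\min}$.
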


\begin{proof}
Fix $k\ge \tau_1$ and consider the event $\{A_k\ge 1\}$.
Then $C_k\ge 1$ and by Lemma~\ref{thm:eta_floor}, we have $m_k\ge \eta C_k\ge \eta>0$, hence $m_k\ge 1$ and $J_k\neq\emptyset$.
Pick any $(j,i)\in J_k$.
From~\eqref{eq:success_floor}, $\mathbb{P}\{Y_{k,ji}=1\mid\mathcal{F}_k\}\ge p_{\min}$.
Since $\mathrm{IF}_k=\bigcup_{(j,i)\in J_k}\{Y_{k,ji}=1\}$, it follows that
\begin{equation*}
\mathbb{P}\{\mathrm{IF}_k\mid\mathcal{F}_k\}
\;\ge\;
\max_{(j,i)\in J_k}\mathbb{P}\{Y_{k,ji}=1\mid\mathcal{F}_k\}
\;\ge\;
p_{\min}.
\end{equation*}

\noindent Define $F_k=\mathrm{IF}_k^c$ and for $k\ge \tau_1$, we have $\mathbb{P}\{F_k\mid\mathcal{F}_k\}\le 1-p_{\min}$ on $\{A_k\ge 1\}$ from the preceding. Moreover, $\{\kappa_1>\tau_1+n\}=\bigcap_{\ell=0}^{n} F_{\tau_1+\ell}$.
Iterating conditional expectations (tower property) yields (2). 
This type of filtration-based iteration to obtain geometric tails from a uniform conditional failure bound is standard in first-hitting-time analyses for adapted processes; see, e.g., \cite{c26}.

\vspace{2mm}
\noindent By the tail-sum formula and (2), we have $\mathbb{E}[\kappa_1-\tau_1\mid \mathcal{F}_{\tau_1}]
=
\sum_{n\ge 0}\mathbb{P}\{\kappa_1>\tau_1+n\mid \mathcal{F}_{\tau_1}\} \le
\sum_{n\ge 0}(1-p_{\min})^{n+1}
=
\frac{1}{p_{\min}}$.
Taking expectations gives $\mathbb{E}[\kappa_1]\le \mathbb{E}[\tau_1]+\frac{1}{p_{\min}}$.

\vspace{2mm}
\noindent On $\{T_0<\infty\}$, we have $\mathbf{1}\{\mathrm{IF}_k\}=0$ for all $k\ge T_0$, hence
$\sum_{k\ge 0}\mathbf{1}\{\mathrm{IF}_k\}\le T_0<\infty$.
This is immediate from the definition of $T_0$ as an absorption time.
% (cf.\ standard hitting-time arguments in \cite{c26}).
\end{proof}

\begin{remark}[Multiplicity refinement]
If $\{Y_{k,ji}\}_{(j,i)\in J_k}$ are conditionally independent given $\mathcal{F}_k$, then for all $k$,
\begin{multline}
\mathbb{P}\{\mathrm{IF}_k\mid\mathcal{F}_k\}
=
1-\prod_{(j,i)\in J_k}\mathbb{P}\{Y_{k,ji}=0\mid\mathcal{F}_k\}
\ge\\
1-(1-p_{\min})^{m_k}
\ge
1-(1-p_{\min})^{\eta C_k}.
\end{multline}
\end{remark}

% ============================================================
\subsection{Computational Complexity Analysis}
\label{subsec:complexity}
% ============================================================

This subsection summarizes the leading-order computational and memory cost of the online controller executed per decision window in Algorithm~\ref{alg:cridaa}, and of the offline Monte Carlo evaluation used for ablation/sensitivity studies. 
We follow the simulator pipeline: overlap-graph construction (voxelized Gaussian overlap), centrality computation, Markov risk evaluation, centralized assignment with optional switching/cooldown, and one-step propagation with interception/breach checks. 

At each epoch $t_k$, computation is causal with respect to the window-level information set $\mathcal{F}_k$; repeated Monte Carlo trials are offline evaluation of the same online policy. Let $g$ be the voxel grid resolution used for overlap (so the voxel count scales as $g^3$). Let $N_s$ denote the number of accepted Monte Carlo samples used to estimate the Markov transition probabilities, $H$ the Markov look-ahead horizon, and $K_{\mathrm{eig}}$ the iteration budget for eigenvector-based centrality computation. In overlap mode, probabilistic graph construction evaluates voxelized Gaussian pdfs and pairwise overlaps over detected attackers, giving leading-order $\mathcal{O}\!\left(n_d(k)^2\,g^3\right)$.
% \begin{equation}
% \mathcal{O}\!\left(n_d(k)^2\,g^3\right).
% \label{eq:complex_graph}
% \end{equation}
Centrality computation is linear for degree, $\mathcal{O}(K_{\mathrm{eig}}|E_k|)$ for eigenvector-type measures, and at most $\mathcal{O}(|V_k||E_k|)$ when betweenness is computed (Brandes).
Markov risk evaluation (three-zone transition estimation plus $H$-step propagation) scales as $\mathcal{O}\!\left(N_k\,(N_s+H)\right)$,
% \begin{equation}
% \mathcal{O}\!\left(N_k\,(N_s+H)\right),
% \label{eq:complex_markov}
% \end{equation}
where $(N_s,H)$ are explicit fidelity--cost~knobs. 

The assignment phase builds a rectangular cost structure over defender--attacker pairs and solves a linear sum assignment problem (LSAP) using the Hungarian method in the default setting, with a standard rectangular worst-case bound $\mathcal{O}\!(\min\{N_D,N_k\}^2\max\{N_D,N_k\})$,
% \[
% \mathcal{O}\!\left(\min\{N_D,N_k\}^2\max\{N_D,N_k\}\right),
% \]
after $\mathcal{O}(N_DN_k)$ cost construction; switching/cooldown tracking is lower order. Dynamics propagation and event checks contribute $\mathcal{O}(N_k+N_D)$, with interception checks at most linear in the number of executed engagements. Collecting the dominant terms, the per-window complexity is $\mathcal{O}\!\Big( n_d(k)^2 g^3 + |V_k||E_k| + K_{\mathrm{eig}}|E_k| + N_k(N_s+H) + \min\{N_D,N_k\}^2\max\{N_D,N_k\} \Big),$
% \begin{multline}
% \mathcal{O}\!\Big(
% n_d(k)^2 g^3
% \;+\;
% |V_k||E_k|
% \;+\;
% K_{\mathrm{eig}}|E_k|
% \;+\;
% N_k(N_s+H)\\
% \;+\;
% \min\{N_D,N_k\}^2\max\{N_D,N_k\}
% \Big),
% \label{eq:complex_total_step}
% \end{multline}
where lower-order additions such as $\mathcal{O}(N_DN_k)$ and $\mathcal{O}(N_k+N_D)$~are~omitted. 

% \paragraph{Per-episode and experiment complexity (offline evaluation).}
For an episode horizon of $T$ decision windows, the episode runtime is the sum of the per window complexity over $k=0,\dots,T-1$. For sensitivity sweeps and Monte Carlo evaluation, total compute scales linearly with the number of parameter cells $N_{\mathrm{grid}}$ and replications per cell $N_{\mathrm{rep}}$: $\mathcal{O}\!
N_{\mathrm{grid}}\,N_{\mathrm{rep}} \sum_{k=0}^{T-1} C^{\mathrm{comp}}_k$, where $C^{\mathrm{comp}}_k$ denotes the per-window cost. If only state trajectories and scalar logs are retained, storage is $\mathcal{O}(T(N_A+N_D))$. Storing full graph snapshots adds $\sum_{k=0}^{T-1}(|V_k|+|E_k|)$ (up to constant factors depending on the chosen graph representation).

\section{Results and Discussion}
\label{sec:results}

\subsection{Simulation Setup}
We define PZ as a vertical cylinder of height $h=20\,\mathrm{u}$ with hard boundary radius $r_{\mathrm{hard}}=10\,\mathrm{u}$ and soft boundary radius $r_{\mathrm{soft}}=15\,\mathrm{u}$ (Fig.~\ref{sim}), where $\mathrm{u}$ denotes units. 
Attackers breach when they reach the hard boundary.
Defenders intercept attackers within a capture radius $r_{\mathrm{cap}}=1.5\,\mathrm{u}$. We simulate $N_A=10$ attacker UAS with speeds in $[0.5,1.0]\,\mathrm{u}$ per time step and $N_D=6$ defenders with speed bound $v_{\mathrm{def}}^{\max}=3.5\,\mathrm{u}$ per time step.
Each UAS uses angular speed $\omega$ and vertical speed $w$ inputs, with collision-avoidance and connectivity/safety constraints enforced through the low-level control computation.
Attackers are initialized randomly outside the SB and move toward the PZ; defenders are initialized around the hard boundary to provide perimeter coverage. We evaluate two attacker-interaction models, namely deterministic proximity graph and probabilistic overlap graph, that induce different information/coordination structure. A communication radius of $r_{\mathrm{comm}}=25\,\mathrm{u}$ is used. All simulations are run on a 12th Gen Intel(R) Core(TM) i5-12450H CPU (2.00 GHz) with 16 GB RAM using Python/NumPy/SciPy/NetworkX/Matplotlib.

\begin{figure}[bp]
    \centering
    \includegraphics[width=\columnwidth]{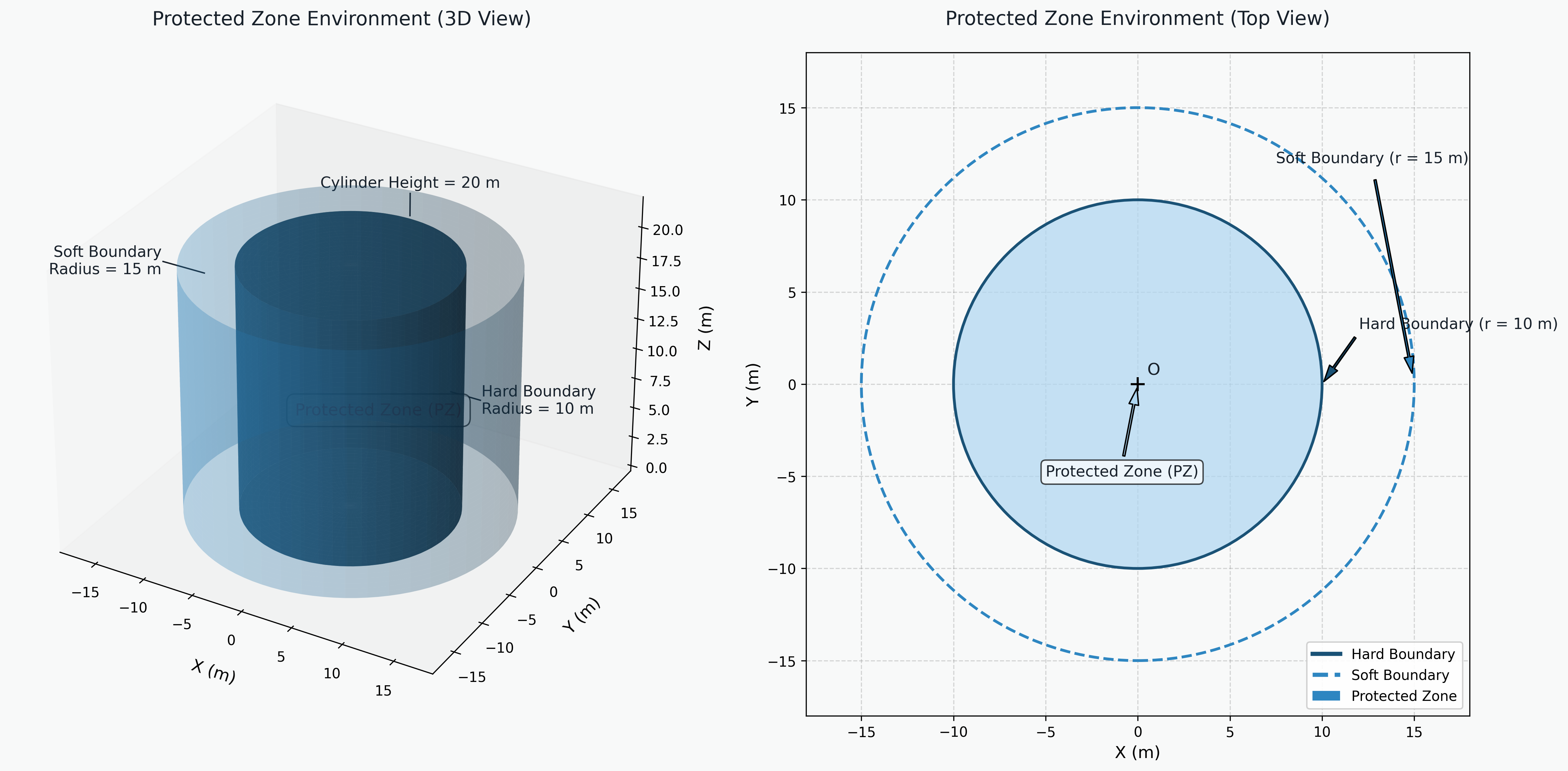}
    \caption{Simulation environment showing the protected zone (hard/soft boundaries).}
    \label{sim}
\end{figure}

\subsection{Nominal Simulation Results}

In both settings, namely deterministic and probabilistic, attackers are prioritized via the combined criticality metric and defenders are allocated via CRIDAA (bipartite matching with replanning and safety filtering), and an attacker is counted intercepted once it enters the capture ball of radius $r_{\mathrm{cap}}$. Fig.~\ref{UAS_defense_det} shows a representative deterministic simulation. 
Fig.~\ref{fig7} shows the attacker criticality over time with the end points representing the intercepted time for each attacker UAS and Fig.~\ref{fig8} shows the attacker status at each timestep. Over $5000$ Monte Carlo simulation runs are conducted in the deterministic setting, where the aggregate averages are: (i) interception rate $99.9\%$, (ii) breach rate $0.1 \%$, and  (iii) mean interception distance from the hard boundary $5.160\,\mathrm{u}$. 
(These aggregate metrics quantify typical performance; the figures are a single-run visualization.)

% \subsection{Probabilistic setting: aggregate performance and interpretation}
We next evaluate the probabilistic overlap-graph setting using the same environment/dynamics, with probabilistic parameters
$\sigma_{r_0}=10.0$, $k=0.2$, $\Gamma=8$, $\sigma_n=5$, $P_{\mathrm{detect}}^{\mathrm{th}}=0.10$, and $\alpha' = 0.04$.
Fig.~\ref{fig13} shows the overall performance of proposed defense architecture under probabilistic setting. 
This illustrates a key theme that is quantified in our ablation/sensitivity analysis: under stochastic triggering, the closed loop often achieves substantial neutralization throughput while still suffering occasional breaches. We evaluate $5000$ Monte Carlo runs for $200$ time steps while varying attacker initial conditions and probabilistic parameters.
Over $5000$ Monte Carlo runs in the probabilistic setting, the aggregate averages are: (i) interception rate $85.6\%$, (ii)  breach rate $14.4\%$ with average breach time of $10.53\,\mathrm{s}$, and (iii) mean interception distance from the hard boundary $4.88\,\mathrm{u}$ (Fig.~\ref{fig13}).

\captionsetup[subfigure]{labelformat=simple}
\renewcommand\thesubfigure{(\alph{subfigure})}

% \begin{figure}[ht]
%     \centering
    
%         \begin{minipage}{0.94\linewidth}
%             \centering
            
%             % First row: Criticality over Time
%             \begin{subfigure}{0.90\linewidth}
%                 \centering
%                 \includegraphics[width=\linewidth]{criticality_over_time.png}
%                 \caption{Criticality over Time}
%                 \label{fig7}
%             \end{subfigure}
%             \\ \vspace{0.3cm} % Forces the next subfigure to the next row with spacing
            
%             % Second row: Attacker Status
%             \begin{subfigure}{0.90\linewidth}
%                 \centering
%                 \includegraphics[width=\linewidth]{attacker_status_over_time.png}
%                 \caption{Attacker Status}
%                 \label{fig8}
%             \end{subfigure}
            
%         \end{minipage}
    
%     \caption{Deterministic graph-based UAS swarm defense}
%     \label{UAS_defense_det}
% \end{figure}

\begin{figure}[ht]
    \centering
        \begin{minipage}{0.94\linewidth}
            \centering            
            % First row: Criticality over Time
            \subfloat[Criticality over Time]{
            \includegraphics[width=0.90\linewidth]{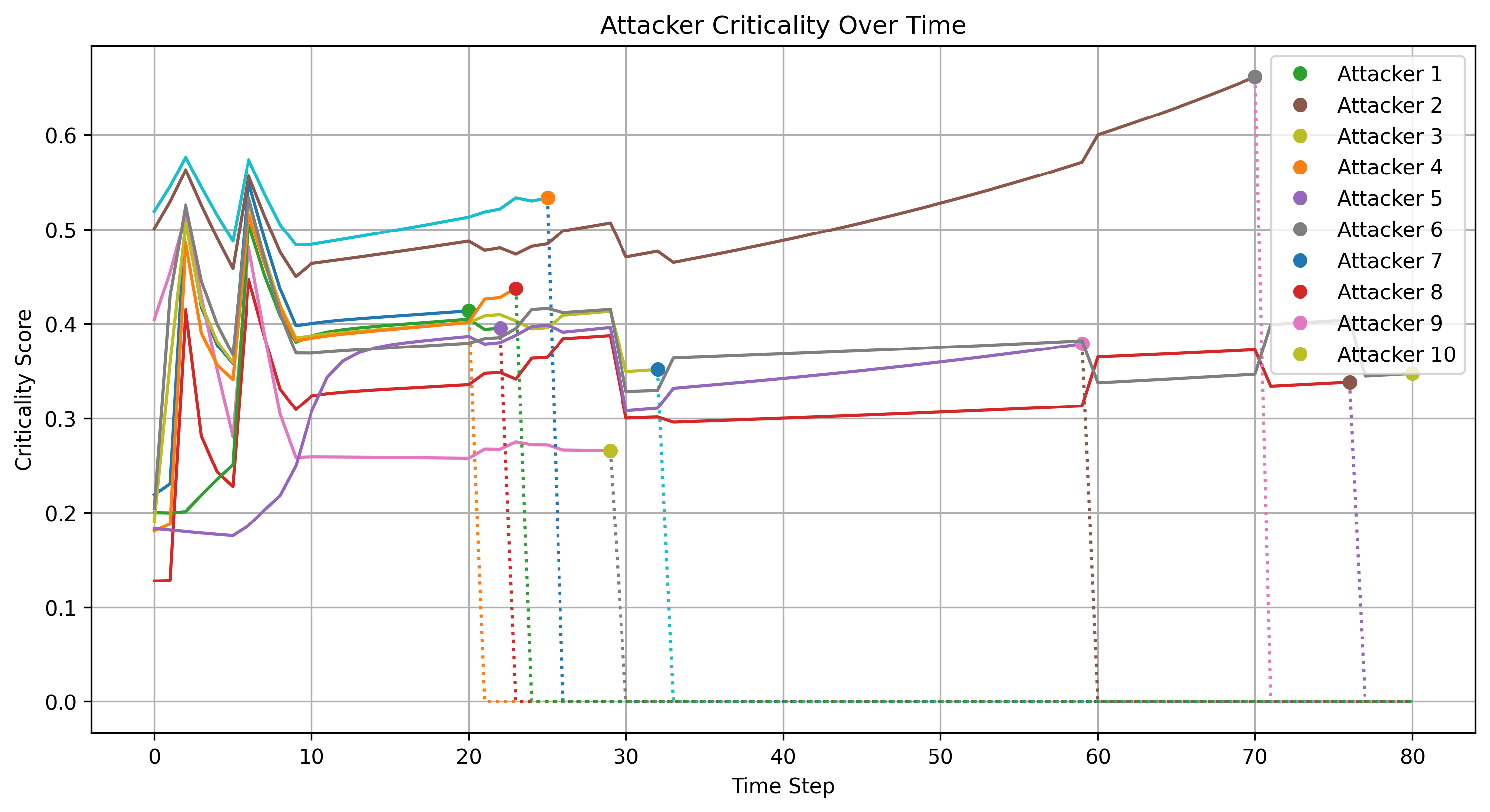}
            \label{fig7} }
            \\ \vspace{0.3cm} % Forces the next subfigure to the next row with spacing
            \subfloat[Attacker Status]{
            \includegraphics[width=0.90\linewidth]{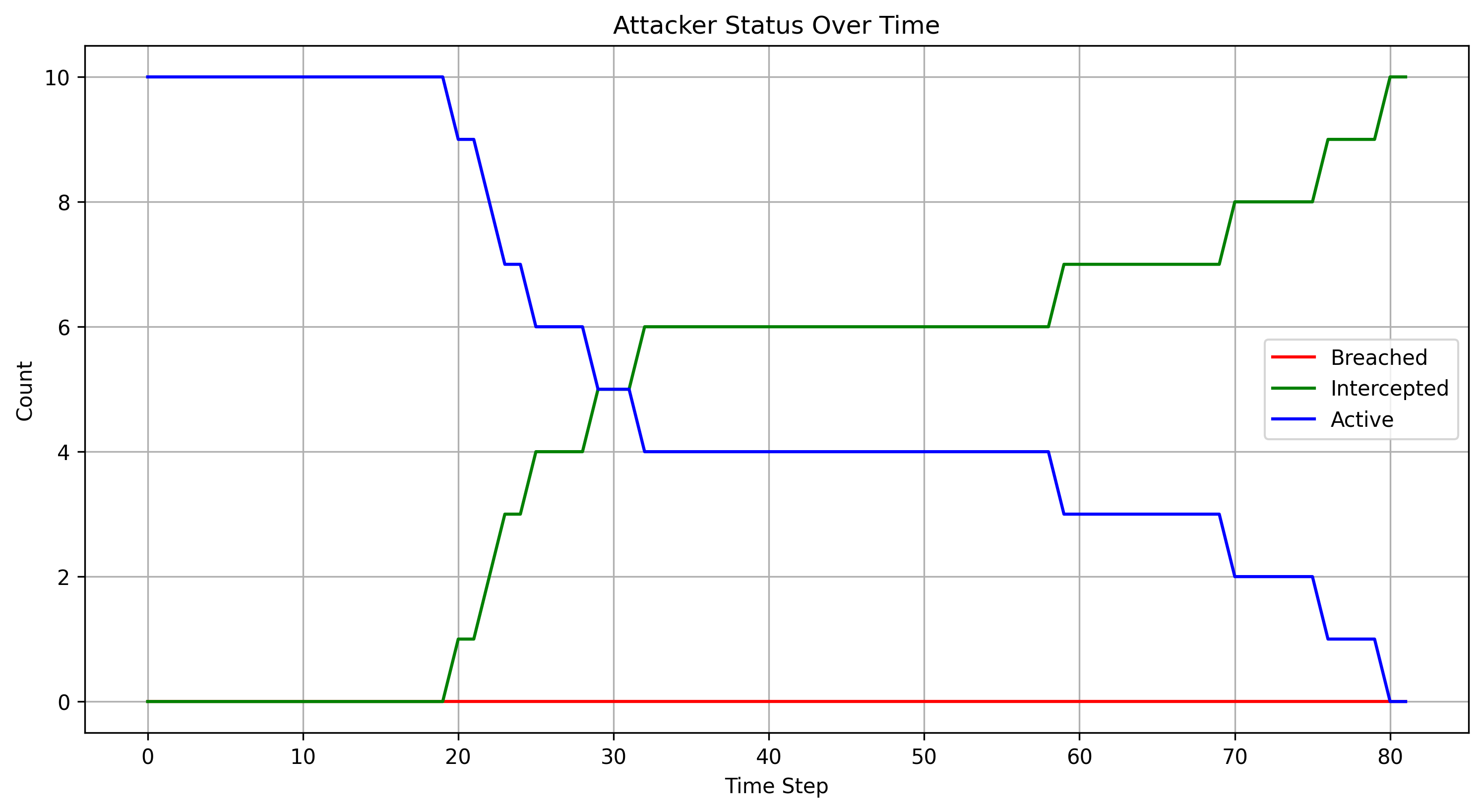}
            \label{fig8}}            
        \end{minipage}    
    \caption{Deterministic graph-based UAS swarm defense}
    \label{UAS_defense_det}
\end{figure}

\begin{figure}[htp]
    \centering
    \includegraphics[width=\linewidth]{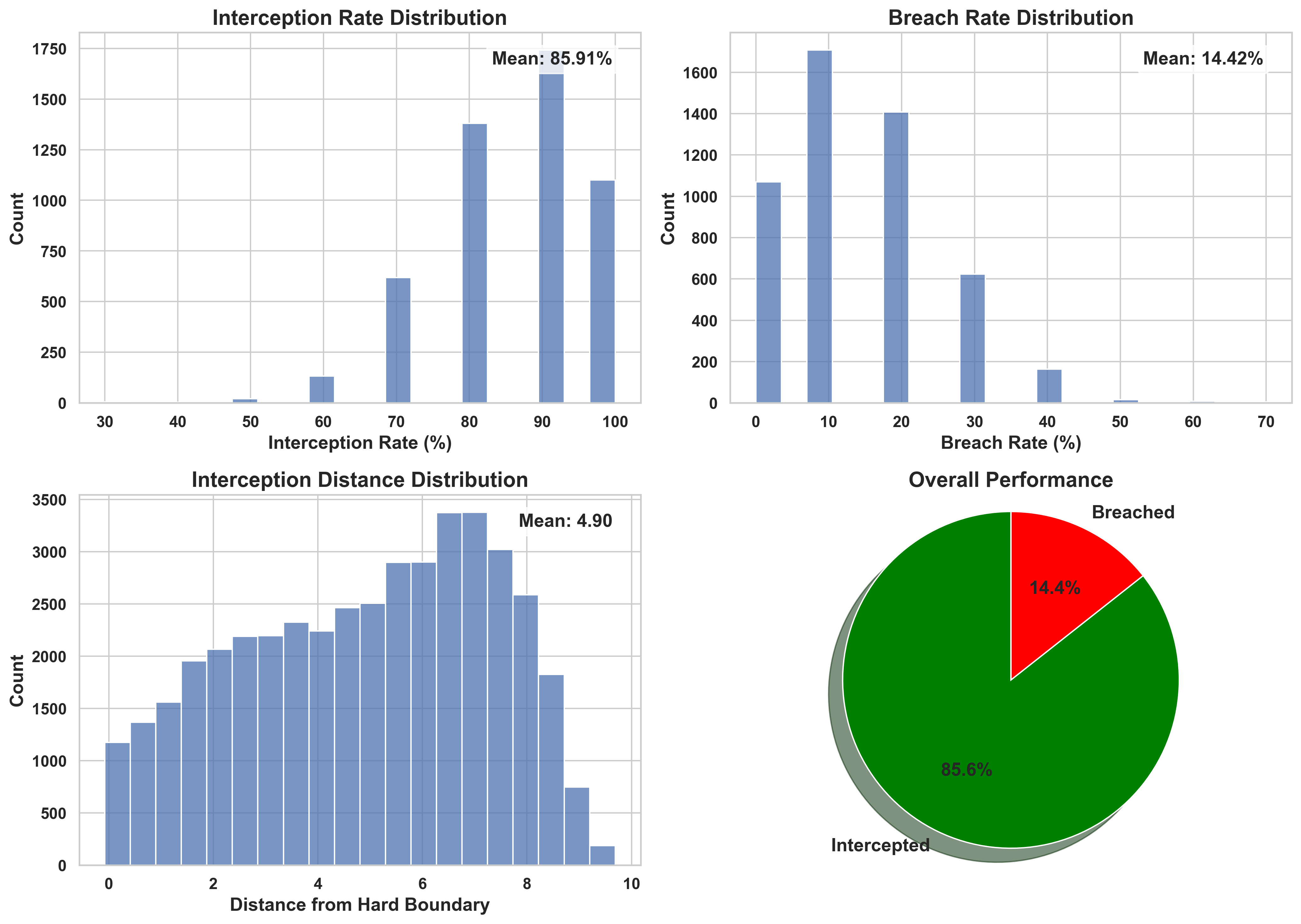}
    \caption{Probabilistic graph-based UAS swarm defense}
    \label{fig13}
\end{figure}

\subsection{Simulations with Non-Optimal Attacker Dynamics}

To assess whether the proposed defender-side architecture depends on the nominal minimum-time attacker guidance model, we evaluated the complete closed-loop defense pipeline under non-optimal attacker dynamics. In this regime, attackers do not follow the nominal time-optimal control law; instead, they evolve according to fixed heuristic motion patterns, as described in Section~\ref{subsec:nonoptimal_attackers}. All other components of the framework including sensing, probabilistic graph construction, criticality evaluation, assignment, and defender pursuit—are kept unchanged. A representative simulation from this non-optimal setting is shown in Fig.~\ref{fig:nonoptimal_summary}. The top-left panel of Fig.~\ref{fig:nonoptimal_summary} reports the evolution of attacker-status over time. The number of active attackers decreases monotonically from $10$ to $0$, while the cumulative number of intercepted attackers increases from $0$ to $10$ by approximately \(t \approx 140\). Over the same horizon, the breached-PZ curve remains identically zero in the displayed run. Thus, for the illustrated experiment, all attackers are neutralized before any hard-boundary breach occurs.

The top-right panel shows the mean criticality trajectories stratified by attacker behaviour mode. The four behaviour classes exhibit visibly different temporal criticality profiles, indicating that the proposed criticality layer remains responsive to behavioural heterogeneity even when the attacker dynamics are no longer generated by an optimal-control law. In particular, the biased-random and flanking-arc modes attain sustained moderate-to-high criticality over portions of the engagement, whereas the pure-random mode decays earlier and remains comparatively lower for much of the run shown. The bottom-left panel reports the graph statistics for the same experiment. Both the edge-count and average-edge-weight traces remain at zero throughout the run, indicating that, under the particular parameter realization shown, the uncertainty-aware graph layer does not form attacker-attacker edges. Consequently, in this representative case, the prioritization mechanism is driven primarily by the non-topological terms in the composite criticality score, rather than by graph-centrality contributions. The bottom-right panel provides the per-behaviour interception breakdown. The displayed totals are 4 biased-random attackers, 1 sinusoidal-weave attacker, 1 flanking-arc attacker, and 4 pure-random attackers, and in each case the intercepted count matches the total count. Accordingly, every attacker in each behaviour class is intercepted in the representative run.

\begin{figure}[t]
    \centering
    \includegraphics[width=\linewidth]{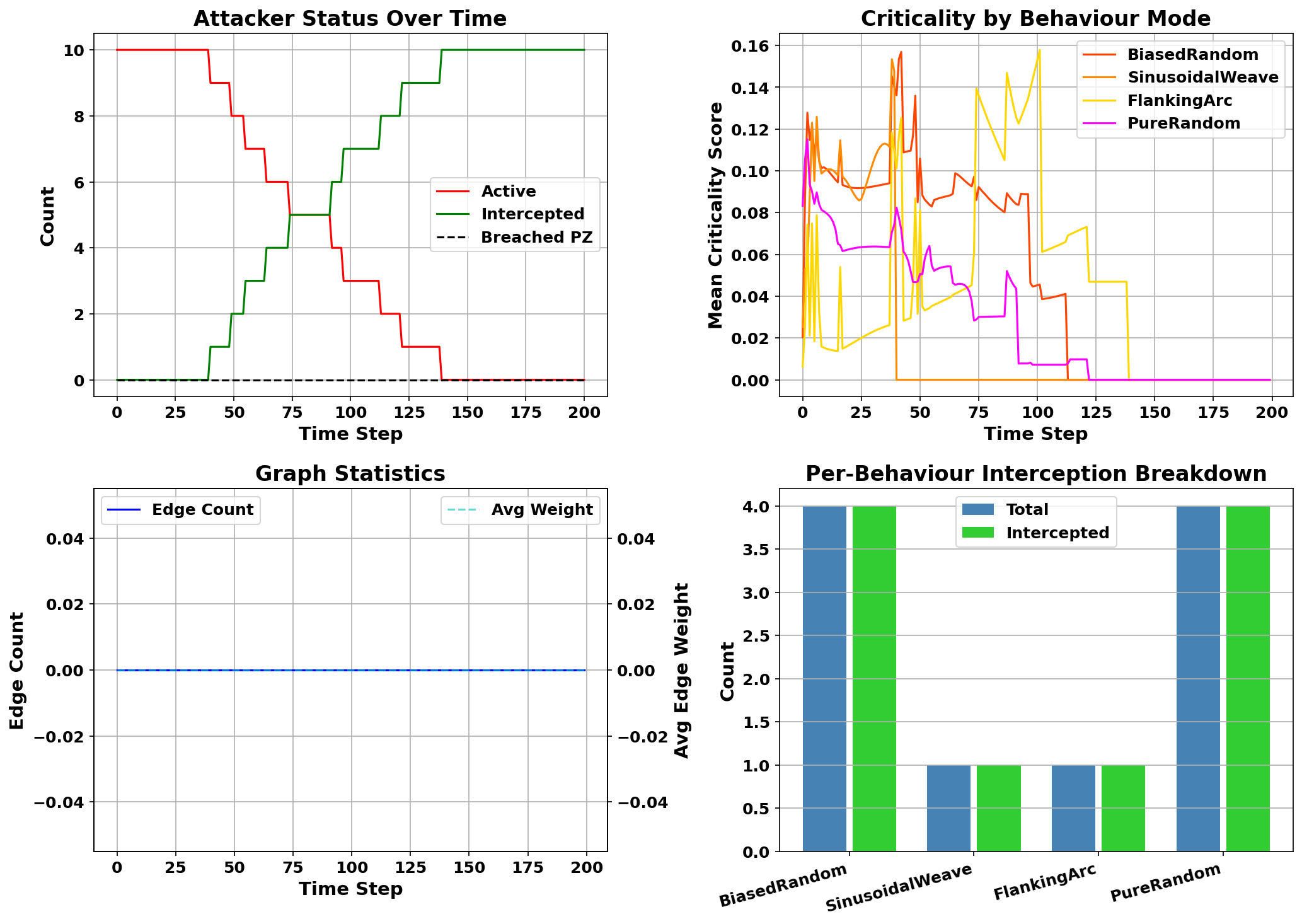}
    \caption{Representative simulation under non-optimal attacker dynamics. Top-left: time history of active, intercepted, and breached attackers. Top-right: mean attacker criticality grouped by behaviour mode. Bottom-left: edge count and average edge weight of the uncertainty-aware attacker graph. Bottom-right: per-behaviour totals and intercepted counts. In the illustrated run, all 10 attackers are intercepted, no attacker breaches the protected zone, and no graph edges are formed over the displayed horizon.}
    \label{fig:nonoptimal_summary}
\end{figure}

\subsection{Ablation Analysis}
\label{subsec:ablation_revised}

We perform an ablation analysis to isolate the role of the principal modules of the proposed defense architecture: (i) sensing-induced graph construction, (ii) centrality-based prioritization, (iii) Markov-chain risk prediction, (iv) switching and reassignment regulation, and (v) assignment strategy, under operating regimes chosen to activate different limiting mechanisms (Fig.~\ref{fig:ablation_blockdiagram}). Because the architecture is designed for uncertainty-driven closed-loop operation, this study has two goals: first, to identify the \emph{performance-limiting} module in each regime, and second, to quantify the resulting trade-off among reliability, first-capture latency, and computational load (Table~\ref{tab:ablation_summary}). All variants are evaluated through Monte Carlo simulations with $N_{\mathrm{runs}}=100$ trials per (regime, variant) pair, using identical seeded randomization across variants within each regime. We consider three regimes: (i) \emph{Nominal}, a decision-limited regime in which sensing and kinematics are sufficiently permissive that threat ranking and allocation materially affect the outcome; (ii) \emph{DegradedSensing}, a detection-limited regime induced by a conservative detection threshold, in which actionable multi-target windows are sparse and intermittent; and (iii) \emph{HardKinematics}, a physics-limited regime induced by reduced defender authority and tighter interception geometry, in which pursuit feasibility dominates closed-loop performance.

% \paragraph{Run-level severity and throughput.}
Since multi-attacker defense can produce partially successful episodes, breach-free reliability alone is not enough to describe the closed-loop behavior. We therefore also report the run-level breach and neutralization counts. Specifically, \texttt{breach\_count} measures breach severity, namely the number of attackers that penetrate in failure runs, while \texttt{neutralized\_count} measures neutralization throughput; both can vary with sensing uncertainty, switching, and allocation under uncertainty. This separation is important for interpretation: a module may have little effect on $\mathbb{P}(\mathrm{no\ breach})$ in a bottle necked regime, yet still improve neutralization throughput or reduce breach severity in failure runs.

% \paragraph{Internal statistics and alignment with the drift model.}
To connect the empirical outcomes to the drift-type mechanism underlying the theoretical analysis, we report the empirical drift proxy $\widehat{\Theta}=\bar{\eta}\,p_{\min}$, where $\bar{\eta}$ is the run-averaged execution fraction, consistent with $\eta_k$, and $p_{\min}$ is a robust lower-tail estimate of the logged feasibility/success statistic computed over eligible decision windows (Table~\ref{tab:ablation_summary}). This proxy is consistent with the effectiveness parameter $\Theta_k=p^\star(t_k)\eta_k$ used in the post-detection drift inequalities, and it serves as a diagnostic to distinguish between (i) \emph{information/feasibility-gated} regimes, where $\widehat{\Theta}$ sits near its floor, and (ii) \emph{decision-limited} regimes, where $\widehat{\Theta}$ remains nontrivial and ranking and assignment determine how opportunities are exploited. Accordingly, ablation effects are interpreted jointly through $\mathbb{P}(\mathrm{no\ breach})$, $\mathbb{E}[\kappa_1]$, and $\widehat{\Theta}$. As summarized in Fig.~\ref{fig:ablation_blockdiagram}, \texttt{NO\_MARKOV} disables the Markov risk predictor, \texttt{NO\_SWITCH} disables reassignment switching, \texttt{DET\_GRAPH} replaces overlap-based graph construction with a proximity graph, \texttt{NO\_CENTRALITY} removes centrality-based prioritization, \texttt{GREEDY\_ASSIGN} replaces the assignment method with a greedy rule, and \texttt{TIME\_ONLY} removes risk-aware weighting and collapses the decision objective to time-only weighting.

\subsubsection{Nominal regime}
\label{subsubsec:ablation_nominal_revised}

In the Nominal regime, \texttt{FULL} attains $\mathbb{P}(\mathrm{no\ breach})=0.230$ with $95\%$ CI $[0.147,0.313]$ and $\mathbb{E}[\kappa_1]=17.890$ with 95\% CI $[16.329,19.451]$ (Table~\ref{tab:ablation_summary}). Because sensing and kinematics are sufficiently permissive, the drift proxy remains of the order of $10^{-1}$ across the reported decision-layer variants, with $\widehat{\Theta}=0.155$ for \texttt{FULL}. This indicates that the system is not opportunity-scarce and that decision-layer choices can materially change breach outcomes by directing available engagement capacity toward the most consequential threats. In this regime, assignment quality is the most sensitive determinant of breach-free reliability: \texttt{GREEDY\_ASSIGN} reduces $\mathbb{P}(\mathrm{no\ breach})$ to $0.070$ with 95\% CI $[0.020,0.120]$ despite a comparable $\mathbb{E}[\kappa_1]$ (Table~\ref{tab:ablation_summary}). This behavior is consistent with misallocation: early captures may still occur, but later windows are spent on subcritical or poorly matched attackers, which increases the probability that at least one high-urgency attacker reaches the protected boundary.

% \paragraph{Role of Markov risk prediction and switching.}
Disabling Markov prediction (\texttt{NO\_MARKOV}) preserves breach-free reliability at this operating point while reducing mean runtime from $2.611\,\mathrm{s}$ to $0.387\,\mathrm{s}$  (Table~\ref{tab:ablation_summary}). In a decision-limited regime where the assignment layer already has frequent actionable windows, the Markov layer mainly refines ordering and short-horizon prioritization, so it need not change the leading-order breach-free reliability, although it remains relevant under stress and for shaping severity and throughput in failure runs. Similarly, disabling switching (\texttt{NO\_SWITCH}) yields comparable reliability ($0.220$) and $\mathbb{E}[\kappa_1]=17.760$ (Table~\ref{tab:ablation_summary}), which is consistent with switching being a non-monotone design lever whose benefit depends on whether the information state changes enough to justify reassignment without introducing switching-induced execution loss.

\subsubsection{DegradedSensing regime}
\label{subsubsec:ablation_degraded_revised}

The DegradedSensing regime enforces conservative detection and produces an information-limited closed loop in which multi-target actionable windows are infrequent, so downstream refinements are often masked. This bottleneck is reflected by saturation of $\widehat{\Theta}$ at its floor, $\widehat{\Theta}=10^{-3}$, for all reported variants (Table~\ref{tab:ablation_summary}), which indicates that the post-detection drift mechanism is seldom realized at a rate sufficient to guarantee consistent depletion within the finite horizon. Under this regime, \texttt{FULL} achieves $\mathbb{P}(\mathrm{no\ breach})=0.270$ with $95\%$ CI $[0.183,0.357]$ and $\mathbb{E}[\kappa_1]=16.890$ with $95\%$ CI $[15.635,18.145]$, and disabling Markov prediction produces comparable aggregates (Table~\ref{tab:ablation_summary}). This is consistent with the interpretation that predictive refinement cannot compensate for the scarcity of high-confidence engagement opportunities created by conservative sensing. Even so, allocation remains consequential: \texttt{GREEDY\_ASSIGN} reduces breach-free reliability to $0.110$ with $95\%$ CI $[0.048,0.172]$ (Table~\ref{tab:ablation_summary}), showing that rare opportunities must still be used efficiently to prevent breaches under partial observability.

\subsubsection{HardKinematics regime}
\label{subsubsec:ablation_hardkin_revised}

HardKinematics stresses pursuit feasibility by reducing defender authority and tightening interception geometry, so the dominant limitation shifts from decision quality to physical feasibility. In this regime, all variants exhibit reduced breach-free reliability and increased first-hit latency relative to Nominal (Table~\ref{tab:ablation_summary}). In particular, \texttt{NO\_MARKOV} achieves a breach-free reliability close to \texttt{FULL} while being substantially faster computationally, which indicates that when capture feasibility and geometry dominate the closed-loop outcome, additional predictive computation does not change the active bottleneck, even though it remains aligned with the risk-aware design of the full architecture.

% \paragraph*{Implications}
Collectively, Fig.~\ref{fig:ablation_blockdiagram} and Table~\ref{tab:ablation_summary} support three regime-consistent conclusions: (i) globally optimal assignment is performance-critical in decision-limited regimes, (ii) prediction and switching act as information-to-action refinements whose value increases when sensing provides sufficiently informative and frequent decision windows, and (iii) in detection-limited and physics-limited regimes, the masking of decision-layer refinements is diagnostically informative for regime identification and tuning, rather than contradictory to the role of uncertainty-aware modules.
\begin{table*}[t]
\centering
\caption{Ablation results across operating regimes. We report breach-free reliability $\mathbb{P}(\mathrm{no\ breach})$ with 95\% CI, mean first-interception (first IF-trigger) epoch $\mathbb{E}[\kappa_1]$ with 95\% CI, mean wall-clock runtime, and the empirical drift proxy $\widehat{\Theta}$ with 95\% CI. Each (regime, variant) uses $N_{\mathrm{runs}}=100$ Monte Carlo trials.}
\label{tab:ablation_summary}
\scriptsize
\setlength{\tabcolsep}{4.5pt}
\begin{adjustbox}{max width=\textwidth}
\begin{tabular}{llcccccccccccc}
\toprule
Regime & Variant &
$\mathbb{P}(\mathrm{no\ breach})$ & CI$_{95}$ low & CI$_{95}$ high &
$\mathbb{E}[\kappa_1]$ & CI$_{95}$ low & CI$_{95}$ high &
Runtime (s)  & $\widehat{\Theta}$ & CI$_{95}$ low & CI$_{95}$ high \\
\midrule
\multicolumn{12}{l}{\textbf{Nominal} $(v_{\mathrm{def}}^{\max},P_{\mathrm{detect}}^{\mathrm{th}},r_{\mathrm{int}},\alpha_{\mathrm{graph\text{-}overlap}})=(4.0,0.10,1.5,0.01)$} \\
\midrule
Nominal & FULL & 0.23 & 0.147& 0.313 & 17.89 & 16.329 & 19.451 & 2.611  & 0.155 & 0.127 & 0.183 \\
Nominal & DET\_GRAPH & 0.25 & 0.165 & 0.335 & 17.98 & 16.431 & 19.528 & 2.770  & 0.160 & 0.131 & 0.189 \\
Nominal & NO\_CENTRALITY & 0.26 & 0.173 & 0.346 & 17.92 & 16.377 & 19.463 & 2.835 & 0.164 & 0.136 & 0.192 \\
Nominal & NO\_MARKOV & 0.23 & 0.147 & 0.313 & 18.14 & 16.536 & 19.743 & 0.386 & 0.160 & 0.133 & 0.188 \\
Nominal & NO\_SWITCH & 0.22 & 0.138 & 0.301 & 17.76 & 16.323 & 19.191 & 2.963 & 0.159 & 0.131 & 0.188 \\
Nominal & GREEDY\_ASSIGN & 0.07 & 0.019 & 0.120 & 18.09 & 16.720 & 19.459 & 3.237  & 0.190 & 0.163 & 0.218 \\
Nominal & TIME\_ONLY & 0.26 & 0.173 & 0.346 & 19.47 & 17.546 & 21.393 & 0.230  & 0.147 & 0.118 & 0.175 \\
\midrule
\multicolumn{12}{l}{\textbf{DegradedSensing} $(v_{\mathrm{def}}^{\max},P_{\mathrm{detect}}^{\mathrm{th}},r_{\mathrm{int}},\alpha_{\mathrm{graph\text{-}overlap}})=(4.0,0.35,1.5,0.01)$} \\
\midrule
DegradedSensing & FULL & 0.27 & 0.182 & 0.357 & 16.89 & 15.635 & 18.144 & 9.221  & 0.001 & 0.001 & 0.001 \\
DegradedSensing & DET\_GRAPH & 0.27 & 0.182 & 0.357 & 16.89 & 15.635 & 18.144 & 2.641  & 0.001 & 0.001 & 0.001 \\
DegradedSensing & NO\_CENTRALITY & 0.29 & 0.200 & 0.379 & 16.80 & 15.589 & 18.010 & 2.414  & 0.001 & 0.001 & 0.001 \\
DegradedSensing & NO\_MARKOV & 0.29 & 0.200 & 0.379 & 16.88 & 15.633 & 18.126 & 0.225  & 0.001 & 0.001 & 0.001 \\
DegradedSensing & NO\_SWITCH & 0.27 & 0.182 & 0.357 & 16.89 & 15.635 & 18.144 & 2.622  & 0.001 & 0.001 & 0.001 \\
DegradedSensing & GREEDY\_ASSIGN & 0.11 & 0.048 & 0.171 & 16.90 & 15.784 & 18.015 & 2.832 & 0.001 & 0.001 & 0.001 \\
DegradedSensing & TIME\_ONLY & 0.27 & 0.182 & 0.357 & 17.98 & 16.548 & 19.411 & 0.239  & 0.001 & 0.001 & 0.001 \\
\midrule
\multicolumn{12}{l}{\textbf{HardKinematics} $(v_{\mathrm{def}}^{\max},P_{\mathrm{detect}}^{\mathrm{th}},r_{\mathrm{int}},\alpha_{\mathrm{graph\text{-}overlap}})=(3.5,0.10,1.2,0.01)$} \\
\midrule
HardKinematics & FULL & 0.17 & 0.096 & 0.243 & 21.39 & 19.463 & 23.316 & 3.169  & 0.190 & 0.159 & 0.220 \\
HardKinematics & DET\_GRAPH & 0.16 & 0.087 & 0.232 & 21.24 & 19.346 & 23.133 & 6.020  & 0.180 & 0.150 & 0.210 \\
HardKinematics & NO\_CENTRALITY & 0.18 & 0.104 & 0.255 & 21.54 & 19.605 & 23.474 & 21.225  & 0.178 & 0.147 & 0.209 \\
HardKinematics & NO\_MARKOV & 0.17 & 0.096 & 0.243 & 20.83 & 19.212 & 22.447 & 0.427  & 0.183 & 0.151 & 0.215 \\
HardKinematics & NO\_SWITCH & 0.17 & 0.096 & 0.243 & 20.91 & 19.174 & 22.645 & 6.313  & 0.174 & 0.142 & 0.205 \\
HardKinematics & GREEDY\_ASSIGN & 0.09 & 0.033 & 0.146 & 20.58 & 19.265 & 21.894 & 6.344  & 0.222 & 0.195 & 0.248 \\
HardKinematics & TIME\_ONLY & 0.16 & 0.087 & 0.232 & 22.98 & 20.666 & 25.293 & 0.716  & 0.192 & 0.162 & 0.222 \\
\bottomrule
\end{tabular}
\end{adjustbox}
\end{table*}

% ------------------ FIGURE --------------------
\begin{figure*}[http]
\centering
\resizebox{\textwidth}{!}{%
\begin{tikzpicture}[
    font=\small,
    % ---- Color palette (soft, print-friendly) ----
    cDet/.style   ={fill=Blue!10,   draw=Blue!55!black},
    cGraph/.style ={fill=TealBlue!10, draw=TealBlue!60!black},
    cCent/.style  ={fill=OliveGreen!10, draw=OliveGreen!55!black},
    cMkv/.style   ={fill=Plum!10,   draw=Plum!55!black},
    cSw/.style    ={fill=Orange!12, draw=Orange!65!black},
    cAsg/.style   ={fill=Red!10,    draw=Red!60!black},
    cExec/.style  ={fill=Gray!12,   draw=Gray!70!black},
    % ---- Node styles ----
    block/.style={rectangle, rounded corners=1.2pt, align=center,
                  minimum height=8mm, minimum width=20mm, line width=0.45pt},
    switch/.style={rectangle, align=left, inner sep=1.5pt, font=\scriptsize,
                   fill=Gray!6, draw=Gray!55, line width=0.35pt},
    arrow/.style={-Latex, line width=0.45pt},
    dashedbox/.style={draw, dashed, rounded corners=2pt, inner sep=3pt,
                      fill=Gray!4, fill opacity=0.35, text opacity=1}
]

% Blocks
\node[block, cDet] (det) {Perception / Detection\\[-1mm]\scriptsize detections, $P_{\mathrm{detect}}$};

\node[block, cGraph, right=5mm of det] (graph) {Graph Builder};
\node[switch, below=1mm of graph] (graphsw) {Mode select:\\
\texttt{overlap} (FULL)\\
\texttt{proximity} (DET\_GRAPH)};

\node[block, cCent, right=6mm of graph] (cent) {Centrality Scoring};
\node[switch, below=1mm of cent] (centsw) {\textbf{OFF} in \texttt{NO\_CENTRALITY}};

\node[block, cMkv, right=6mm of cent] (mkv) {Markov Risk Predictor\\[-1mm]\scriptsize risk score};
\node[switch, below=1mm of mkv] (mkvsw) {\textbf{OFF} in \texttt{NO\_MARKOV}};

\node[block, cSw, right=6mm of mkv] (sw) {Policy Switching};
\node[switch, below=1mm of sw] (swsw) {\textbf{OFF} in \texttt{NO\_SWITCH}};

\node[block, cAsg, right=6mm of sw] (assign) {Assignment};
\node[switch, below=1mm of assign] (asw) {Mode select:\\
Hungarian (FULL)\\
Greedy (\texttt{GREEDY\_ASSIGN})};

\node[block, cExec, right=5mm of assign] (exec) {Interception / Execution\\[-1mm]\scriptsize neutralizations, breaches};

% Arrows
\draw[arrow] (det) -- (graph);
\draw[arrow] (graph) -- (cent);
\draw[arrow] (cent) -- (mkv);
\draw[arrow] (mkv) -- (sw);
\draw[arrow] (sw) -- (assign);
\draw[arrow] (assign) -- (exec);

% Decision layer box
\node[dashedbox, fit=(graph) (cent) (mkv) (sw) (assign) (graphsw) (centsw) (mkvsw) (swsw) (asw),
      label={[font=\small]above:Decision Layer}] (dl) {};

\node[switch, below=8mm of dl.south, anchor=north] (leg)
{Legend: \textbf{OFF} disables module (ablation); Mode select switches implementation.};

\end{tikzpicture}%
}
\caption{Modular defense pipeline and ablation toggles (e.g., \texttt{NO\_MARKOV}, \texttt{NO\_SWITCH}).}
\label{fig:ablation_blockdiagram}
\end{figure*}

\subsection{Sensitivity Analysis}
\label{subsec:sensitivity_revised}

In this section, we report results from sensitivity analysis to quantify how breach-free reliability and first-interception latency vary with key tunable parameters governing (i) detection triggering via $P_{\det}^{\mathrm{th}}$, (ii) defender actuation authority via $v_{def}^{\max}$, and (iii) interaction-graph coupling via the overlap threshold $\alpha_{\mathrm{graph\text{-}overlap}}$.
Unlike the ablation study, which removes modules to attribute performance, the sensitivity study retains the full architecture and evaluates parametric robustness under stochastic sensing and engagement (Tables~\ref{tab:sensA_table}--~\ref{tab:sensB_table}; Fig.~\ref{fig:sensitivity_combined}). For each parameter cell, we run $N_{\mathrm{reps}}=100$ Monte Carlo trials over a horizon of $T=150$ timesteps and report aggregated statistics (Tables~\ref{tab:sensA_table}--~\ref{tab:sensB_table}). We report breach-free reliability $\mathbb{P}(\mathrm{no\ breach})$ with $95\%$ confidence intervals and the mean breach severity $\mathbb{E}[\texttt{breach\_count}]$.
To quantify capture triggering under stochastic sensing, we report the mean ($\mathbb{E}[\kappa_1]$) and median first-interception epoch ($\mathrm{med}(\kappa_1)$), together with the mean active edge count as an empirical coordination proxy; we additionally report drift-aligned progress diagnostics (including $\Theta_{\mathrm{mean}}$ and the feasibility statistic $p_{\min,\mathrm{med}}$) because edge density alone does not certify effective coordination under feasibility and execution constraints.

% \paragraph{Interpretation principle- reliability versus latency}

The first-hit statistic $\kappa_1$ quantifies how rapidly the closed loop \emph{triggers its first successful capture}, whereas $\mathbb{P}(\mathrm{no\ breach})$ is a strict end-to-end metric that depends on the \emph{entire subsequent sequence} of detections, allocations, executed engagements, and residual attacker dynamics over the full episode.
Consequently, parameter cells can exhibit similar $\mathrm{med}(\kappa_1)$ yet materially different breach-free reliability if the post-first-capture execution fraction and the effective engagement success floor differ, which is precisely the role of the progress diagnostics reported in Tables~\ref{tab:sensA_table}--~\ref{tab:sensB_table}.

\subsubsection{Sweep A}
\label{subsubsec:sweepA_revised}

We sweep the detection threshold $P_{\det}^{\mathrm{th}}$ and defender actuation authority $v_{def}^{\max}$ while holding the overlap parameter fixed (Table~\ref{tab:sensA_table}).
Figures~\ref{fig:sensA_success}--~\ref{fig:sensA_tau1} visualize heatmaps of $\mathbb{P}(\mathrm{no\ breach})$ and $\mathrm{med}(\kappa_1)$, respectively.
Across the grid, breach-free reliability spans $0.10$--$0.31$ (Table~\ref{tab:sensA_table}), confirming that the sweep probes a nontrivial operating envelope and is therefore informative for tuning rather than being confined to degenerate success or failure regions.
At fixed $P_{\det}^{\mathrm{th}}$, $\mathbb{P}(\mathrm{no\ breach})$ is not uniformly monotone in $v_{def}^{\max}$, indicating a genuine interaction between kinematic authority, windowed feasibility, and stochastic triggering under reassignment and execution constraints, rather than a trivial ``more speed always helps'' artifact.
Furthermore, for conservative thresholds, the progress diagnostics collapse (Table~\ref{tab:sensA_table} reports $\Theta_{\mathrm{mean}}\approx 10^{-3}$ and $p_{\min,\mathrm{med}}\approx 10^{-3}$ in the most gated cells), which is consistent with an information-limited regime in which the post-detection depletion mechanism is rarely activated; in this region, improvements in actuation authority cannot be systematically converted into reliable containment within the finite horizon.

\subsubsection{Sweep B}
\label{subsubsec:sweepB_revised}

We sweep the detection threshold $P_{\det}^{\mathrm{th}}$ and overlap edge-formation threshold $\alpha_{\mathrm{graph\text{-}overlap}}$ while holding defender speed fixed (Table~\ref{tab:sensB_table}).
Figs.~\ref{fig:sensB_success}-~\ref{fig:sensB_tau1} depict heatmaps of $\mathbb{P}(\mathrm{no\ breach})$ and $\mathrm{med}(\kappa_1)$, respectively.
Reliability is not expected to vary monotonically with $\alpha_{\mathrm{graph\text{-}overlap}}$: decreasing $\alpha_{\mathrm{graph\text{-}overlap}}$ makes the inferred interaction graph dense (increasing coupling and potential coordination overhead), whereas increasing it results in the graph becoming sparse (reducing coupling and potentially discarding informative structure), and the net effect depends on whether the induced coupling aligns with feasible engagement opportunities and the allocation objective under uncertainty.
Accordingly, Table~\ref{tab:sensB_table} reports both edge density and progress diagnostics. Regimes in which $\Theta_{\mathrm{mean}}$ and $p_{\min,\mathrm{med}}$ saturate near their floor are effectively driftless over the finite horizon; in such cases, adjusting $\alpha_{\mathrm{graph\text{-}overlap}}$ cannot restore sustained progress. By contrast, in coordination-active cells characterized by nontrivial $\Theta_{\mathrm{mean}}$, overlap tuning directly reshapes the risk-ranking landscape that governs assignment and can therefore jointly influence reliability and first-hit latency.
Finally, the joint reporting of mean and median $\kappa_1$ reveals tail behavior under stochastic triggering: cells with comparable medians but inflated means indicate heavy-tailed first-hit times, motivating the use of both summaries to characterize stability of capture triggering under intermittent observability rather than relying on a single central tendency~statistic.

% ------------------- Table A (UPDATED, 100 runs/cell) -------------------
\begin{table*}[t]
\centering
\caption{Sensitivity sweep A: aggregated Monte Carlo results over $(P_{\mathrm{detect}}^{\mathrm{th}}, v_{\mathrm{def}}^{\max})$ with $N_{\mathrm{reps}}=100$ trials per cell and horizon $T=150$.}
\label{tab:sensA_table}
\scriptsize
\setlength{\tabcolsep}{4.0pt}
\begin{tabular}{ccccccccccccc}
\toprule
$P_{\mathrm{detect}}^{\mathrm{th}}$ & $v_{\mathrm{def}}^{\max}$ & runs &
$\mathbb{P}(\mathrm{no\ breach})$ & CI$_{95}$ (low) & CI$_{95}$ (high) &
$\mathbb{E}[\texttt{breach\_count}]$ &
$\kappa_1^{\mathrm{mean}}$ & $\kappa_1^{\mathrm{med}}$ &
$\Theta$ & $p_{\min}^{\mathrm{med}}$ &
$\mathbb{E}[\#\mathrm{edges}]$ & runtime (s) \\
\midrule
0.05 & 3.0 & 100 & 0.210 & 0.142 & 0.300 & 1.460 & 20.780 & 19.000 & 0.760 & 0.833 & 2.338 & 6.819 \\
0.05 & 3.5 & 100 & 0.210 & 0.142 & 0.300 & 1.590 & 19.890 & 17.500 & 0.753 & 0.824 & 2.061 & 6.552 \\
0.05 & 4.0 & 100 & 0.150 & 0.093 & 0.233 & 1.700 & 17.730 & 16.000 & 0.722 & 0.750 & 2.113 & 231.112 \\
0.05 & 4.5 & 100 & 0.160 & 0.101 & 0.244 & 1.570 & 16.530 & 14.000 & 0.736 & 0.833 & 2.205 & 6.195 \\
\midrule
0.10 & 3.0 & 100 & 0.190 & 0.125 & 0.278 & 1.480 & 19.860 & 18.000 & 0.103 & $10^{-3}$ & 0.823 & 6.057 \\
0.10 & 3.5 & 100 & 0.170 & 0.109 & 0.255 & 1.700 & 17.790 & 16.000 & 0.062 & $10^{-3}$ & 0.810 & 2.734 \\
0.10 & 4.0 & 100 & 0.190 & 0.125 & 0.278 & 1.480 & 16.590 & 15.000 & 0.112 & $10^{-3}$ & 0.839 & 4.464 \\
0.10 & 4.5 & 100 & 0.230 & 0.158 & 0.322 & 1.530 & 16.010 & 14.000 & 0.057 & $10^{-3}$ & 0.717 & 6.000 \\
\midrule
0.20 & 3.0 & 100 & 0.100 & 0.055 & 0.174 & 1.660 & 20.460 & 19.000 & $10^{-3}$ & $10^{-3}$ & 0.033 & 5.467 \\
0.20 & 3.5 & 100 & 0.210 & 0.142 & 0.300 & 1.340 & 16.900 & 16.000 & $10^{-3}$ & $10^{-3}$ & 0.030 & 3.654 \\
0.20 & 4.0 & 100 & 0.310 & 0.225 & 0.410 & 1.110 & 15.290 & 14.000 & $10^{-3}$ & $10^{-3}$ & 0.026 & 3.186 \\
0.20 & 4.5 & 100 & 0.220 & 0.151 & 0.310 & 1.280 & 12.950 & 12.000 & $10^{-3}$ & $10^{-3}$ & 0.053 & 3.379 \\
\midrule
0.25 & 3.0 & 100 & 0.160 & 0.101 & 0.244 & 1.430 & 18.080 & 17.000 & $10^{-3}$ & $10^{-3}$ & 0.000 & 3.922 \\
0.25 & 3.5 & 100 & 0.240 & 0.167 & 0.332 & 1.300 & 16.650 & 16.000 & $10^{-3}$ & $10^{-3}$ & 0.006 & 2.835 \\
0.25 & 4.0 & 100 & 0.230 & 0.158 & 0.322 & 1.280 & 15.600 & 14.000 & $10^{-3}$ & $10^{-3}$ & 0.001 & 3.011 \\
0.25 & 4.5 & 100 & 0.240 & 0.167 & 0.332 & 1.350 & 15.510 & 14.000 & $10^{-3}$ & $10^{-3}$ & 0.002 & 3.324 \\
\midrule
0.35 & 3.0 & 100 & 0.260 & 0.186 & 0.352 & 1.300 & 18.630 & 17.000 & $10^{-3}$ & $10^{-3}$ & 0.000 & 3.721 \\
0.35 & 3.5 & 100 & 0.290 & 0.210 & 0.384 & 1.120 & 17.460 & 16.000 & $10^{-3}$ & $10^{-3}$ & 0.000 & 2.424 \\
0.35 & 4.0 & 100 & 0.280 & 0.201 & 0.375 & 1.210 & 16.020 & 15.000 & $10^{-3}$ & $10^{-3}$ & 0.000 & 3.192 \\
0.35 & 4.5 & 100 & 0.250 & 0.175 & 0.343 & 1.370 & 14.880 & 13.000 & $10^{-3}$ & $10^{-3}$ & 0.000 & 5.299 \\
\midrule
0.40 & 3.0 & 100 & 0.280 & 0.201 & 0.375 & 1.270 & 18.360 & 18.000 & $10^{-3}$ & $10^{-3}$ & 0.000 & 3.152 \\
0.40 & 3.5 & 100 & 0.240 & 0.167 & 0.332 & 1.260 & 16.860 & 16.000 & $10^{-3}$ & $10^{-3}$ & 0.000 & 2.460 \\
0.40 & 4.0 & 100 & 0.280 & 0.201 & 0.375 & 1.340 & 16.710 & 14.000 & $10^{-3}$ & $10^{-3}$ & 0.000 & 52.577 \\
0.40 & 4.5 & 100 & 0.170 & 0.109 & 0.255 & 1.560 & 14.180 & 13.500 & $10^{-3}$ & $10^{-3}$ & 0.000 & 4.797 \\
\bottomrule
\end{tabular}
\end{table*}

% ------------------- Table B (UPDATED, 100 runs/cell) -------------------
\begin{table*}[htp]
\centering
\caption{Sensitivity sweep B: aggregated Monte Carlo results over $(P_{\mathrm{detect}}^{\mathrm{th}}, \alpha_{\mathrm{graph\text{-}overlap}})$ with $N_{\mathrm{reps}}=100$ trials per cell and horizon $T=150$.}
\label{tab:sensB_table}
\scriptsize
\setlength{\tabcolsep}{4.0pt}
\begin{tabular}{ccccccccccccc}
\toprule
$P_{\mathrm{detect}}^{\mathrm{th}}$ & $\alpha_{\mathrm{graph\text{-}overlap}}$ & runs &
$\mathbb{P}(\mathrm{no\ breach})$ & CI$_{95}$ (low) & CI$_{95}$ (high) &
$\mathbb{E}[\texttt{breach\_count}]$ &
$\kappa_1^{\mathrm{mean}}$ & $\kappa_1^{\mathrm{med}}$ &
$\Theta$ & $p_{\min}^{\mathrm{med}}$ &
$\mathbb{E}[\#\mathrm{edges}]$ & runtime (s) \\
\midrule
0.05 & 0.005 & 100 & 0.260 & 0.186 & 0.352 & 1.360 & 17.960 & 16.000 & 0.728 & 0.733 & 2.279 & 6.588 \\
0.05 & 0.010 & 100 & 0.130 & 0.075 & 0.214 & 1.880 & 18.120 & 16.000 & 0.751 & 0.817 & 2.249 & 6.455 \\
0.05 & 0.020 & 100 & 0.150 & 0.093 & 0.233 & 1.650 & 16.270 & 14.000 & 0.733 & 0.775 & 1.930 & 3.327 \\
0.05 & 0.050 & 100 & 0.210 & 0.142 & 0.300 & 1.460 & 17.410 & 15.000 & 0.712 & 0.735 & 1.751 & 4.191 \\
0.05 & 0.060 & 100 & 0.220 & 0.151 & 0.310 & 1.410 & 17.360 & 15.000 & 0.690 & 0.733 & 1.587 & 4.012 \\
0.05 & 0.080 & 100 & 0.140 & 0.082 & 0.225 & 1.780 & 17.350 & 15.000 & 0.742 & 0.817 & 1.582 & 5.979 \\
\midrule
0.10 & 0.005 & 100 & 0.210 & 0.142 & 0.300 & 1.510 & 16.530 & 14.000 & 0.079 & $10^{-3}$ & 0.839 & 5.520 \\
0.10 & 0.010 & 100 & 0.200 & 0.133 & 0.287 & 1.450 & 16.500 & 15.000 & 0.109 & $10^{-3}$ & 0.853 & 5.478 \\
0.10 & 0.020 & 100 & 0.240 & 0.167 & 0.332 & 1.340 & 15.780 & 14.000 & 0.071 & $10^{-3}$ & 0.717 & 3.325 \\
0.10 & 0.050 & 100 & 0.240 & 0.167 & 0.332 & 1.340 & 15.890 & 14.000 & 0.085 & $10^{-3}$ & 0.746 & 3.548 \\
0.10 & 0.060 & 100 & 0.210 & 0.142 & 0.300 & 1.450 & 15.950 & 14.000 & 0.090 & $10^{-3}$ & 0.707 & 5.550 \\
0.10 & 0.080 & 100 & 0.240 & 0.167 & 0.332 & 1.300 & 18.120 & 15.000 & 0.070 & $10^{-3}$ & 0.695 & 5.633 \\
\midrule
0.20 & 0.005 & 100 & 0.370 & 0.282 & 0.467 & 0.910 & 15.990 & 15.000 & $10^{-3}$ & $10^{-3}$ & 0.025 & 5.178 \\
0.20 & 0.010 & 100 & 0.260 & 0.186 & 0.352 & 1.180 & 15.830 & 14.000 & $10^{-3}$ & $10^{-3}$ & 0.030 & 5.785 \\
0.20 & 0.020 & 100 & 0.270 & 0.194 & 0.363 & 1.110 & 14.430 & 14.000 & $10^{-3}$ & $10^{-3}$ & 0.056 & 3.211 \\
0.20 & 0.050 & 100 & 0.260 & 0.186 & 0.352 & 1.170 & 16.660 & 15.000 & $10^{-3}$ & $10^{-3}$ & 0.034 & 3.380 \\
0.20 & 0.060 & 100 & 0.200 & 0.133 & 0.287 & 1.380 & 16.880 & 15.000 & $10^{-3}$ & $10^{-3}$ & 0.051 & 4.931 \\
0.20 & 0.080 & 100 & 0.260 & 0.186 & 0.352 & 1.090 & 16.060 & 15.000 & $10^{-3}$ & $10^{-3}$ & 0.018 & 5.789 \\
\midrule
0.25 & 0.005 & 100 & 0.250 & 0.175 & 0.343 & 1.180 & 16.520 & 15.000 & $10^{-3}$ & $10^{-3}$ & 0.005 & 5.278 \\
0.25 & 0.010 & 100 & 0.270 & 0.194 & 0.363 & 1.090 & 15.340 & 14.000 & $10^{-3}$ & $10^{-3}$ & 0.002 & 5.503 \\
0.25 & 0.020 & 100 & 0.160 & 0.101 & 0.244 & 1.520 & 16.810 & 15.000 & $10^{-3}$ & $10^{-3}$ & 0.001 & 3.194 \\
0.25 & 0.050 & 100 & 0.230 & 0.158 & 0.322 & 1.170 & 14.610 & 14.500 & $10^{-3}$ & $10^{-3}$ & 0.002 & 3.060 \\
0.25 & 0.060 & 100 & 0.240 & 0.167 & 0.332 & 1.150 & 15.980 & 14.000 & $10^{-3}$ & $10^{-3}$ & 0.004 & 4.922 \\
0.25 & 0.080 & 100 & 0.250 & 0.175 & 0.343 & 1.160 & 16.550 & 14.500 & $10^{-3}$ & $10^{-3}$ & 0.004 & 5.647 \\
\midrule
0.35 & 0.005 & 100 & 0.320 & 0.234 & 0.419 & 0.960 & 15.660 & 15.000 & $10^{-3}$ & $10^{-3}$ & 0.000 & 5.427 \\
0.35 & 0.010 & 100 & 0.290 & 0.210 & 0.384 & 1.070 & 16.100 & 15.000 & $10^{-3}$ & $10^{-3}$ & 0.000 & 5.918 \\
0.35 & 0.020 & 100 & 0.270 & 0.194 & 0.363 & 1.010 & 16.550 & 15.000 & $10^{-3}$ & $10^{-3}$ & 0.000 & 3.261 \\
\bottomrule
\end{tabular}
\end{table*}

\begin{figure*}[t]
    \centering    
    
    % Top Left: Success Rate
    \subfloat[Reliability $\mathbb{P}(\mathrm{no\ breach})$ over $(P_{\mathrm{detect}}^{\mathrm{th}}, v_{\mathrm{def}}^{\max})$]{
        \includegraphics[width=0.48\textwidth]{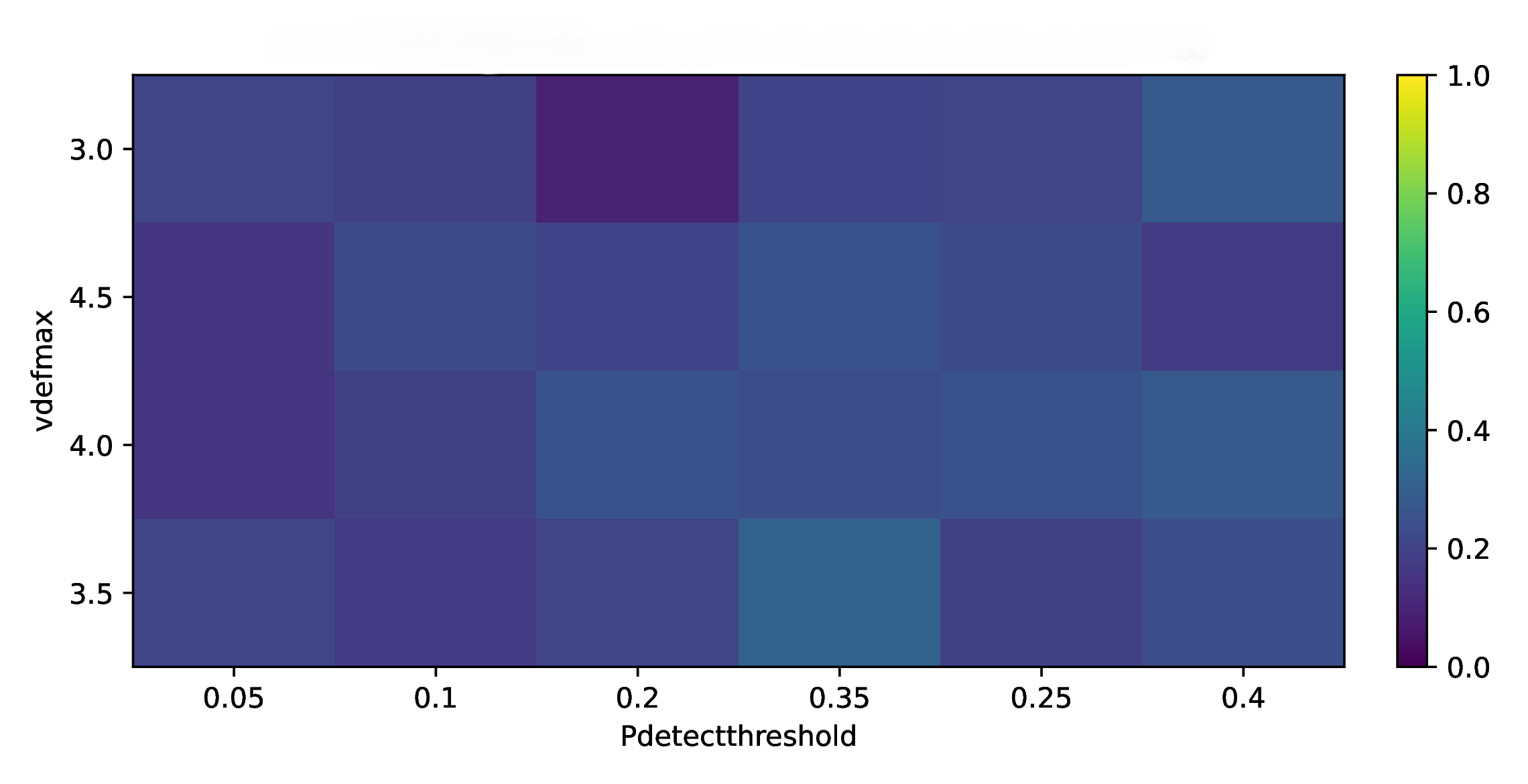}
        \label{fig:sensA_success}
    }
    \hfill
    % Top Right: Median Tau1
    \subfloat[Median $\kappa_1$ over $(P_{\mathrm{detect}}^{\mathrm{th}}, v_{\mathrm{def}}^{\max})$]{
        \includegraphics[width=0.48\textwidth]{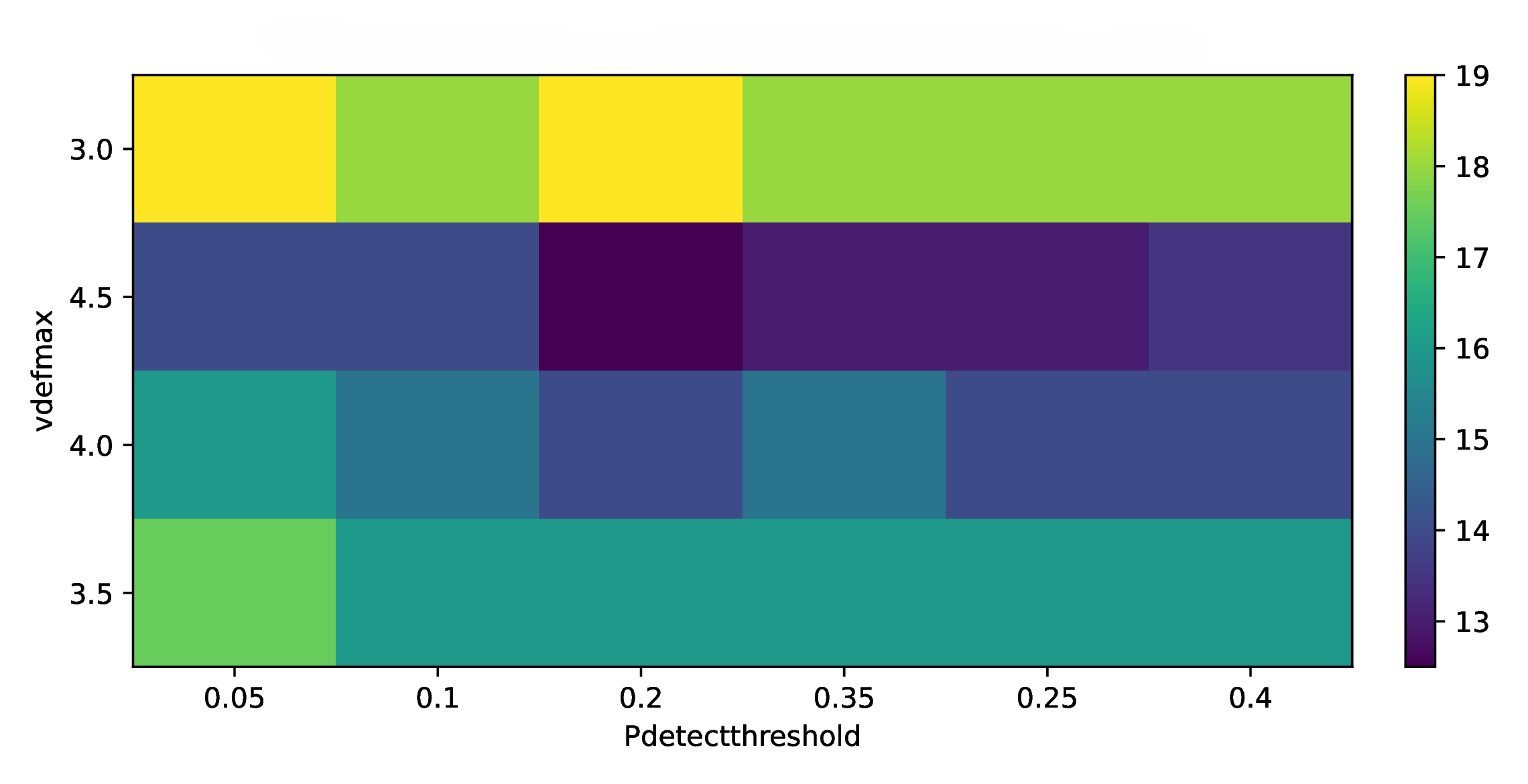}
        \label{fig:sensA_tau1}
    }
    
    \vspace{0.5cm}
    
    % Bottom Left: Success Rate
    \subfloat[Reliability $\mathbb{P}(\mathrm{no\ breach})$ over $(P_{\mathrm{detect}}^{\mathrm{th}}, \alpha_{\mathrm{graph\text{-}overlap}})$]{
        \includegraphics[width=0.48\textwidth]{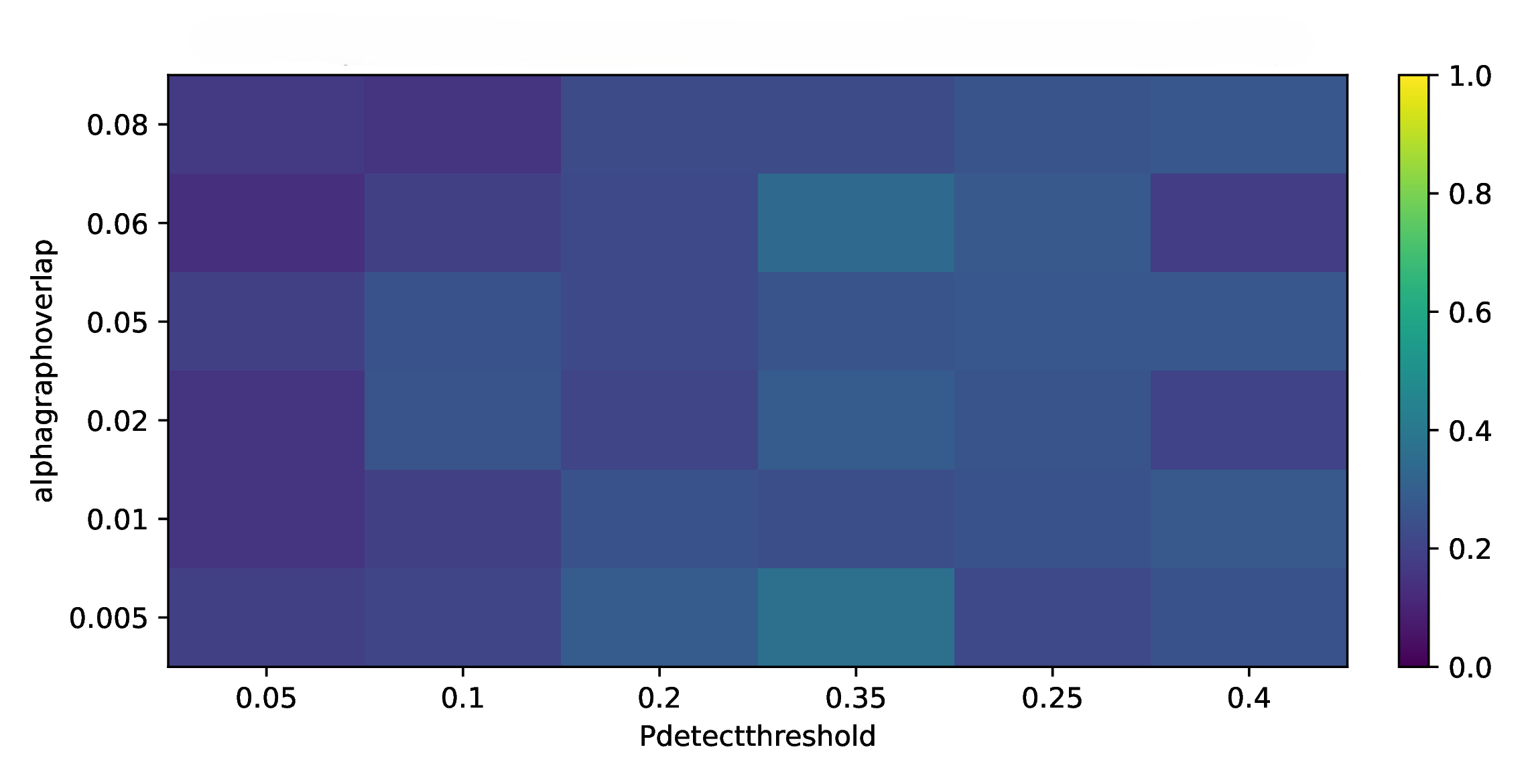}
        \label{fig:sensB_success}
    }
    \hfill
    % Bottom Right: Median Tau1
    \subfloat[Median $\kappa_1$ over $(P_{\mathrm{detect}}^{\mathrm{th}}, \alpha_{\mathrm{graph\text{-}overlap}})$]{
        \includegraphics[width=0.48\textwidth]{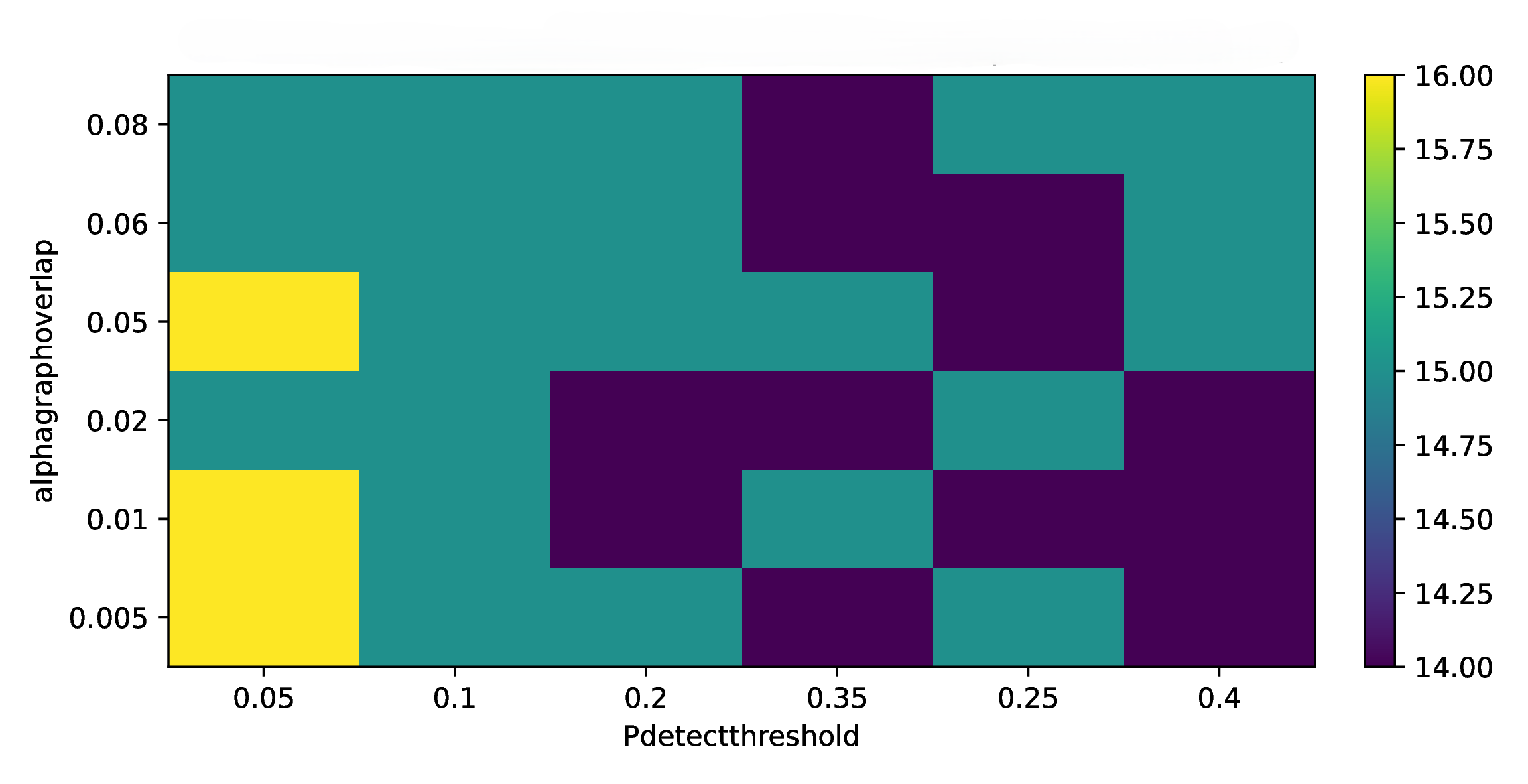}
        \label{fig:sensB_tau1}
    }
    
    \caption{Sensitivity analysis heatmaps. The top row demonstrates the impact of defender velocity ($v_{\mathrm{def}}^{\max}$), while the bottom row shows the impact of the graph edge formation threshold ($\alpha$). Left column: Breach-free reliability. Right column: Median time-to-first-breach ($\kappa_1$).}
    
    \label{fig:sensitivity_combined}
\end{figure*}
    
\section{Conclusion and Future Work}
\label{sec:conclusion}
This article addresses protected-zone (PZ) defense against coordinated incursions by small uncrewed aircraft systems (UAS), with emphasis on aerial swarm attacks under intermittent detections, partial observability, and time-varying attacker interaction structure. We proposed a closed-loop CRIDAA defense architecture that integrates uncertainty-aware swarm graph inference, risk/criticality-driven prioritization, and online defender--attacker assignment with explicit robust execution to control reassignment-induced inefficiency. We provided finite-time guarantees for the windowed assignment--execution loop via a filtration-based first-hitting-time analysis: after first detection, the first capture triggers in finite time under a uniform per-engagement success floor and non-degenerate execution, and the expected neutralization time admits an explicit mixed linear--geometric bound separating pre-detection delay from post-detection depletion in terms of defender parallel capacity, execution fraction, and capture probability. Large-scale Monte Carlo evaluation ($5000$ trials) validated the approach, achieving $85.6\%$ neutralization efficiency under probabilistic sensing and $99.9\%$ under deterministic sensing, while ablation and sensitivity studies quantified how detection gating and coordination/assignment parameters shape reliability and time-to-first-capture.

Several extensions can further strengthen capability and deployability. First, improving sensing and fusion models for cluttered environments (including intermittent detections and uncertainty quantification) can increase robustness in realistic operating conditions. Second, richer decision layers that combine principled risk modeling with data-driven components can enhance prioritization and adaptation as attacker behaviors evolve. Third, extending the framework to heterogeneous defenders/effectors and scalable distributed coordination can better reflect operational C-UAS systems. Finally, systematic validation on high-fidelity testbeds and field data, together with stress-testing across diverse threat models, will be essential to assess generalization and guide practical tuning.

\section*{Acknowledgement}
The authors acknowledge the use of Perplexity and ChatGPT large language models (LLMs) for language refinement and grammatical corrections throughout the manuscript. These LLMs were not used for generating any technical content, data, analyses, or figures.

\bibliographystyle{IEEEtran}
\bibliography{ref}

@article{c1,
  author   = {W. Chen and X. Meng and J. Liu and H. Guo and B. Mao},
  journal  = {IEEE Transactions on Vehicular Technology},
  title    = {Countering Large-Scale drone swarm attack by efficient splitting},
  year     = {2022},
  volume   = {71},
  number   = {9},
  pages    = {9967-9979},
   
}

@article{c3,
  author   = {J. Choi and M. Seo and H.-S. Shin and H. Oh},
  journal  = {IEEE Transactions on Aerospace and Electronic Systems},
  title    = {Adversarial swarm defence using multiple Fixed-Wing unmanned aerial vehicles},
  year     = {2022},
  volume   = {58},
  number   = {6},
  pages    = {5204-5219},
   
}

@article{c4,
  author   = {V. S. Chipade and D. Panagou},
  journal  = {IEEE Transactions on Robotics},
  title    = {Aerial swarm defense using interception and herding strategies},
  year     = {2023},
  volume   = {39},
  number   = {5},
  pages    = {3821-3837},
   
}

@article{c5,
  author   = {S. Velhal and S. Sundaram and N. Sundararajan},
  journal  = {IEEE Transactions on Systems, Man, and Cybernetics: Systems},
  title    = {Priority-Based DREAM approach for highly manoeuvring intruders in a perimeter defense problem},
  year     = {2024},
  pages    = {1-12},
  note     = {Early Access},
   
}

@inproceedings{c6,
  author    = {Y. E. Yao and P. Dash and K. Pattabiraman},
  booktitle = {2023 53rd Annual IEEE/IFIP International Conference on Dependable Systems and Networks (DSN)},
  title     = {SwarmFuzz: Discovering GPS Spoofing Attacks in Drone Swarms},
  year      = {2023},
  address   = {Porto, Portugal}
}

@article{c7,
  author   = {L. Yue and R. Yang and J. Zuo and Y. Zhang and Q. Li and Y. Zhang},
  journal  = {IEEE Access},
  title    = {Unmanned aerial vehicle swarm Cooperative Decision-Making for SEAD Mission: A hierarchical multiagent reinforcement learning approach},
  year     = {2022},
  volume   = {10},
  pages    = {92177-92191},
   
}

@article{c8,
  author   = {E.-Y. Yu and Y. Fu and X. Chen and M. Xie and D.-B. Chen},
  journal  = {Scientific Reports},
  title    = {Identifying critical nodes in temporal networks by network embedding},
  year     = {2020},
  volume   = {10},
  number   = {1},
   
}

@article{c9,
  author   = {Z. Zeng and W. Zhang and H. Jin},
  journal  = {Systems},
  title    = {A 'C3-TOPSIS-Pareto' Based Model for Identifying Critical Nodes in Complex Networks},
  year     = {2025},
  volume   = {13},
  number   = {2},
  pages    = {138},
   
}

@article{c10,
  author   = {D. M. Rom and E. Hwang},
  journal  = {Statistics in Medicine},
  title    = {Testing for individual and population equivalence based on the proportion of similar responses},
  year     = {1996},
  volume   = {15},
  number   = {14},
  pages    = {1489-1505},
   
}

@inproceedings{c12,
  author    = {H. Chitsaz and S. M. LaValle},
  booktitle = {2007 46th IEEE Conference on Decision and Control},
  title     = {Time-optimal paths for a Dubins airplane},
  year      = {2007},
  pages     = {2379-2384},
  address   = {New Orleans, LA, USA}
}

@article{c13,
  author   = {M. M. Tulu and R. Hou and T. Younas},
  journal  = {IEEE Access},
  title    = {Identifying Influential Nodes Based on Community Structure to Speed up the Dissemination of Information in Complex Network},
  year     = {2018},
  volume   = {6},
  pages    = {7390-7401},
   
}

@article{c14,
  author   = {N. Ye and Y. Zhang and C. M. Borror},
  journal  = {IEEE Transactions on Reliability},
  title    = {Robustness of the Markov-Chain model for Cyber-Attack detection},
  year     = {2004},
  volume   = {53},
  number   = {1},
  pages    = {116-123},
   
}

@article{c15,
  author   = {L. Strickland and M. Gombolay},
  journal  = {Journal of Aerospace Information Systems},
  title    = {Coordinating Team Tactics for Swarm-Versus-Swarm Adversarial Games},
  year     = {2024},
  volume   = {21},
  pages    = {94-113},
   
}

@article{c16,
  author   = {T. Ibuki and S. Wilson and A. D. Ames and M. Egerstedt},
  journal  = {IEEE Control Systems Letters},
  title    = {Distributed Collision-Free Motion Coordination on a Sphere: A Conic Control Barrier Function approach},
  year     = {2020},
  volume   = {4},
  number   = {4},
  pages    = {976-981},
   
}

@inproceedings{c17,
  author    = {Y. Zou and K. Chakrabarty},
  booktitle = {IEEE INFOCOM 2003. Twenty-second Annual Joint Conference of the IEEE Computer and Communications Societies},
  title     = {Sensor deployment and target localization based on virtual forces},
  year      = {2003},
  pages     = {1293-1303 vol.2},
  address   = {San Francisco, CA, USA}
}

@article{c18,
  author   = {O. Theodosiadou and D. Chatzakou and T. Tsikrika and S. Vrochidis and I. Kompatsiaris},
  journal  = {Risk Analysis},
  title    = {Real-time threat assessment based on hidden Markov models},
  year     = {2023},
  volume   = {43},
  number   = {10},
  pages    = {2069-2081},
  month    = {Oct},
   
}

@inproceedings{c19,
  author    = {S. Carpin and Y. L. Chow and M. Pavone},
  booktitle = {Proceedings of the 2016 IEEE International Conference on Robotics and Automation (ICRA)},
  title     = {Risk aversion in finite Markov decision processes using total cost criteria and average value at risk},
  year      = {2016},
  pages     = {335-342}
}

@article{c20,
  author   = {B. Hajek},
  journal  = {Advances in Applied Probability},
  title    = {Hitting-Time and Occupation-Time Bounds Implied by Drift Analysis with Applications},
  year     = {1982},
  volume   = {14},
  number   = {3},
  pages    = {502-525},
   
}

@article{c21,
  author   = {J. He and X. Yao},
  journal  = {Artificial Intelligence},
  title    = {Drift Analysis and Average Time Complexity of Evolutionary Algorithms},
  year     = {2001},
  volume   = {127},
  number   = {1},
  pages    = {57-85},
   
}

@article{c22,
  author   = {B. Doerr and D. Johannsen and C. Winzen},
  journal  = {Algorithmica},
  title    = {Multiplicative Drift Analysis},
  year     = {2012},
  volume   = {64},
  pages    = {673-697},
   
}

@article{c24,
  author   = {S. B. Connor and G. Fort},
  journal  = {Stochastic Processes and their Applications},
  title    = {State-dependent Foster-Lyapunov criteria for subgeometric convergence of Markov chains},
  year     = {2009},
  volume   = {119},
  number   = {12},
  pages    = {4176-4193},
   
}

@article{c25,
  author   = {R. Pemantle and J. S. Rosenthal},
  journal  = {Stochastic Processes and their Applications},
  title    = {Moment conditions for a sequence with negative drift to be uniformly bounded in Lr},
  year     = {1999},
  volume   = {82},
  number   = {1},
  pages    = {143-155},
   
}

@article{c26,
  author   = {T. Kötzing and M. S. Krejca},
  journal  = {Theoretical Computer Science},
  title    = {First-hitting times under drift},
  year     = {2019},
  volume   = {796},
  pages    = {51-69},
  
}

\end{document}